\documentclass[11pt]{article}
\usepackage[margin=1in]{geometry}
\usepackage{amsmath,amssymb,amsthm,amsfonts,mathtools}
\usepackage{graphicx,booktabs,array}
\usepackage[numbers,sort&compress]{natbib}
\usepackage[dvipsnames]{xcolor}
\usepackage[colorlinks=true,allcolors={blue!55!black}]{hyperref}
\usepackage[capitalise,nameinlink]{cleveref}
\graphicspath{{figs/}}
\usepackage{float}       
\usepackage[section]{placeins} 

\theoremstyle{plain}
\newtheorem{theorem}{Theorem}[section]
\newtheorem{lemma}[theorem]{Lemma}
\newtheorem{proposition}[theorem]{Proposition}
\newtheorem{corollary}[theorem]{Corollary}
\theoremstyle{definition}
\newtheorem{definition}[theorem]{Definition}
\newtheorem{assumption}[theorem]{Assumption}
\theoremstyle{remark}
\newtheorem{remark}[theorem]{Remark}

\crefname{assumption}{Assumption}{Assumptions}
\crefname{remark}{Remark}{Remarks}
\numberwithin{equation}{section}

\DeclareMathOperator{\tr}{tr}

\DeclareMathOperator{\ran}{ran}
\DeclareMathOperator{\KL}{KL}
\DeclareMathOperator{\ESS}{ESS}

\newcommand{\R}{\mathbb{R}}
\newcommand{\E}{\mathbb{E}}
\newcommand{\Prob}{\mathbb{P}}
\newcommand{\dd}{\,\mathrm{d}}
\newcommand{\Res}{\mathcal R}
\newcommand{\Jc}{\mathcal J}          
\newcommand{\Jp}{\mathcal J^{+}}      
\newcommand{\Jmin}{\mathcal J^{-}}    
\newcommand{\Dt}{\widetilde D}
\newcommand{\Kdef}{\Upsilon}
\newcommand{\lamstar}{\lambda_\star}  
\newcommand{\Inc}{\mathsf B}
\newcommand{\Sg}{\mathsf S_g}
\newcommand{\Ip}{\mathfrak I}
\newcommand{\Dp}{\mathfrak D}

\newcounter{infobox}
\crefname{infobox}{Box}{Boxes}\Crefname{infobox}{Box}{Boxes}
\newenvironment{infobox}[1]{\par\medskip\noindent\refstepcounter{infobox}%
  \begin{center}\small\begin{tabular}{|p{0.94\linewidth}|}\hline
  \rule{0pt}{2.6ex}\textbf{Box~\theinfobox.\ #1}\par\smallskip}%
  {\\[2pt]\hline\end{tabular}\end{center}\medskip}

\title{How suboptimal is my stochastic network controller allowed to be?\\ {\large
Completion certificates with application to power grids hosting AI data
centers
}}
\author{Michael Chertkov\thanks{Program in Applied Mathematics and Department of
Mathematics, University of Arizona, Tucson, AZ 85721, USA
(\href{mailto:chertkov@arizona.edu}{chertkov@arizona.edu}; ORCID
0000-0002-6758-515X). This work was supported by the U.S. National Science
Foundation under award 2229012, \emph{Collaborative Research: AMPS: Rare Events
in Power Systems: Novel Mathematics, Statistics and Algorithms}.}}
\date{\today}

\begin{document}
\maketitle

\begin{abstract}
Power grids are beginning to host AI data centers whose demand can change abruptly and in a correlated way, and operators and planners must decide whether existing controllers can absorb the resulting transients and where new flexibility is worth installing. We cast this as a question in stochastic control: how far from optimal is an implementable controller? For controlled diffusions with affine, state-independent actuation, additive and possibly degenerate noise and quadratic control cost, any Hamilton–Jacobi–Bellman (HJB) subsolution bounds the optimal cost from below and simulation bounds the deployed cost from above, so their gap certifies the permissible suboptimality. We construct subsolutions from path-integral control by completing the control geometry: enlarging the control Gramian until it matches the physical noise makes the problem linearly solvable, and its Feynman–Kac value is an automatic lower bound whose HJB residual is exactly the energy of the fictitious control. A dual noise-deflation construction can be tighter but requires a curvature condition. The geometry yields planning rules: price control authority in proportion to local noise variance, and use shadow values to guide sparse reinforcement. For nonlinear stochastic swing dynamics of the IEEE 118-bus system after a severe load loss, a simple generator controller is certified within 2.8\% of optimal under homogeneous forcing; under heterogeneous forcing the gap is 39\% with uniform prices and 1.2\% once the same total authority is repriced—before any hardware is added.
\end{abstract}

\noindent\textbf{Keywords:} stochastic optimal control, Hamilton--Jacobi--Bellman subsolutions, path-integral control, linearly solvable control, performance certificates, power-system transients, AI data centers


\section*{Context and significance}
\label{sec:context}

\paragraph{A question at the meeting point of three fields}
This paper answers a control-theoretic question with tools from statistical
physics, and tests the answer on the problem of operating and planning power
grids. The three perspectives are not decoration; each supplies something the
others lack.

From \emph{control theory} comes the question and the notion of an answer. A
deployed controller is rarely optimal, and on a large network the optimum is
out of reach; what an engineer needs is a \emph{certificate}---a computable
bound on how much better any controller could possibly do. Such certificates
follow from subsolutions of the Hamilton--Jacobi--Bellman (HJB) equation, but
producing a good subsolution for a nonlinear, high-dimensional, underactuated
stochastic system is the whole difficulty.

From \emph{statistical physics} comes the construction. Path-integral control,
which originated in statistical physics \cite{kappen2005linear}, identifies
certain optimal-control problems with free energies of path ensembles: the value
function is a log-partition function, the control is an exponential tilt of the
uncontrolled paths, and regret is a relative entropy. These problems are exactly
solvable by sampling. The physical grid is not of that form, but it can be
\emph{completed} into one, and the completed problem yields a subsolution
automatically. The dictionary between the two languages is collected in
\Cref{box:physics}.

From \emph{power systems} comes the test and the stakes. The swing equations of a
transmission network are nonlinear, the noise enters at every bus while fast
active-power control sits at a few, and the structure of the network---its
Laplacian, its inertia, the placement of its actuators---makes the abstract
geometry of the construction concrete. It is also where the answer matters.

\paragraph{Why it matters now: grids that must tolerate a new kind of load}
Large computational loads, and above all data centres built for training and
serving artificial-intelligence models, are becoming a material part of
electricity demand \cite{smith2026datacenter}. They are unlike the loads for
which grids were designed. A training run synchronizes very many generators,
so the power drawn by a single site can change abruptly and in a correlated way,
and protective ride-through behaviour can remove a large block of load at once.
Reliability organizations have begun to report operational challenges of
exactly this kind, including unexpected large-load reductions with associated
frequency and voltage excursions \cite{nerc2026sra}. A grid that hosts such
loads must be operated---and, more importantly, \emph{planned}---to tolerate the
fluctuations and transients they create. The deterministic question of whether
the grid resynchronizes after a large load loss is treated in a companion paper
\cite{chertkov2026cyclespace}; the present paper addresses the stochastic
control question that follows it.

For \emph{operation}, the contribution is a way to decide whether the controllers
already in place are good enough. If the certified gap is a few per cent, no
redesign of the controller can recover more, and effort is better spent
elsewhere; if it is large, the certificate says how much is at stake. For
\emph{planning}, the contribution is larger. The same geometry that produces the
certificate identifies where fast flexibility---storage, inverter-based
resources, controllable demand, or the data centres themselves---has the
highest marginal value, and shows that how existing control authority is
\emph{priced} relative to the local fluctuations can matter as much as how much
authority is installed. On the IEEE 118-bus system after a severe load loss, the
same total generator authority gives a certified gap of $39\%$ when priced
uniformly and $1.2\%$ when priced in proportion to the noise. That is a planning
statement a system operator can act on before any new hardware is purchased.

\section{Introduction}
\label{sec:intro}

Consider the controlled diffusion
\begin{equation}
  \dd X_t=\bigl(f(X_t,t)+Bu_t\bigr)\dd t+\sigma\dd W_t,
  \qquad D:=\tfrac12\sigma\sigma^\top\succeq0,
  \label{eq:sde}
\end{equation}
with state $X_t\in\R^n$, control $u_t\in\R^m$, constant matrices $B$ and
$\sigma$, and cost
\begin{equation}
 J^u(x,t)=\E^u_{x,t}\!\left[
   \phi(X_T)+\int_t^T\Bigl(V(X_s,s)+\tfrac12u_s^\top Ru_s\Bigr)\dd s
 \right],\qquad R\succ0 .
 \label{eq:cost}
\end{equation}
The optimal value is $J^\star:=\inf_{u\in\mathcal U_{\rm ad}}J^u$ over the
admissible class $\mathcal U_{\rm ad}$ of \Cref{ass:standing}. Actuation is
affine and state independent, noise is additive, and the control penalty is
quadratic; the drift $f$ and the state costs $V$ and $\phi$ are otherwise
general, nonlinear and nonseparable. The diffusion $D$ may be singular. That is
the case of interest in network applications, where noise acts on some dynamical
components only and a smaller subset of those is actuated.

For large networks the optimal feedback is out of reach, and the controllers
actually deployed---droop rules, scaled linear-quadratic feedback, learned or
structurally restricted policies---are chosen for implementability. The
question this paper answers is therefore not how to compute the optimum, but:
\emph{how far from optimal is the controller we have?} The answer takes the
form of a certificate. Any smooth \emph{subsolution} $\Jc$ of the HJB equation
satisfies $\Jc\le J^\star$, while simulating the deployed policy $u$ gives
$J^u\ge J^\star$; hence
\[
  0\ \le\ J^u-J^\star\ \le\ J^u-\Jc .
\]
The right-hand side is computable and is an upper bound on the regret of the
deployed controller. The difficulty is entirely in producing a good subsolution
without solving the HJB equation. The contribution of the paper is a
construction that does this, together with its consequences for design.

\subsection{Contributions}

\paragraph{Linearly solvable completion (\Cref{sec:relaxations})}
Write $G:=BR^{-1}B^\top$ for the \emph{quadratic-control Gramian}, the matrix
multiplying the gradient-square term in the HJB Hamiltonian,\footnote{It should
not be confused with the finite-horizon controllability Gramian.} and
$Q:=2D=\sigma\sigma^\top$. Path-integral control (PIC)
\cite{kappen2005linear,todorov2006linearly,dvijotham2012unifying} is exactly
linearly solvable when $Q=\lambda G$ for a scalar temperature $\lambda>0$. Our
physical problems are not of this form, and typically not even close: noise may
act in directions no actuator reaches. \emph{Control completion} enlarges the
control Gramian from $G$ to $G_+:=Q/\lambda$, which is admissible whenever
$\lambda G\preceq Q$. The completed problem is exactly linearly solvable with
the physical noise, its value $\Jp_\lambda$ is a Feynman--Kac log-partition
function, and (\Cref{thm:completion})
\[
  \Res[\Jp_\lambda]\;=\;\tfrac12\bigl\|v^+_\lambda\bigr\|^2\;\ge\;0,
\]
where $\Res$ is the physical HJB residual and $v^+_\lambda$ is the optimal
\emph{fictitious} control of the completed problem. The completed value is
therefore a subsolution with no curvature, convexity or sign condition to
verify. The least inflation $\lambda=\lamstar$ gives the tightest bound in the
family, and the certificate gap splits exactly into the energy of the
fictitious control plus the feedback mismatch of the deployed policy
(\Cref{cor:gap-split}).

\paragraph{Noise deflation}
The dual construction keeps $G$ and shrinks $Q$ to $\lambda G$. Its value
$\Jmin_\lambda$ has physical residual $\tr[(D-\Dt)\nabla^2\Jmin_\lambda]$ with
$\Dt=\lambda G/2$ (\Cref{thm:deflation}): it is a subsolution exactly when the
value bends upward in the deleted noise directions. This can be tighter than
completion, but the curvature condition must be verified. The two
constructions are not ordered in general.

\paragraph{Design consequences (\Cref{sec:design})}
The same geometry turns certificates into planning rules. Among allocations of
a fixed total generator authority, pricing authority in proportion to local
noise variance maximizes the matching temperature and minimizes the completion
burden (\Cref{prop:noise-proportional}); a closed-form robust allocation handles
several noise scenarios (\Cref{prop:robust}); a fixed architecture is exactly
linearly solvable for two scenarios only if their covariances are proportional
(\Cref{prop:multiscenario}); and the completed problem supplies shadow prices
for sparse reinforcement (\Cref{prop:shadow}).

\paragraph{Computation and status (\Cref{sec:computing})}
We state exactly what is and is not certified numerically. The Monte Carlo
estimator of $\Jp$ is biased \emph{upward}---the unsafe direction for a lower
bound---by approximately $\lambda/(2\ESS)$ (\Cref{lem:jensen}); a
distribution-free finite-sample lower confidence bound exists
(\Cref{prop:lcb}) but is vacuous at the small matching temperatures met in
practice (\Cref{rem:lcb-vacuous}). \Cref{tab:status} lists the status of every
claim in the paper.

\paragraph{Application (\Cref{sec:swing,sec:numerics})}
For lossless stochastic swing dynamics with noise at every bus and active-power
control at generators only, the completion adds virtual power authority at load
buses, the matching temperature is $\lamstar=\min_g r_g\Gamma_g^2$, and inertia
cancels from the geometry. On the IEEE 39- and 118-bus systems after a severe
load-loss event we certify simple generator controllers and find that the
certificate gap is governed less by how much authority exists than by whether it
is priced in the same geometry as the noise.

\begin{figure}[t]
  \centering
  \includegraphics[width=0.78\textwidth]{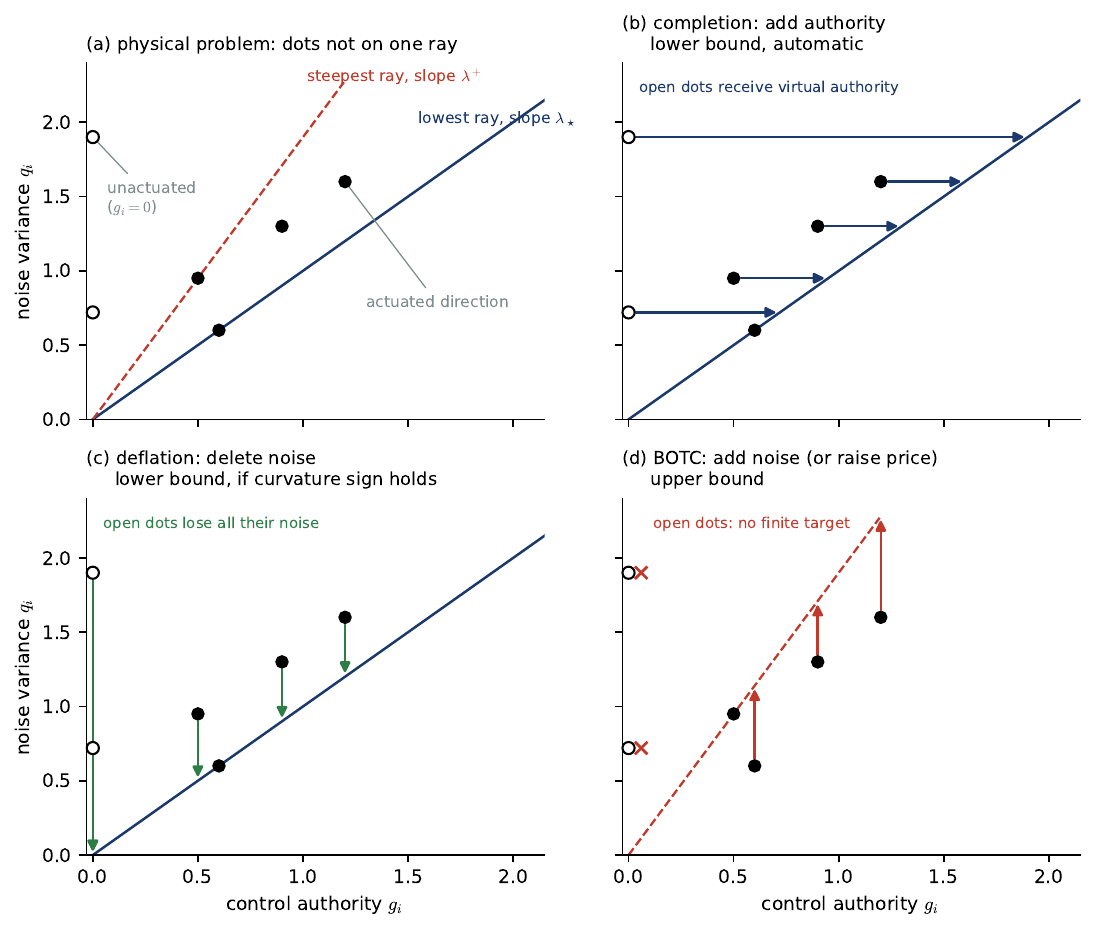}
  \caption{How the four constructions relate: a sketch, not a computation. Each dot
  		is one stochastic direction $i$ of a common eigenbasis of $Q$ and $G$ (one bus,
  		when noise and control channels are independent, as in \Cref{sec:swing}),
  		placed at its control authority $g_i$ (horizontal) and noise variance $q_i$
  		(vertical). Filled dots are \emph{actuated} directions ($g_i>0$: a controller can
  		push there); open dots are \emph{unactuated} directions ($g_i=0$: noise with no
  		actuator, such as the load buses of \Cref{sec:swing}). Dots are states of the
  		problem and arrows are the moves made by each construction. The problem is
  		exactly linearly solvable when all dots lie on one ray $q=\lambda g$ through the
  		origin. (a)~The physical problem is not: the lowest ray through an actuated dot
  		has slope $\lamstar$, the steepest has slope $\lambda^+$. (b)~\emph{Control
  			completion} (this paper) adds authority, moving every dot horizontally onto the
  		lowest ray; unactuated directions receive virtual authority. It gives an
  		automatic lower bound (\Cref{thm:completion}). (c)~\emph{Noise deflation} (this
  		paper) deletes noise, moving every dot vertically down onto the same ray;
  		unactuated directions lose all their noise. It gives a lower bound if the
  		curvature condition of \Cref{thm:deflation} holds. (d)~Bound-Optimized Task
  		Choice \cite{anderson2026botc} makes the problem harder, adding noise to move
  		the dots up onto the steepest ray (raising control prices, not shown, moves them
  		left instead), and gives upper bounds. Unactuated directions have no finite steepest ray, so BOTC has no move for them;
  		the red crosses in~(d) mark these directions, which constitute the support
  		mismatch discussed in \Cref{sec:botc}.}
  \label{fig:geometry}
\end{figure}

\subsection{Relation to existing work}

Verification theorems for HJB equations, and the use of sub- and supersolutions
to bound value functions, are classical
\cite{fleming2006controlled,crandall1992users,bayraktar2013perron}.
Finite-dimensional versions search polynomial or moment relaxations
\cite{parrilo2000structured,lasserre2008nonlinear,jones2020polynomial}, and the
linear structure of linearly solvable control has been combined with
sum-of-squares programming to produce guaranteed value bounds and
suboptimality guarantees \cite{horowitz2014semidefinite,leong2016linearly}. Our
certificates are of the same logical type. The difference is the source of the
bounding function: rather than approximating the original HJB equation, we
change the control--noise geometry until the HJB equation becomes exactly
linearly solvable, and evaluate that surrogate by Feynman--Kac. The residual in
the physical HJB is then explicit---fictitious control energy for completion,
deleted-noise curvature for deflation---and it identifies concrete directions,
prices and locations. The closest recent work is Bound-Optimized Task Choice
(BOTC) \cite{anderson2026botc}, which approaches the same matching obstruction
from the opposite side, by \emph{hardening} the problem to obtain upper bounds; \Cref{fig:geometry} sketches all four constructions in a single picture, in
which each stochastic direction is a point and each construction is a move.

On the application side, we use the lossless swing model of synchronization
theory \cite{dorfler2013synchronization,dorfler2018electrical} on standard test
systems \cite{zimmerman2011matpower}. The deterministic post-event equilibrium
and its cohesive geometry are taken from a companion study of load-loss
transients \cite{chertkov2026cyclespace}; large computational loads, whose
abrupt reductions motivate the event studied here, are an increasingly material
part of system operation \cite{smith2026datacenter,nerc2026sra}.

\subsection{How to read the paper}

\Cref{sec:certificates,sec:relaxations,sec:computing} contain the control
theory and are self-contained. \Cref{sec:design} translates it into design
rules, which are summarized for engineering readers in \Cref{box:engineering}.
\Cref{sec:swing,sec:numerics} are the power-system application; the modelling
scope is stated once, in \Cref{sec:scope}. Readers from statistical physics may
prefer to start from the dictionary in \Cref{box:physics}. The status of every
claim---theorem, exact identity, Monte Carlo estimate of an exact quantity, or
empirical diagnostic---is collected in \Cref{tab:status}.
\Cref{sec:app-deflation} records how far the curvature condition for the
deflated value has been verified in the application. Twelve further figures, each with a description of what it shows and what we conclude from it, are collected in \cref{app:figures}.

\paragraph{Notation}
$\succeq$ is the positive-semidefinite (PSD) order; $A^\dagger$ is the
Moore--Penrose pseudoinverse and $A^{\dagger/2}$ the pseudoinverse square root
of a PSD matrix $A$; $\lambda_{\max}(\cdot)$ is the largest eigenvalue;
$\|v\|_R^2=v^\top Rv$. Calligraphic $\Jc$ denotes a \emph{candidate}
cost-to-go, as distinct from the cost $J^u$ of a policy.

\section{Performance identities and subsolution certificates}
\label{sec:certificates}

\begin{assumption}[Standing assumptions]
\label{ass:standing}
$f$ is continuous and locally Lipschitz in $x$, $V$ and $\phi$ are continuous
and bounded below, and $B$, $\sigma$, $R\succ0$ are constant. The admissible
class $\mathcal U_{\rm ad}$ consists of progressively measurable controls for
which \eqref{eq:sde} has a unique strong solution, the cost \eqref{eq:cost} is
finite, and the integrability required by Dynkin's formula holds for the
candidate functions considered below.\footnote{For $F\in C^{2,1}$ and a stopping
time $\tau$, Dynkin's formula reads
$\E^u_{x,t}[F(X_\tau,\tau)]-F(x,t)=\E^u_{x,t}\int_t^\tau(\partial_s+\mathcal
L^u)F(X_s,s)\dd s$ with $\mathcal L^uF=(f+Bu)\cdot\nabla F+\tr(D\nabla^2F)$,
whenever the terms are integrable \cite{karatzas1991brownian,fleming2006controlled}.}
\end{assumption}

For a candidate cost-to-go $\Jc\in C^{2,1}(\R^n\times[0,T))$ define the physical
HJB residual and the feedback generated by the candidate,
\begin{align}
 \Res[\Jc]
 &:=\partial_t\Jc+f\cdot\nabla\Jc+\tr(D\nabla^2\Jc)+V
   -\tfrac12\nabla\Jc^\top G\,\nabla\Jc,
 \qquad G:=BR^{-1}B^\top,
 \label{eq:residual}\\
 u_{\Jc}(x,t)&:=-R^{-1}B^\top\nabla\Jc(x,t).
 \label{eq:uJ}
\end{align}
The HJB equation for \eqref{eq:sde}--\eqref{eq:cost} is $\Res[J^\star]=0$ with
$J^\star(\cdot,T)=\phi$.

\begin{proposition}[Performance identity]
\label{prop:identity}
Under \cref{ass:standing}, for every $u\in\mathcal U_{\rm ad}$,
\begin{equation}
 J^u(x,t)=\Jc(x,t)
 +\E^u_{x,t}\!\bigl[\phi(X_T)-\Jc(X_T,T)\bigr]
 +\E^u_{x,t}\!\int_t^T\!\Bigl[
   \Res[\Jc](X_s,s)+\tfrac12\|u_s-u_{\Jc}(X_s,s)\|_R^2
 \Bigr]\dd s .
 \label{eq:identity}
\end{equation}
\end{proposition}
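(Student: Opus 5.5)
The plan is to apply Dynkin's formula (the footnote to \cref{ass:standing}) to the candidate $\Jc$ along the controlled trajectory on $[t,T]$, and then complete the square in $u$. With $\mathcal L^u\Jc=(f+Bu)\cdot\nabla\Jc+\tr(D\nabla^2\Jc)$, Dynkin gives
\[
\E^u_{x,t}[\Jc(X_T,T)]-\Jc(x,t)=\E^u_{x,t}\int_t^T\bigl(\partial_s\Jc+f\cdot\nabla\Jc+\tr(D\nabla^2\Jc)+u_s^\top B^\top\nabla\Jc\bigr)(X_s,s)\dd s .
\]
Adding the left side to $J^u(x,t)$ and subtracting the right side, we get
\[
J^u(x,t)=\Jc(x,t)+\E^u_{x,t}[\phi(X_T)-\Jc(X_T,T)]+\E^u_{x,t}\int_t^T\Bigl(\partial_s\Jc+f\cdot\nabla\Jc+\tr(D\nabla^2\Jc)+V+\tfrac12u_s^\top Ru_s+u_s^\top B^\top\nabla\Jc\Bigr)\dd s .
\]
Because the cost is finite for admissible $u$, the two expectations can be combined without any $\infty-\infty$ issue.

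The second step is a pointwise algebraic identity for the $u$-dependent terms. Using $u_{\Jc}=-R^{-1}B^\top\nabla\Jc$,
\[
\tfrac12\|u-u_{\Jc}\|_R^2=\tfrac12u^\top Ru+u^\top B^\top\nabla\Jc+\tfrac12\nabla\Jc^\top BR^{-1}B^\top\nabla\Jc .
\]
It follows that
\[
\tfrac12u^\top Ru+u^\top B^\top\nabla\Jc=\tfrac12\|u-u_{\Jc}\|_R^2-\tfrac12\nabla\Jc^\top G\nabla\Jc .
\]
Substituting this into the integrand turns the remaining terms into exactly $\Res[\Jc]$ as defined in \eqref{eq:residual}. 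That yields \eqref{eq:identity}.

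The only nontrivial point is justifying Dynkin's formula up to the terminal time $T$. This matters because $\Jc$ is only $C^{2,1}$ on $[0,T)$, and the integrand may not be integrable a priori. I would first apply Dynkin at localizing stopping times $\tau_k=\min(T-1/k,\inf\{s:|X_s|\ge k\})$, where the stochastic integral is a true martingale. I would then pass to the limit $k\to\infty$ using the integrability built into \cref{ass:standing}, taken as an assumption on the candidate functions considered. Continuity of $\Jc$ up to $T$ (or the convention that $\Jc(\cdot,T)$ denotes its limit) handles the boundary term. Beyond this, the proof is bookkeeping.
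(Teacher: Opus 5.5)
Your proposal is correct and is the paper's proof. The paper applies Dynkin's formula to $\Jc$ along the controlled path, inserts the result into the cost, and completes the square via $u^\top B^\top\nabla\Jc+\tfrac12u^\top Ru=\tfrac12\|u-u_{\Jc}\|_R^2-\tfrac12\nabla\Jc^\top G\nabla\Jc$, just as you do. Your localization argument for reaching $s=T$ is a sensible elaboration of what the paper absorbs into the integrability clause of \cref{ass:standing}. One wording slip: you should subtract the left side of Dynkin's formula and add the right side, not the reverse, although your displayed identity is already correct.
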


\begin{proof}
Dynkin's formula applied to $\Jc(X_s,s)$ gives
$\E^u[\Jc(X_T,T)]-\Jc(x,t)=\E^u\int_t^T(\partial_s\Jc+(f+Bu)\cdot\nabla\Jc
+\tr(D\nabla^2\Jc))\dd s$. Insert this into \eqref{eq:cost} and complete the
square,
$u^\top B^\top\nabla\Jc+\tfrac12u^\top Ru
=\tfrac12\|u-u_{\Jc}\|_R^2-\tfrac12\nabla\Jc^\top G\nabla\Jc$.
\end{proof}

\Cref{prop:identity} is a form of the classical verification identity
\cite{fleming2006controlled}. Verification arguments set the terminal and HJB
defects to zero and recover the optimal feedback; here the defects are kept and
measured, and the identity is applied to candidates that are \emph{not}
solutions.

\begin{definition}[Subsolution]
\label{def:subsolution}
A candidate $\Jc\in C^{2,1}$ satisfying the integrability of
\cref{ass:standing} is a (classical) \emph{subsolution} if
\begin{equation}
  \Res[\Jc]\ge0\ \text{ on }\R^n\times[0,T),
  \qquad \Jc(\cdot,T)\le\phi .
  \label{eq:subsolution}
\end{equation}
\end{definition}

The sign convention is chosen so that a subsolution can only understate the
cost. Viscosity sub- and supersolutions
\cite{crandall1992users,fleming2006controlled} extend the notion when classical
regularity is absent; the constructions below produce classical candidates and
do not require them.

\begin{theorem}[Two-sided certificate]
\label{thm:certificate}
If $\Jc$ is a subsolution, then for every $u\in\mathcal U_{\rm ad}$
\begin{equation}
  \Jc(x,t)\ \le\ J^\star(x,t)\ \le\ J^u(x,t),
  \qquad\text{hence}\qquad
  0\ \le\ J^u-J^\star\ \le\ J^u-\Jc .
  \label{eq:bracket}
\end{equation}
\end{theorem}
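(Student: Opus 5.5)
The plan is to read the lower bound directly off the performance identity of \Cref{prop:identity}, applied with the subsolution $\Jc$ as candidate. Fix $(x,t)$ and an arbitrary $u\in\mathcal U_{\rm ad}$. Then \eqref{eq:identity} writes $J^u(x,t)$ as $\Jc(x,t)$ plus three contributions:
\begin{itemize}
\item[(i)] the terminal defect $\E^u_{x,t}[\phi(X_T)-\Jc(X_T,T)]$;
\item[(ii)] the integrated HJB residual $\E^u_{x,t}\int_t^T\Res[\Jc](X_s,s)\dd s$;
\item[(iii)] the feedback mismatch $\E^u_{x,t}\int_t^T\tfrac12\|u_s-u_{\Jc}(X_s,s)\|_R^2\dd s$.
\end{itemize}
By \eqref{eq:subsolution}, the integrand of (i) is pointwise nonnegative because $\Jc(\cdot,T)\le\phi$. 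The integrand of (ii) is nonnegative on $\R^n\times[0,T)$ because $\Res[\Jc]\ge0$. The integrand of (iii) is nonnegative because $R\succ0$. Each expectation is therefore nonnegative, and $J^u(x,t)\ge\Jc(x,t)$ for every admissible $u$.

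Taking the infimum over $u\in\mathcal U_{\rm ad}$ then gives $\Jc(x,t)\le J^\star(x,t)$. The upper inequality $J^\star\le J^u$ is just the definition of $J^\star$ as an infimum. Subtracting, $J^u-J^\star\ge0$, and replacing $J^\star$ by the smaller quantity $\Jc$ gives $J^u-J^\star\le J^u-\Jc$. This yields \eqref{eq:bracket}.

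The only step needing care is the legitimacy of the identity for this particular candidate and control. We need Dynkin's formula to hold for $\Jc(X_s,s)$ under $u$. We also need the split of the cost into the three expectations (i)--(iii) to avoid an $\infty-\infty$ indeterminacy. The first point is covered by \cref{ass:standing} and by the integrability built into \cref{def:subsolution}. For the second, finiteness of $J^u$ (part of admissibility) together with nonnegativity of (i), (ii) and (iii) makes each term finite, so their sum is well defined. If one prefers not to rely on the integrability assumption on $[t,T]$ directly, the fallback is localization. Apply Dynkin up to the stopping times $\tau_k=\inf\{s:|X_s|\ge k\}\wedge T$ and drop the nonnegative residual and mismatch terms. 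Then pass $k\to\infty$, using Fatou's lemma for the terms bounded below ($V,\phi$ are bounded below) and the integrability hypothesis for $\Jc(X_{\tau_k},\tau_k)$. I expect this limiting step to be the only genuinely technical point; the inequality itself is immediate from the identity.
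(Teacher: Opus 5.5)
Your proposal is correct and is essentially the paper's proof: apply the performance identity \eqref{eq:identity} to the subsolution $\Jc$, note that the terminal defect, the integrated residual and the feedback-mismatch term are all nonnegative, and take the infimum over $u\in\mathcal U_{\rm ad}$. One small caveat concerns your optional localization fallback. There, Fatou's lemma gives the wrong direction for bounding the limit of the right-hand side from above. The running cost should instead be passed to the limit by monotone convergence, after shifting $V$ by its lower bound. The term $\E[\Jc(X_{\tau_k},\tau_k)]$ needs uniform integrability before $\Jc(\cdot,T)\le\phi$ can be applied.
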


\begin{proof}
All three correction terms in \eqref{eq:identity} are nonnegative, so
$J^u\ge\Jc$ for every admissible $u$; take the infimum over $u$.
\end{proof}

We call $J^u-\Jc$ the \emph{certificate gap} of the pair $(u,\Jc)$. It is an
upper bound on the regret $J^u-J^\star$ of the deployed controller. When the
optimum is itself classical, \eqref{eq:identity} with $\Jc=J^\star$ prices the
regret exactly.

\begin{corollary}[Regret is quadratic in feedback error]
\label{cor:regret}
If $J^\star\in C^{2,1}$ solves the HJB equation and
$u^\star=-R^{-1}B^\top\nabla J^\star$, then
\begin{equation}
 J^u(x,t)-J^\star(x,t)
 =\tfrac12\,\E^u_{x,t}\!\int_t^T\|u_s-u^\star(X_s,s)\|_R^2\dd s .
 \label{eq:regret}
\end{equation}
\end{corollary}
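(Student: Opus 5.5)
The plan is to specialize the performance identity of \Cref{prop:identity} to the candidate $\Jc=J^\star$. This is legitimate because $J^\star\in C^{2,1}$ is a classical candidate. We assume, as the statement implicitly does, that $J^\star$ satisfies the integrability required by \cref{ass:standing}, so that Dynkin's formula applies along the paths of every admissible $u$. With this choice, \eqref{eq:identity} reads
\[
J^u(x,t)=J^\star(x,t)+\E^u_{x,t}\bigl[\phi(X_T)-J^\star(X_T,T)\bigr]+\E^u_{x,t}\!\int_t^T\Bigl[\Res[J^\star](X_s,s)+\tfrac12\|u_s-u_{J^\star}(X_s,s)\|_R^2\Bigr]\dd s .
\]

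Next we remove the two defect terms using the hypotheses. The terminal condition $J^\star(\cdot,T)=\phi$ makes the terminal expectation vanish identically. The HJB equation $\Res[J^\star]=0$ on $\R^n\times[0,T)$ kills the residual integrand. By the definition \eqref{eq:uJ}, $u_{J^\star}=-R^{-1}B^\top\nabla J^\star=u^\star$. Rearranging then gives exactly \eqref{eq:regret}.

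The only delicate point is the terminal term. $J^\star\in C^{2,1}(\R^n\times[0,T))$ is defined on the half-open interval, so $J^\star(X_T,T)$ must be understood through the terminal condition. I would apply Dynkin on $[t,T-\varepsilon]$ and pass $\varepsilon\to0$. Convergence would follow from continuity of $J^\star$ up to $T$ together with the integrability in \cref{ass:standing}, using dominated or monotone convergence; monotone convergence is available for the nonnegative quadratic term.

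I expect no real obstacle beyond making this limiting step precise. Everything else is a direct substitution into \Cref{prop:identity}.
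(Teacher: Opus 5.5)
Your proposal is correct and is essentially the paper's own argument: the paper obtains \eqref{eq:regret} by taking $\Jc=J^\star$ in the performance identity of \Cref{prop:identity}, where the terminal and residual defects vanish and $u_{J^\star}=u^\star$. Your limiting step on $[t,T-\varepsilon]$ is extra care about continuity at $t=T$. The paper leaves that point implicit in the integrability clause of \cref{ass:standing}.
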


The expectation in \eqref{eq:regret} is taken under the \emph{deployed} closed
loop, so the identity is not a fixed Hilbert-space norm shared by all policies.
Its operational content is that a small feedback error has no first-order
effect on performance; \Cref{sec:implementation} uses it to price
implementation defects.

\begin{lemma}[Residual shift]
\label{lem:shift}
If $\Res[\Jc]\ge-\delta$ on $\R^n\times[0,T)$ for a constant $\delta\ge0$ and
$\Jc(\cdot,T)\le\phi$, then $\Jc_\delta(x,t):=\Jc(x,t)-\delta(T-t)$ is a
subsolution.
\end{lemma}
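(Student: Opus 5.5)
The plan is to verify the two conditions of \Cref{def:subsolution} directly for $\Jc_\delta$, using only how the residual \eqref{eq:residual} transforms when a function of time alone is added to the candidate.

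\textbf{Residual.} Set $\Jc_\delta=\Jc+h(t)$ with $h(t)=-\delta(T-t)$, so that $h'(t)=\delta$. Since $h$ does not depend on $x$, we have $\nabla\Jc_\delta=\nabla\Jc$ and $\nabla^2\Jc_\delta=\nabla^2\Jc$. Consequently the drift term $f\cdot\nabla\Jc$, the diffusion term $\tr(D\nabla^2\Jc)$, the state cost $V$ and the quadratic Hamiltonian term $\tfrac12\nabla\Jc^\top G\nabla\Jc$ are all unchanged. Only the time derivative changes, with $\partial_t\Jc_\delta=\partial_t\Jc+\delta$. Therefore
\[
\Res[\Jc_\delta]=\Res[\Jc]+\delta\ \ge\ -\delta+\delta=0
\qquad\text{on }\R^n\times[0,T).
\]

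\textbf{Terminal condition.} At $t=T$ the shift vanishes, so $\Jc_\delta(\cdot,T)=\Jc(\cdot,T)\le\phi$. In addition, $\Jc_\delta\in C^{2,1}$ because $h$ is smooth.

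\textbf{Integrability.} The only point needing a remark is the integrability requirement of \cref{ass:standing}. The added term $h$ is bounded and deterministic on $[t,T]$, so Dynkin's formula holds for $\Jc_\delta$ whenever it holds for $\Jc$. Hence $\Jc_\delta$ inherits the admissibility of $\Jc$, and by \Cref{def:subsolution} it is a subsolution. There is no real obstacle here; the argument is a direct check. As a consequence, \Cref{thm:certificate} applies to $\Jc_\delta$, so a residual defect of size $\delta$ costs at most $\delta(T-t)$ in the lower bound.
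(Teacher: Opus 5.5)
Your proof is correct and matches the paper's: both compute $\Res[\Jc_\delta]=\Res[\Jc]+\delta\ge0$, using that the shift depends only on $t$, and both note that the terminal value is unchanged. Your extra remark on integrability is harmless and fills in a point the paper leaves implicit: Dynkin's formula is linear, and holds trivially for the deterministic function $h$.
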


\begin{proof}
$\Res[\Jc_\delta]=\Res[\Jc]+\delta\ge0$, and $\Jc_\delta(\cdot,T)=\Jc(\cdot,T)$.
\end{proof}

A verification that returns a bounded negative residual is therefore not a
failure: it yields a certificate degraded by exactly $\delta(T-t)$.

\begin{remark}[The set of subsolutions is convex]
\label{rem:convex}
$\Jc\mapsto\Res[\Jc]$ is concave, since its only nonlinearity is
$-\tfrac12\nabla\Jc^\top G\nabla\Jc$; hence the set of classical subsolutions is
convex. It is not a cone: the running cost $V$ is fixed while the gradient term
scales quadratically. Convexity is a property of the \emph{set of functions}; it
should not be confused with convexity of a given $\Jc$ in the state, which
enters only in \Cref{thm:deflation}.
\end{remark}

\paragraph{Stopped domains}
For a domain $\Omega\subset\R^n$ and $\tau=\inf\{s\ge t:X_s\notin\Omega\}\wedge T$,
Dynkin's formula up to $\tau$ gives the exit-time analogue of \eqref{eq:identity}
\cite{fleming1977exit,fleming2006controlled}, with $\phi$ replaced by a
prescribed cost $g_\partial$ on the parabolic boundary. Thus $\Res[\Jc]\ge0$ in
$\Omega$ and $\Jc\le g_\partial$ on the boundary again yield a lower
certificate. We use this only for the separate first-exit question of
\Cref{sec:app-firstexit}; the performance value used throughout is smooth and
unstopped, for reasons explained in \Cref{sec:power-objective}.

\paragraph{Path-space form}
Let $u,v$ be admissible controls from the same initial state with path laws
$\Prob^u,\Prob^v$. If the drift change lies in the diffusion range,
$B(u_s-v_s)\in\ran(\sigma)$ almost surely, the minimum-norm Brownian shift is
$a_s=\sigma^\dagger B(u_s-v_s)$, and under the usual conditions of Girsanov's
theorem \cite{karatzas1991brownian}
\begin{equation}
 \KL(\Prob^u\Vert\Prob^v)=\tfrac12\E^u\!\int_t^T
 \bigl\|\sigma^\dagger B(u_s-v_s)\bigr\|^2\dd s .
 \label{eq:girsanov}
\end{equation}
If the range condition fails, the drift change is not an absolutely continuous
change of measure. When the problem is exactly linearly solvable and
nonredundant, $\|\sigma^\dagger B\,\delta u\|^2=\lambda^{-1}\|\delta u\|_R^2$, and
\eqref{eq:regret}--\eqref{eq:girsanov} identify regret with $\lambda$ times a
path-space relative entropy \cite{boue1998variational}. \Cref{box:physics}
collects this and the other statistical-mechanics readings.

\section{Two linearly solvable relaxations}
\label{sec:relaxations}

\Cref{thm:certificate} says what a certificate is, but not how to obtain one for a
large nonlinear problem. Path-integral control supplies two constructions,
dual to each other. Write
\begin{equation}
 Q:=2D=\sigma\sigma^\top,\qquad G:=BR^{-1}B^\top .
 \label{eq:QG}
\end{equation}
The problem \eqref{eq:sde}--\eqref{eq:cost} is exactly linearly solvable with
scalar temperature $\lambda$ when $Q=\lambda G$
\cite{kappen2005linear,todorov2006linearly,dvijotham2012unifying}: the Hopf--Cole
substitution $\Jc=-\lambda\log\psi$ then removes the gradient-square term of the
HJB equation and leaves a linear backward equation for $\psi$. \emph{Matched}
here means equality of the two state-space quadratic forms (hence of their
ranges), not equality of the numbers of control and noise channels. The
physical pair $(Q,G)$ is generally not matched. Both relaxations below rest on
the same inequality $\lambda G\preceq Q$, read in the two possible directions.

\begin{lemma}[Admissible matching temperatures]
\label{lem:lambdastar}
There is $\lambda>0$ with $\lambda G\preceq Q$ if and only if
$\ran(G)\subseteq\ran(Q)$. In that case the set of such $\lambda$ is
$(0,\lamstar]$, with
\begin{equation}
 \lamstar=\Bigl[\lambda_{\max}\bigl(Q^{\dagger/2}\,G\,Q^{\dagger/2}\bigr)\Bigr]^{-1}.
 \label{eq:lambdastar}
\end{equation}
\end{lemma}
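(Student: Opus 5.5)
The argument splits into the two directions of the equivalence and then a computation of the admissible interval in the range of $Q$.

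For necessity, suppose $\lambda G\preceq Q$ for some $\lambda>0$. Take any $w\in\ker Q$. Then $0\le\lambda\,w^\top Gw\le w^\top Qw=0$. Since $G\succeq0$, this forces $Gw=0$, so $\ker Q\subseteq\ker G$. Taking orthogonal complements of kernels of symmetric matrices gives $\ran(G)\subseteq\ran(Q)$.

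For sufficiency and the characterization, assume $\ran(G)\subseteq\ran(Q)$. Let $P=QQ^\dagger=Q^{1/2}Q^{\dagger/2}$ be the orthogonal projector onto $\ran(Q)$. The range inclusion gives $PGP=G$. I then reduce the inequality $\lambda G\preceq Q$ to a statement on $\ran(Q)$. Any vector splits as $w=w_1+w_0$ with $w_1\in\ran(Q)$ and $w_0\in\ker Q$. Both quadratic forms ignore $w_0$: for $Q$ this is obvious, and for $G$ it holds because $Gw_0=0$ and $G$ is symmetric. So it suffices to test $w_1=Q^{1/2}z$. Note that $Q^{1/2}$ maps $\ran(Q)$ onto itself bijectively, and any $z$ may be replaced by $Pz$. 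The inequality becomes $\lambda\,z^\top Q^{1/2}GQ^{1/2}z\le z^\top Pz$ for all $z\in\ran(Q)$.

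The same step is cleaner after the substitution $w=Q^{\dagger/2}y$ with $y\in\ran(Q)$. This gives $w^\top Qw=\|y\|^2$ and $w^\top Gw=y^\top Mz$-type quantities with
\[
  M:=Q^{\dagger/2}GQ^{\dagger/2}.
\]
Precisely, $w^\top Gw=y^\top My$, and $M$ is PSD and supported on $\ran(Q)$. The condition is therefore $\lambda\,y^\top My\le\|y\|^2$ for all $y\in\ran(Q)$. Since $M$ vanishes on $\ker Q$, this is equivalent to $\lambda\,\lambda_{\max}(M)\le1$.

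If $G\ne0$, then $M\ne0$: indeed $Q^{1/2}MQ^{1/2}=PGP=G$. Hence $\lambda_{\max}(M)>0$, and the admissible set is exactly $(0,\lamstar]$ with $\lamstar=1/\lambda_{\max}(M)$. In the degenerate case $G=0$ every $\lambda>0$ works, and one reads $\lamstar=+\infty$; I will remark on this convention.

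The only delicate step is the bijectivity bookkeeping. I need that $w\mapsto Q^{1/2}w$ restricted to $\ran(Q)$ is invertible with inverse $Q^{\dagger/2}$. I also need that the identities $Q^{1/2}Q^{\dagger/2}=P$ and $PGP=G$ legitimately transfer the quadratic forms. Both follow from a common eigendecomposition of $Q$ and are routine.
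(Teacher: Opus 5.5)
Your proof is correct and follows the paper's route: necessity by testing $\lambda G\preceq Q$ on $\ker Q$, and sufficiency by whitening with $Q^{\dagger/2}$ to reduce the inequality to $\lambda\,Q^{\dagger/2}GQ^{\dagger/2}\preceq P_Q$; you also make explicit the step $\ker Q\subseteq\ker G$ and the degenerate case $G=0$ (where $\lamstar=+\infty$), both of which the paper leaves implicit. One slip to remove: with $w_1=Q^{1/2}z$ the right-hand side is $z^\top Q^2z$, not $z^\top Pz$, so that first display is wrong as written; only the substitution $w=Q^{\dagger/2}y$ that you then use yields $\|y\|^2$ and $y^\top My$.
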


\begin{proof}
If $w\in\ker Q$ has $w^\top Gw>0$, then $\lambda w^\top Gw\le w^\top Qw=0$ fails for
every $\lambda>0$; so $\ran(G)\subseteq\ran(Q)$ is necessary. Conversely, on
$\ran(Q)$ the inequality $\lambda G\preceq Q$ is equivalent to
$\lambda\,Q^{\dagger/2}GQ^{\dagger/2}\preceq P_Q$, the orthogonal projector onto
$\ran(Q)$, i.e.\ to $\lambda\,\lambda_{\max}(Q^{\dagger/2}GQ^{\dagger/2})\le1$.
\end{proof}

The condition $\ran(G)\subseteq\ran(Q)$ says that every direction the controller
can push is a direction in which the plant is stirred. It is much weaker than
nondegeneracy of $\sigma$, and it holds in the swing application of
\Cref{sec:swing}, where $\sigma$ is singular.

\subsection{Control completion: an automatic lower bound}
\label{sec:completion}

Fix $\lambda\in(0,\lamstar]$ and define the completed Gramian and its increment
\begin{equation}
  G_+(\lambda):=\frac{Q}{\lambda},\qquad
  \Delta G_\lambda:=G_+(\lambda)-G\ \succeq\ 0 .
  \label{eq:Gplus}
\end{equation}
Choose any factor $C_\lambda$ with $C_\lambda C_\lambda^\top=\Delta G_\lambda$ and
augment the plant by a \emph{fictitious} control $v$,
\begin{equation}
 \dd X_t=\bigl(f(X_t,t)+Bu_t+C_\lambda v_t\bigr)\dd t+\sigma\dd W_t,
 \qquad\text{running cost }\ V+\tfrac12u^\top Ru+\tfrac12\|v\|^2 ,
 \label{eq:completed-sde}
\end{equation}
with the same terminal cost $\phi$. ``Completion'' refers to the effective
quadratic-control geometry in state space; it does not require the number of
hardware channels to equal the number of Brownian coordinates.

\begin{assumption}[Completed problem]
\label{ass:completed}
For the given $\lambda$, the Feynman--Kac functional
\begin{equation}
 \psi_\lambda(x,t):=\E^{0}_{x,t}\exp\!\Bigl[-\frac1\lambda\Bigl(
 \phi(X_T)+\int_t^T V(X_s,s)\dd s\Bigr)\Bigr],
 \label{eq:FK-completed}
\end{equation}
taken over the \emph{uncontrolled} physical diffusion
$\dd X=f\dd t+\sigma\dd W$, belongs to $C^{2,1}(\R^n\times[0,T))$ and is a
classical solution of
\begin{equation}
 \partial_t\psi+f\cdot\nabla\psi+\tfrac12\tr(Q\nabla^2\psi)-\lambda^{-1}V\psi=0,
 \qquad \psi(\cdot,T)=e^{-\phi/\lambda},
 \label{eq:linear-pde}
\end{equation}
and $\Jp_\lambda:=-\lambda\log\psi_\lambda$ satisfies the integrability of
\cref{ass:standing}.
\end{assumption}

Since $V$ and $\phi$ are bounded below, $0<\psi_\lambda\le\exp(-S_{\min}/\lambda)$
with $S_{\min}=\inf\phi+T\inf V$, so finiteness and positivity are automatic. \Cref{rem:hypoelliptic} gives a sufficient condition for the
regularity.

\begin{theorem}[Completion certificate]
\label{thm:completion}
Let $\lambda\in(0,\lamstar]$ and suppose \cref{ass:completed} holds. Then
\begin{enumerate}
\item the physical HJB residual of $\Jp_\lambda$ is
\begin{equation}
 \Res[\Jp_\lambda]
 =\tfrac12\,\nabla\Jp_\lambda{}^{\!\top}\Delta G_\lambda\,\nabla\Jp_\lambda
 =\tfrac12\,\|v^+_\lambda\|^2\ \ge\ 0,
 \qquad v^+_\lambda:=-C_\lambda^\top\nabla\Jp_\lambda ,
 \label{eq:completion-residual}
\end{equation}
and $\Jp_\lambda(\cdot,T)=\phi$; hence
\item $\Jp_\lambda$ is a subsolution of the physical problem and
$\Jp_\lambda\le J^\star$, with no curvature, convexity or sign condition to
verify;
\item if in addition the feedback $(u^+_\lambda,v^+_\lambda)$, with
$u^+_\lambda:=-R^{-1}B^\top\nabla\Jp_\lambda$, is admissible for
\eqref{eq:completed-sde}, then $\Jp_\lambda$ is the value of the completed
problem and $v^+_\lambda$ is its optimal fictitious control.
\end{enumerate}
\end{theorem}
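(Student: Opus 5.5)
The plan is a direct Hopf--Cole computation followed by an appeal to \Cref{thm:certificate}; item~3 will come from applying \Cref{prop:identity} to the completed problem. Write $\Jp=\Jp_\lambda=-\lambda\log\psi$ with $\psi=\psi_\lambda>0$ from \cref{ass:completed}. Differentiating gives
\[
\partial_t\Jp=-\lambda\,\partial_t\psi/\psi,\qquad \nabla\Jp=-\lambda\nabla\psi/\psi,
\]
\[
\nabla^2\Jp=-\lambda\nabla^2\psi/\psi+\lambda^{-1}\nabla\Jp\,\nabla\Jp^{\top}.
\]
Substituting into \eqref{eq:linear-pde} multiplied by $-\lambda/\psi$ yields the completed HJB equation
\[
\partial_t\Jp+f\cdot\nabla\Jp+\tr(D\nabla^2\Jp)+V-\tfrac{1}{2\lambda}\nabla\Jp^{\top}Q\nabla\Jp=0 .
\]
Here I use $\tfrac12\tr(Q\,\cdot)=\tr(D\,\cdot)$ and $\tfrac{1}{\lambda}\tr(D\,\nabla\Jp\nabla\Jp^{\top})=\tfrac{1}{2\lambda}\nabla\Jp^{\top}Q\nabla\Jp$. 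Since $Q/\lambda=G_+=G+\Delta G_\lambda$, subtracting this identity from the definition \eqref{eq:residual} leaves exactly $\Res[\Jp]=\tfrac12\nabla\Jp^{\top}\Delta G_\lambda\nabla\Jp$. Factoring $\Delta G_\lambda=C_\lambda C_\lambda^{\top}$ writes this as $\tfrac12\|C_\lambda^{\top}\nabla\Jp\|^2=\tfrac12\|v^+_\lambda\|^2$. It is nonnegative because $\Delta G_\lambda\succeq0$, which is precisely the condition $\lambda\le\lamstar$ from \Cref{lem:lambdastar}. The terminal condition $\psi(\cdot,T)=e^{-\phi/\lambda}$ gives $\Jp(\cdot,T)=\phi$. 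This proves item~1.

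Item~2 is then immediate. $\Jp$ is $C^{2,1}$ because $\psi$ is $C^{2,1}$ and positive, and \cref{ass:completed} grants the integrability. The residual is nonnegative and the terminal value equals $\phi$, so $\Jp$ is a subsolution in the sense of \Cref{def:subsolution}. \Cref{thm:certificate} gives $\Jp\le J^\star$. At no point is any property of $\nabla^2\Jp$ used beyond its cancellation, which is why no curvature condition appears.

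For item~3, I would apply \Cref{prop:identity} to the completed problem \eqref{eq:completed-sde}. This is the same computation as before, with control $(u,v)$, actuation matrix $[B\ C_\lambda]$ and cost weight $\mathrm{diag}(R,I)$. That problem has Gramian $G+\Delta G_\lambda=G_+$, so its residual for $\Jp$ is the left side of the completed HJB equation above, which is zero. For any admissible $(u,v)$, the identity then reduces to
\[
J^{u,v}=\Jp+\tfrac12\E\!\int_t^T\Bigl(\|u-u^+_\lambda\|_R^2+\|v-v^+_\lambda\|^2\Bigr)\dd s,
\]
with no terminal defect. Hence $J^{u,v}\ge\Jp$, and equality holds at the feedback $(u^+_\lambda,v^+_\lambda)$ whenever it is admissible. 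So $\Jp$ is the completed value and $v^+_\lambda$ is its optimal fictitious control.

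The only delicate point is the integrability needed for Dynkin's formula along the completed closed loop. The gradient $\nabla\Jp$ may grow, so this is not automatic. I would not try to prove it in general; it is exactly what the admissibility hypothesis in item~3 and the integrability clause in \cref{ass:completed} are there to supply. The algebraic steps themselves are routine.
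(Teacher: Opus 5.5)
Your proposal is correct and follows the paper's own proof. You use the Hopf--Cole substitution to obtain the HJB equation with Gramian $Q/\lambda=G+\Delta G_\lambda$, and subtracting it from the physical residual isolates $\tfrac12\nabla\Jp_\lambda^{\top}\Delta G_\lambda\nabla\Jp_\lambda$. Item~2 then follows from \Cref{thm:certificate}, and item~3 from applying \Cref{prop:identity} to the completed problem; you simply write out the derivative computation and the completed-problem identity more explicitly than the paper does.
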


\begin{proof}
Set $\Jc=-\lambda\log\psi_\lambda$. A direct computation using
\eqref{eq:linear-pde} gives
$\partial_t\Jc+f\cdot\nabla\Jc+\tfrac12\tr(Q\nabla^2\Jc)+V
-\tfrac1{2\lambda}\nabla\Jc^\top Q\nabla\Jc=0$, i.e.\ the HJB equation with
Gramian $Q/\lambda=G+\Delta G_\lambda$. Since $\tfrac12\tr(Q\nabla^2\Jc)
=\tr(D\nabla^2\Jc)$, comparison with \eqref{eq:residual} gives
$\Res[\Jc]=\tfrac12\nabla\Jc^\top\Delta G_\lambda\nabla\Jc
=\tfrac12\|C_\lambda^\top\nabla\Jc\|^2$, which is item~1; the terminal
condition is read off from \eqref{eq:linear-pde}. Item~2 is then
\Cref{thm:certificate}. For item~3, the equation just derived is the HJB
equation of \eqref{eq:completed-sde}, and \Cref{prop:identity} applied to the
completed problem shows that every admissible completed control costs at least
$\Jp_\lambda$, with equality for the admissible feedback
$(u^+_\lambda,v^+_\lambda)$.
\end{proof}

Only items~1--2 are used for certification, and they need no optimality
argument for the completed problem: the certificate is a consequence of the
linear equation \eqref{eq:linear-pde} alone.

The performance identity now takes a particularly transparent form, in terms
of the physical component $u^+_\lambda=-R^{-1}B^\top\nabla\Jp_\lambda$ and the
fictitious component $v^+_\lambda$ of the completed feedback.

\begin{corollary}[The certificate gap splits exactly]
\label{cor:gap-split}
For every deployed $u\in\mathcal U_{\rm ad}$,
\begin{equation}
 J^u-\Jp_\lambda
 =\underbrace{\tfrac12\,\E^u\!\int_t^T\|v^+_\lambda(X_s,s)\|^2\dd s}_{\text{structural}}
 +\underbrace{\tfrac12\,\E^u\!\int_t^T\|u_s-u^+_\lambda(X_s,s)\|_R^2\dd s}_{\text{policy}} .
 \label{eq:gap-split}
\end{equation}
\end{corollary}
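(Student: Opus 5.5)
The plan is to apply the performance identity of \Cref{prop:identity} with the candidate $\Jc=\Jp_\lambda$ and then substitute the residual formula of \Cref{thm:completion}. By \cref{ass:completed}, $\Jp_\lambda\in C^{2,1}$ and satisfies the integrability of \cref{ass:standing}, so \eqref{eq:identity} applies to every deployed $u\in\mathcal U_{\rm ad}$. This gives $J^u-\Jp_\lambda$ as the sum of three terms: a terminal defect, the integrated residual, and the integrated feedback mismatch $\tfrac12\|u_s-u_{\Jc}(X_s,s)\|_R^2$.

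I would then identify each term in turn. First, the terminal defect $\E^u[\phi(X_T)-\Jp_\lambda(X_T,T)]$ vanishes, because item~1 of \Cref{thm:completion} gives $\Jp_\lambda(\cdot,T)=\phi$. This follows from the terminal condition $\psi(\cdot,T)=e^{-\phi/\lambda}$ in \eqref{eq:linear-pde}. Second, item~1 gives $\Res[\Jp_\lambda]=\tfrac12\|v^+_\lambda\|^2$ pointwise, with $v^+_\lambda=-C_\lambda^\top\nabla\Jp_\lambda$. Substituting this into the residual integral yields the structural term. Third, the feedback generated by the candidate in \eqref{eq:uJ} is $u_{\Jp_\lambda}=-R^{-1}B^\top\nabla\Jp_\lambda=u^+_\lambda$. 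The mismatch term is therefore exactly the policy term. Adding the three pieces gives \eqref{eq:gap-split}.

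The only delicate point is the terminal term: $\Jp_\lambda$ is defined on $[0,T)$, so equality at $T$ has to be understood as a limit. I would handle this as the paper handles it elsewhere, by taking $\Jp_\lambda$ to extend continuously to $T$ with boundary value $\phi$, as encoded in \eqref{eq:linear-pde}. Dynkin's formula is then applied on $[t,T-\varepsilon]$ and one lets $\varepsilon\to0$, using the integrability assumed in \cref{ass:completed}.

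Beyond this limit argument, no optimality of the completed problem is needed. Item~3 of \Cref{thm:completion} is not invoked, so the split holds whether or not $(u^+_\lambda,v^+_\lambda)$ is admissible.
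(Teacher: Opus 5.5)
Your proposal is correct and follows the paper's own proof: substitute the residual formula \eqref{eq:completion-residual} and the terminal condition $\Jp_\lambda(\cdot,T)=\phi$ into the performance identity \eqref{eq:identity}, and identify $u_{\Jp_\lambda}$ with $u^+_\lambda$. Your $\varepsilon\to0$ treatment of the terminal term and your remark that item~3 of \Cref{thm:completion} is not needed are sound refinements of that argument, not a different route.
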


\begin{proof}
Insert \eqref{eq:completion-residual} and $\Jp_\lambda(\cdot,T)=\phi$ into
\eqref{eq:identity}.
\end{proof}

The first term is the price of the fictitious authority the completion had to
grant; no physical controller can remove it. The second is attributable to the
deployed policy. The fictitious control is therefore not merely a relaxation
variable: its energy along the deployed closed loop is exactly the part of the
certificate gap that reflects the mismatch between actuation and noise.

\begin{proposition}[Least inflation is tightest]
\label{prop:least-inflation}
For $0<\lambda_1<\lambda_2\le\lamstar$, $G_+(\lambda_1)\succeq G_+(\lambda_2)\succeq
G$, and
\begin{equation}
 \Jp_{\lambda_1}\ \le\ \Jp_{\lambda_2}\ \le\ J^\star .
 \label{eq:least-inflation}
\end{equation}
Hence $\lambda=\lamstar$ gives the least inflated Gramian and the tightest lower
bound within the scalar completion family.
\end{proposition}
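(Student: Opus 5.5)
The Gramian ordering is immediate. $G_+(\lambda)=Q/\lambda$ is decreasing in $\lambda$ in the PSD order because $Q\succeq0$, so $G_+(\lambda_1)-G_+(\lambda_2)=(\lambda_1^{-1}-\lambda_2^{-1})Q\succeq0$. Moreover $G_+(\lambda_2)\succeq G$ is exactly $\lambda_2 G\preceq Q$, which holds by \Cref{lem:lambdastar} since $\lambda_2\le\lamstar$. The upper bound $\Jp_{\lambda_2}\le J^\star$ is item~2 of \Cref{thm:completion}. So the only real content is $\Jp_{\lambda_1}\le\Jp_{\lambda_2}$.

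I will prove this by showing that $\Jp_{\lambda_1}$ is a subsolution of the HJB equation of the problem completed at $\lambda_2$, rather than of the physical one. Define the $\lambda_2$-completed residual
\[
\Res_{\lambda_2}[\Jc]:=\partial_t\Jc+f\cdot\nabla\Jc+\tr(D\nabla^2\Jc)+V-\tfrac12\nabla\Jc^\top G_+(\lambda_2)\nabla\Jc .
\]
The computation in the proof of \Cref{thm:completion} shows that $\Jp_{\lambda_1}$ solves this equation with $G_+(\lambda_1)$ in place of $G_+(\lambda_2)$. Hence
\[
\Res_{\lambda_2}[\Jp_{\lambda_1}]=\tfrac12\nabla\Jp_{\lambda_1}^\top\bigl(G_+(\lambda_1)-G_+(\lambda_2)\bigr)\nabla\Jp_{\lambda_1}\ge0,
\]
with terminal value $\phi$. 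This is the same ``completion'' identity, now applied between two completed problems.

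Next I apply \Cref{prop:identity} and \Cref{thm:certificate} to the $\lambda_2$-completed problem \eqref{eq:completed-sde}, whose Gramian is $G+C_{\lambda_2}C_{\lambda_2}^\top=G_+(\lambda_2)$. These give $\Jp_{\lambda_1}\le J^u$ for every admissible completed control. Evaluating at the optimal completed feedback of item~3 of \Cref{thm:completion}, for which $J=\Jp_{\lambda_2}$, yields $\Jp_{\lambda_1}\le\Jp_{\lambda_2}$.

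The main obstacle is that this route uses the admissibility hypothesis of item~3, which is not assumed in the statement. I would avoid it by a comparison argument applied directly to the Hopf--Cole transforms. Write $\psi_{\lambda_1}=\psi_{\lambda_2}^{\,p}$-type quantities with $p=\lambda_2/\lambda_1>1$, and note that $\exp(-S/\lambda_1)=(\exp(-S/\lambda_2))^{p}$. Jensen's inequality for the convex map $y\mapsto y^{p}$ gives
\[
\E\bigl[e^{-S/\lambda_1}\bigr]\ge\bigl(\E\bigl[e^{-S/\lambda_2}\bigr]\bigr)^{p},
\]
that is, $\psi_{\lambda_1}\ge\psi_{\lambda_2}^{\lambda_2/\lambda_1}$. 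Taking $-\lambda_1\log$ of both sides gives $\Jp_{\lambda_1}\le\Jp_{\lambda_2}$ pathwise from \eqref{eq:FK-completed}. Both functionals are expectations over the same uncontrolled physical diffusion, so this needs only \cref{ass:completed}, and it is the proof I would write, keeping the PDE argument as the interpretation.

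Finally, the tightness claim follows from monotonicity in $\lambda$ together with $\lambda\le\lamstar$.
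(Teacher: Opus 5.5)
Your proof is correct, and the argument you choose to write up (the Jensen step) is genuinely different from the paper's. The paper proves $\Jp_{\lambda_1}\le\Jp_{\lambda_2}$ by the same control-set inclusion it uses for \Cref{prop:monotone}. It factors $G_+(\lambda_1)-G_+(\lambda_2)=CC^\top$, adds a channel $w$ priced $\tfrac12\|w\|^2$, and observes that $w\equiv0$ recovers the $\lambda_2$-completed problem. That compares \emph{optimal values} of the two completed problems. Reading those values as $-\lambda\log\psi_\lambda$ therefore tacitly uses item~3 of \Cref{thm:completion} at $\lambda_2$. This is exactly the admissibility hypothesis you flagged in your first route, which is the paper's argument recast in subsolution/verification form. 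Your Jensen route removes that dependence. Completion leaves the noise untouched, so $\psi_{\lambda_1}$ and $\psi_{\lambda_2}$ are expectations of $e^{-S/\lambda_1}=Y^{p}$ and of $Y$, with $Y=e^{-S/\lambda_2}$, under one and the same uncontrolled path law. Then $\E[Y^p]\ge(\E Y)^p$ gives the ordering with no regularity, admissibility or optimality of the completed problems. There is no integrability issue, since $0<Y\le e^{-S_{\min}/\lambda_2}$. \cref{ass:completed} enters only through the last inequality, $\Jp_{\lambda_2}\le J^\star$. This matches the paper's own remark that certification needs only the linear equation. It also makes \Cref{rem:not-ordered} transparent: under deflation the path law itself changes with $\lambda$, so there is no common measure under which to apply Jensen. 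What the paper's route buys is generality and interpretation. It applies to any PSD enlargement of the Gramian, not only to the scalar family $Q/\lambda$. It is also the ``more authority at no greater price'' reading used throughout \Cref{sec:design}. Two small wording fixes: the inequality holds pointwise in $(x,t)$, not ``pathwise'', since it compares expectations; and $\psi_{\lambda_1}$ is not a power of $\psi_{\lambda_2}$ --- the power identity holds only for the integrands.
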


\begin{proof}
A larger Gramian is a larger control set at no greater price: factor
$G_+(\lambda_1)-G_+(\lambda_2)=CC^\top$ and add a channel $w$ priced
$\tfrac12\|w\|^2$; setting $w\equiv0$ recovers the $\lambda_2$ problem.
\end{proof}

We write $\Jp:=\Jp_{\lamstar}$, with architecture labels when several actuator
classes are compared.

\begin{definition}[Completion diagnostics]
\label{def:diagnostics}
At $\lambda=\lamstar$, the \emph{completion burden} and the \emph{whitened
noise-direction deficit} are
\begin{equation}
 \Ip:=\tr\bigl(\Delta G_{\lamstar}\bigr)=\tr\!\Bigl(\frac{Q}{\lamstar}-G\Bigr),
 \qquad
 \Dp:=\tr\!\bigl(P_Q-\lamstar\,Q^{\dagger/2}GQ^{\dagger/2}\bigr).
 \label{eq:diagnostics}
\end{equation}
\end{definition}

The matrix in $\Dp$ is PSD with eigenvalues in $[0,1]$, so $\Dp$ counts, in
noise-whitened coordinates, stochastic directions that are uncovered or only
partly covered by actuation; it is invariant under $Q\mapsto c^2Q$. $\Ip$ keeps
physical units and also records how much the single temperature $\lamstar$
inflates the directions that \emph{are} actuated. Neither is a bound on
$J^\star-\Jp$; they are geometry diagnostics, and \Cref{sec:numerics} shows that
$\Ip$ is the informative one in practice.

\begin{remark}[Regularity of the completed value]
\label{rem:hypoelliptic}
The regularity part of \cref{ass:completed} is a statement about the linear
equation for $\psi_\lambda$ in the proof of \Cref{thm:completion}. If $f$, $V$,
$\phi$ are smooth and the vector fields $\sigma e_1,\dots,\sigma e_k$, together
with their iterated brackets with $\partial_t+f\cdot\nabla$, span $\R^n$ at every
point (the parabolic H\"ormander condition), that equation is hypoelliptic
\cite{hormander1967hypoelliptic} and $\psi_\lambda$ is smooth for $t<T$; since
$\psi_\lambda>0$, so is $\Jp_\lambda$. This covers degenerate noise that acts
only on velocities of a mechanical system, as in \Cref{sec:swing}. What remains
to be assumed is the global integrability required by Dynkin's formula.
\end{remark}

\begin{remark}[Path-space reading]
\label{rem:entropy-oracle}
Because $G_+=Q/\lambda$, the completed problem can be written as a drift
control in the Wiener directions, $\dd X=(f+\sigma a)\dd t+\sigma\dd W$ with cost
$\tfrac\lambda2\int\|a\|^2$, which by \eqref{eq:girsanov} is $\lambda$ times the
relative entropy of the controlled path law. The inequality $\lambda G\preceq Q$
is exactly what embeds the physical controls in this geometry without
increasing their price. The completed problem is thus the entropy-regularized
oracle obtained by granting control in every stochastic direction.
\end{remark}

\subsection{Noise deflation: a conditional lower bound}
\label{sec:deflation}

The dual move keeps $G$ and replaces $D$ by the matched diffusion
$\Dt(\lambda):=\tfrac\lambda2 G$, with deleted diffusion
\begin{equation}
  \Kdef(\lambda):=D-\Dt(\lambda)\ \succeq\ 0\qquad(0<\lambda\le\lamstar).
  \label{eq:deleted}
\end{equation}
The deflated problem is exactly linearly solvable; under the analogue of
\cref{ass:completed} for the \emph{deflated} diffusion its value is
$\Jmin_\lambda=-\lambda\log\widetilde\psi_\lambda$, with $\widetilde\psi_\lambda$
the Feynman--Kac functional \eqref{eq:FK-completed} computed under
$\dd X=f\dd t+\widetilde\sigma\dd W$, $\widetilde\sigma\widetilde\sigma^\top=2\Dt$.

\begin{theorem}[Deflation residual]
\label{thm:deflation}
The physical HJB residual of the deflated value is
\begin{equation}
  \Res[\Jmin_\lambda]=\tr\bigl(\Kdef(\lambda)\,\nabla^2\Jmin_\lambda\bigr),
  \qquad \Jmin_\lambda(\cdot,T)=\phi .
  \label{eq:deflation-residual}
\end{equation}
Consequently $\Jmin_\lambda$ is a physical subsolution if and only if
$\tr(\Kdef\nabla^2\Jmin_\lambda)\ge0$ on $\R^n\times[0,T)$.
\end{theorem}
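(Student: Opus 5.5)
The plan mirrors the proof of \Cref{thm:completion}, with the roles of the two matrices exchanged. Set $\Jc=\Jmin_\lambda=-\lambda\log\widetilde\psi_\lambda$, where $\widetilde\psi_\lambda$ is the classical solution (assumed, by the analogue of \cref{ass:completed}) of the deflated linear equation
\[
\partial_t\widetilde\psi+f\cdot\nabla\widetilde\psi+\tr(\Dt\,\nabla^2\widetilde\psi)-\lambda^{-1}V\widetilde\psi=0,\qquad \widetilde\psi(\cdot,T)=e^{-\phi/\lambda}.
\]
The terminal condition $\Jmin_\lambda(\cdot,T)=\phi$ follows at once from this boundary value.

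The first step is the Hopf--Cole computation. We have $\nabla\widetilde\psi=-\lambda^{-1}\widetilde\psi\nabla\Jc$ and $\nabla^2\widetilde\psi=\widetilde\psi\bigl(\lambda^{-2}\nabla\Jc\nabla\Jc^\top-\lambda^{-1}\nabla^2\Jc\bigr)$. Substituting these, dividing by $-\widetilde\psi/\lambda$, and using $\Dt=\tfrac\lambda2G$, we get
\[
\partial_t\Jc+f\cdot\nabla\Jc+\tr(\Dt\nabla^2\Jc)+V-\tfrac1\lambda\nabla\Jc^\top\Dt\nabla\Jc=0 .
\]
Since $\tfrac1\lambda\Dt=\tfrac12G$, the last term is exactly $\tfrac12\nabla\Jc^\top G\nabla\Jc$. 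So $\Jmin_\lambda$ solves the HJB equation with the physical Gramian $G$ and the deflated diffusion $\Dt$. This is where the matching $\Dt=\tfrac\lambda2 G$ is used: it removes the gradient-square term without changing $G$.

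The second step is to compare with \eqref{eq:residual}. The physical residual differs from the left-hand side above only in the diffusion term, $\tr(D\nabla^2\Jc)$ in place of $\tr(\Dt\nabla^2\Jc)$. Subtracting the two expressions gives
\[
\Res[\Jmin_\lambda]=\tr\bigl((D-\Dt)\nabla^2\Jmin_\lambda\bigr)=\tr\bigl(\Kdef(\lambda)\nabla^2\Jmin_\lambda\bigr),
\]
which is \eqref{eq:deflation-residual}.

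The ``consequently'' clause is \Cref{def:subsolution} read directly. The terminal inequality holds with equality, so $\Jmin_\lambda$ is a subsolution exactly when this trace is nonnegative everywhere. Positivity $\Kdef\succeq0$ from \Cref{lem:lambdastar} is not needed for the identity itself. It only explains why convexity of $\Jmin_\lambda$ in the deleted directions suffices for the sign condition.

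No step is a real obstacle. The only care needed is in the sign and factor bookkeeping of the Hopf--Cole step, in particular in checking that the matched diffusion yields the coefficient $\tfrac12G$ rather than $G$. Regularity and integrability are assumed, exactly as in \Cref{thm:completion}.
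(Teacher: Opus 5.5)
Your proposal is correct and takes the same route as the paper. The paper's proof subtracts the deflated HJB equation (physical Gramian $G$, diffusion $\Dt$) from \eqref{eq:residual}, and your Hopf--Cole computation, including the check that $\Dt=\tfrac\lambda2 G$ gives the coefficient $\tfrac12 G$, simply spells out why $\Jmin_\lambda$ satisfies that deflated equation; the terminal condition and the ``if and only if'' reading of \Cref{def:subsolution} match the paper as well.
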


\begin{proof}
Subtract the deflated HJB equation from \eqref{eq:residual}: drift, state costs
and the gradient-square term cancel, and only $\tr((D-\Dt)\nabla^2\Jmin_\lambda)$
remains.
\end{proof}

The condition is \emph{directional}: only curvature in $\ran(\Kdef)$, the
deleted noise directions, matters, and global convexity of $\Jmin_\lambda$ is
sufficient but unnecessary. If verification yields only
$\tr(\Kdef\nabla^2\Jmin_\lambda)\ge-\delta$, \Cref{lem:shift} turns
$\Jmin_\lambda-\delta(T-t)$ into a certificate. For affine drift the condition is
automatic.

\begin{proposition}[Affine drift and convex costs]
\label{prop:prekopa}
Suppose $f$ is affine in $x$, $V(\cdot,t)$ and $\phi$ are convex, and the
Feynman--Kac functional is finite. Then $\widetilde\psi_\lambda$ is log-concave
in $x$, $\Jmin_\lambda$ is convex, and $\Jmin_\lambda$ is a subsolution.
\end{proposition}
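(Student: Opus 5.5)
The plan is to prove log-concavity of $\widetilde\psi_\lambda(\cdot,t)$ with the Prékopa–Leindler theorem, read off convexity of $\Jmin_\lambda=-\lambda\log\widetilde\psi_\lambda$, and then apply \Cref{thm:deflation}.

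\emph{Pathwise representation.} With $f(x,s)=A(s)x+b(s)$, the deflated diffusion $\dd X=f\dd t+\widetilde\sigma\dd W$ started at $(x,t)$ has the explicit strong solution
\[
X_s=\Phi(s,t)x+\int_t^s\Phi(s,r)\bigl(b(r)\dd r+\widetilde\sigma\dd W_r\bigr),
\]
where $\Phi$ is the state-transition matrix of $A$. Thus $X_s=\Phi(s,t)x+Y_s$, with $Y$ a Gaussian process that does not depend on $x$. The path functional $S(x,Y):=\phi(X_T)+\int_t^T V(X_s,s)\dd s$ is convex jointly in $(x,Y)$, because it is a convex function composed with an affine map. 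Hence $F(x,Y):=\exp(-S/\lambda)$ is jointly log-concave.

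\emph{Integrating out the noise.} We have $\widetilde\psi_\lambda(x,t)=\E F(x,Y)$, where the law of $Y$ is a Gaussian measure on path space. Such a measure is log-concave, but it may be degenerate: $\widetilde\sigma$ is typically singular. To keep the argument finite-dimensional, discretize time on a mesh $t=s_0<\dots<s_N=T$. Use a Riemann sum for the running cost, and take $(Y_{s_1},\dots,Y_{s_N})$, a finite-dimensional Gaussian vector supported on an affine subspace. On that subspace it has a log-concave density with respect to Lebesgue measure. The integrand $F_N(x,y)\,\gamma_N(y)$ is then jointly log-concave in $(x,y)$, and Prékopa's theorem gives that $x\mapsto\int F_N(x,y)\gamma_N(\dd y)=:\psi_N(x)$ is log-concave. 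Since $V$ and $\phi$ are continuous and bounded below, $0<F_N\le e^{-S_{\min}/\lambda}$. Paths are continuous, so the Riemann sums converge pathwise, and dominated convergence gives $\psi_N\to\widetilde\psi_\lambda$ pointwise. Pointwise limits of log-concave functions that stay positive are log-concave, and $\widetilde\psi_\lambda>0$ by the bound stated after \cref{ass:completed}.

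\emph{Conclusion.} Since $\widetilde\psi_\lambda(\cdot,t)$ is log-concave, $\Jmin_\lambda(\cdot,t)=-\lambda\log\widetilde\psi_\lambda(\cdot,t)$ is convex. Under the analogue of \cref{ass:completed} it is $C^2$, so $\nabla^2\Jmin_\lambda\succeq0$. Because $\Kdef\succeq0$ (this is where $\lambda\le\lamstar$ enters), we get $\tr(\Kdef\nabla^2\Jmin_\lambda)=\tr(\Kdef^{1/2}\nabla^2\Jmin_\lambda\Kdef^{1/2})\ge0$. \Cref{thm:deflation} then gives the subsolution property.

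The main obstacle I expect is the degenerate, infinite-dimensional Gaussian step. The discretization and limit above handle it. An alternative would be to regularize $\widetilde\sigma$ by $\widetilde\sigma_\epsilon$ with full rank and pass $\epsilon\to0$, using continuity of the Feynman–Kac functional in the noise. One more point to check is that the statement assumes finiteness of the Feynman–Kac functional only at the level of $\widetilde\psi_\lambda$, while the dominated-convergence arguments need finiteness uniformly along the approximations. This is guaranteed here by the lower bounds on $V$ and $\phi$, which bound the integrand above by $e^{-S_{\min}/\lambda}$.
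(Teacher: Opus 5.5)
Your proof is correct and follows the paper's own argument. Both use joint convexity of a time-discretized action in the initial state and the Gaussian noise, Pr\'ekopa's theorem for the marginal in $x$, passage to the limit, and then $\Kdef\succeq0$ together with \Cref{thm:deflation}. Your one refinement fills in details the paper's sketch leaves implicit: you keep the exact affine solution and discretize only the cost, so the limit becomes a dominated-convergence argument under a fixed coupling, and you handle the degenerate Gaussian on its affine support.
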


\begin{proof}[Proof sketch]
With affine drift the time-discretized path is affine jointly in the initial
state and the Gaussian increments, so the discretized action is jointly convex
and its exponential, times the Gaussian density, jointly log-concave.
Pr\'ekopa's theorem \cite{prekopa1973logarithmic} makes the marginal in the
initial state log-concave; log-concavity passes to the limit of the
discretization when the functional is finite. Convexity of $\Jmin_\lambda$ and
$\Kdef\succeq0$ give $\tr(\Kdef\nabla^2\Jmin_\lambda)\ge0$.
\end{proof}

For nonlinear drift, convexity of the costs is not enough, and the curvature
condition becomes part of the certificate.

\begin{remark}[The two relaxations are not ordered]
\label{rem:not-ordered}
\Cref{prop:least-inflation} has no deflation analogue: $\lamstar$ keeps the most
diffusion in the PSD order, but changing the diffusion changes the path law and
the nonlinear value need not be monotone. $\Jp$ and $\Jmin$ are therefore not
ordered in general, and \Cref{sec:numerics} exhibits both orders. Deflation
leaves the physical actuators untouched and can be tighter; completion is
unconditionally valid. In the larger of our two test systems completion is
also the tighter of the two, by a factor between three and ten, which is why
it carries the certificates reported there.
\end{remark}

\begin{infobox}{Statistical-mechanics dictionary}
\label{box:physics}
For readers from statistical physics, the objects of \Cref{sec:relaxations}
translate as follows. The matching temperature $\lambda$ is a temperature; the
Feynman--Kac functional $\psi_\lambda$ is a path-space partition function, and
the value $\Jp_\lambda=-\lambda\log\psi_\lambda$ is the corresponding free energy.
An optimal control is an exponential tilt of the uncontrolled path measure, and
by \eqref{eq:girsanov} regret in a matched problem is $\lambda$ times a relative
entropy. The value Hessian is a covariance under the tilted path measure,
$\partial_{ij}^2\Jp=\E^{\rm tilt}[\partial^2_{ij}S]-\lambda^{-1}{\rm
Cov}^{\rm tilt}(\partial_iS,\partial_jS)$ for the action $S$, which is why
curvature conditions such as \Cref{thm:deflation} are statements about
fluctuations. Control completion grants control in every stochastic direction:
it replaces the physical problem by the one in which the controller may act
wherever the noise does, at the entropic price.
\end{infobox}

\section{From certificates to design}
\label{sec:design}

For an existing system the fictitious directions of \Cref{thm:completion} are
virtual: they enlarge the controller class only to produce a lower bound. The
same PSD increment has a second reading in planning. If fast bidirectional
actuators can be installed---storage, inverter-based resources, controllable
demand---part or all of the completion can be built. This section develops that
reading. It does not claim that exact linear solvability is itself an
engineering objective; it asks what the completion geometry says about where
authority should sit and how it should be priced.

Three different notions of price appear in this section, and only one of them is
economic before any monetization: the control price $R$ is a quadratic weight on
control effort; the price of linearisability $\Pi$ is a gap between a restricted
and an unrestricted optimum, in the tradition of the price of anarchy, of
robustness and of fairness
\cite{koutsoupias1999worst,roughgarden2002bad,bertsimas2004price,bertsimas2011price};
and the shadow price of \Cref{prop:shadow} is a marginal value of control
authority, the only one with a direct economic reading.

Throughout, write $J^\star_G$ for the optimal value when only the control
Gramian is varied, with $f,D,V,\phi$ fixed.

\begin{proposition}[More authority never hurts]
\label{prop:monotone}
If $G_2\succeq G_1$, then $J^\star_{G_2}\le J^\star_{G_1}$.
\end{proposition}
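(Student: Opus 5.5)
The plan is to imitate the argument of \Cref{prop:least-inflation}: a larger Gramian is a larger control set at no greater price. The subtlety is that $G$ is a state-space quadratic form, $G_i=B_iR_i^{-1}B_i^\top$. Two architectures with the same Gramian must therefore be shown to give the same value. I would avoid that issue by working with Gramians directly. Fix any factorization $G_1=B_1R_1^{-1}B_1^\top$ and factor the PSD increment $G_2-G_1=CC^\top$. The augmented plant has actuation $[B_1\ C]$ and price $\mathrm{diag}(R_1,I)$, so its Gramian is $G_1+CC^\top=G_2$. Every admissible $u$ for the $G_1$ problem, paired with $w\equiv0$, is admissible for the augmented problem with identical path law and cost. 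Hence the augmented value is at most $J^\star_{G_1}$.

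The remaining step is to show that the augmented value equals $J^\star_{G_2}$, or at least is not below it. The value depends on the architecture only through $G$. For a general architecture $(B,R)$, any control $u$ produces drift $Bu$ at cost $\tfrac12 u^\top Ru$. The cheapest way to realize a drift $b\in\ran(G)$ costs $\tfrac12 b^\top G^\dagger b$, attained at $u=R^{-1}B^\top G^\dagger b$. Conversely, any drift $b\in\ran(G)$ is realizable as $b=Gp$ with $u=R^{-1}B^\top p$.

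So the problem can be recast as choosing a drift process $b_s\in\ran(G)$ at running cost $\tfrac12 b_s^\top G^\dagger b_s$. Given a control in one architecture, I would replace it by the minimum-cost control realizing the same drift in the other architecture. This replacement is progressively measurable, has the same strong solution, and costs no more. Two architectures with equal Gramians therefore have equal values.

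Applying this to the augmented architecture, whose Gramian is $G_2$, gives $J^\star_{G_2}=J^\star_{\rm aug}\le J^\star_{G_1}$.

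The main obstacle is admissibility. Replacing a control by its minimum-norm drift realization must preserve the integrability in \cref{ass:standing}. Because the path law is unchanged and the cost only decreases, finiteness holds. A linear measurable map of a progressively measurable process remains progressively measurable. If one prefers to avoid this reduction entirely, an alternative is to argue with the HJB equation. Under classical regularity, $J^\star_{G_2}$ satisfies $\Res_{G_1}[J^\star_{G_2}]=\tfrac12\nabla J^{\star\top}_{G_2}(G_2-G_1)\nabla J^\star_{G_2}\ge0$, so it is a subsolution of the $G_1$ problem. \Cref{thm:certificate} then gives $J^\star_{G_2}\le J^\star_{G_1}$ directly. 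This route mirrors the completion residual of \Cref{thm:completion}, but it requires $C^{2,1}$ regularity. I would present the control-set argument as the main proof and the HJB computation as the quick route when the value is classical.
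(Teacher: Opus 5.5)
Your proposal is correct, and its core is exactly the paper's argument: factor $G_2-G_1=CC^\top$, add a channel $w$ priced $\tfrac12\|w\|^2$, and set $w\equiv0$ to recover the $G_1$ controller class. The paper's one-line proof takes for granted that the augmented problem is ``the'' $G_2$ problem, since its notation $J^\star_G$ presumes the value depends on $(B,R)$ only through $G$. Your minimum-cost drift realization, $u=R^{-1}B^\top G^\dagger b$ at cost $\tfrac12 b^\top G^\dagger b$, correctly supplies that justification, and your HJB-residual route is a valid alternative when the value is classical.
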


\begin{proof}
Factor $G_2-G_1=CC^\top$ and add a channel $w$ priced $\tfrac12\|w\|^2$; setting
$w\equiv0$ recovers the $G_1$ controller class.
\end{proof}

\subsection{The price of linearisability}

Let $G_0$ be the existing Gramian, $H(y)\succeq0$ the increment delivered by a
design $y$, and $C_{\rm cap}(y)$ its cost. At budget $b$ define the unrestricted
planning value and its linearly solvable restriction,
\begin{equation}
 \mathcal V(b):=\inf_{C_{\rm cap}(y)\le b}J^\star_{G_0+H(y)},
 \qquad
 \mathcal V_{\rm PIC}(b):=\inf_{\substack{C_{\rm cap}(y)\le b,\ \lambda>0\\
 G_0+H(y)=Q/\lambda}}\Jp_\lambda ,
 \label{eq:planning}
\end{equation}
and, when both are finite, the \emph{price of linearisability}
$\Pi(b):=\mathcal V_{\rm PIC}(b)-\mathcal V(b)\ge0$. A small $\Pi$ would justify
designing for exact solvability on physical as well as computational grounds; a
large $\Pi$ means the completion should remain a certificate and a design
oracle. Within the linearly solvable family the tradeoff is one-dimensional:
\begin{equation}
 \lambda\longmapsto\Bigl(C_{\rm cap}\bigl(Q/\lambda-G_0\bigr),\ \Jp_\lambda\Bigr),
 \qquad 0<\lambda\le\lamstar(G_0),
 \label{eq:frontier}
\end{equation}
is an exactly computable investment--performance frontier: decreasing $\lambda$
installs more authority in the PSD order and, by \Cref{prop:least-inflation},
lowers $\Jp_\lambda$. We do not compute $\Pi$ in this paper, because
$\mathcal V(b)$ requires solving the unrestricted planning problem.

\subsection{Shadow prices of authority}

\begin{proposition}[Marginal value of authority]
\label{prop:shadow}
Suppose $J^\star_G$ is classical, the optimal closed loop is well posed, and
$\epsilon\mapsto J^\star_{G+\epsilon H}$ is differentiable at $\epsilon=0$ for a
direction $H\succeq0$, with enough domination to differentiate the HJB equation
and apply Feynman--Kac. Then
\begin{equation}
 \frac{\dd}{\dd\epsilon}J^\star_{G+\epsilon H}(x,t)\Big|_{\epsilon=0}
 =-\tfrac12\,\E^\star_{x,t}\!\int_t^T
 \nabla J^{\star\top}_GH\,\nabla J^\star_G\dd s .
 \label{eq:shadow}
\end{equation}
\end{proposition}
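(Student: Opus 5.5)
Two routes to \eqref{eq:shadow} are natural: an envelope argument built on \Cref{prop:identity}, and differentiation of the HJB equation followed by Feynman--Kac. I would use the first, since it needs only the identity already proved, and then use the second to check the sign and the factor $\tfrac12$.

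\textbf{Envelope argument.} Write $J_\epsilon:=J^\star_{G+\epsilon H}$, and let $\Res_\epsilon$ be the residual \eqref{eq:residual} with $G$ replaced by $G+\epsilon H$.

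First, \Cref{prop:identity} holds for any Gramian: its proof only completes the square with $R$ and $B$. For $G+\epsilon H$, factor $H=CC^\top$ and add a channel $w$ priced $\tfrac12\|w\|^2$, as in \Cref{prop:monotone}.

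Second, evaluate the residual of the unperturbed value under the perturbed geometry. Since $\Res_0[J^\star_G]=0$, the gradient-square term gives directly
\[
\Res_\epsilon[J^\star_G]=-\tfrac\epsilon2\,\nabla J^{\star\top}_G H\nabla J^\star_G .
\]

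Third, apply the identity \eqref{eq:identity} in the $\epsilon$-geometry, with candidate $\Jc=J^\star_G$ and with the $\epsilon$-optimal feedback $(u_\epsilon,w_\epsilon)$. That feedback is generated by $J_\epsilon$, not by $J^\star_G$. Because $J^\star_G(\cdot,T)=\phi$, this yields
\[
J_\epsilon-J^\star_G=\E^{\epsilon}\!\int_t^T\Bigl[-\tfrac\epsilon2\nabla J^{\star\top}_GH\nabla J^\star_G+\tfrac12\|u_\epsilon-u^\star\|_R^2+\tfrac12\|w_\epsilon+\sqrt{\epsilon}\,C^\top\nabla J^\star_G\|^2\Bigr]\dd s .
\]
Here $\E^\epsilon$ is the expectation under the $\epsilon$-optimal closed loop.

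Fourth, show that the last two terms are $o(\epsilon)$. The feedback errors are $u_\epsilon-u^\star=-R^{-1}B^\top\nabla(J_\epsilon-J^\star_G)$ and $w_\epsilon+\sqrt\epsilon C^\top\nabla J^\star_G=-\sqrt\epsilon\,C^\top\nabla(J_\epsilon-J^\star_G)$. If $\nabla J_\epsilon\to\nabla J^\star_G$, both squared errors are $o(1)$, and the second one carries an extra factor $\epsilon$. This is the first-order insensitivity of \Cref{cor:regret}.

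Fifth, for the first term, replace $\E^\epsilon$ by $\E^\star$ at cost $o(1)\cdot\epsilon$. This uses weak continuity of the closed-loop law in $\epsilon$, which follows from the well-posedness and domination hypotheses. Dividing by $\epsilon$ and letting $\epsilon\to0$ gives \eqref{eq:shadow}.

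\textbf{Cross-check by differentiating the HJB equation.} Differentiating $\Res_\epsilon[J_\epsilon]=0$ in $\epsilon$ shows that $K:=\partial_\epsilon J_\epsilon|_0$ solves a linear backward equation:
\[
\partial_tK+(f-G\nabla J^\star_G)\cdot\nabla K+\tr(D\nabla^2K)-\tfrac12\nabla J^{\star\top}_GH\nabla J^\star_G=0,\qquad K(\cdot,T)=0 .
\]
Its drift is the optimal closed-loop drift $f+Bu^\star$. Feynman--Kac under $\E^\star$ then gives \eqref{eq:shadow} directly. This confirms both the sign and the factor $\tfrac12$.

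\textbf{Main obstacle.} The hard step is the $o(\epsilon)$ control in the fourth and fifth steps: it needs $\nabla J_\epsilon\to\nabla J^\star_G$ in $L^2$ along the closed loop, together with stability of the closed-loop law. These are exactly the hypotheses the proposition assumes (``enough domination''). I would invoke them explicitly rather than prove them.
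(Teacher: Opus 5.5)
Your cross-check is the paper's proof, and it is correct: differentiate $\Res_\epsilon[J_\epsilon]=0$, obtain the linear backward equation for $K=\partial_\epsilon J_\epsilon|_{\epsilon=0}$ with drift $f-G\nabla J^\star_G$ and source $-\tfrac12\nabla J^{\star\top}_GH\nabla J^\star_G$, and apply Feynman--Kac. Your primary envelope route is different. Steps 1--3 are right, but step 4 does not follow from what you invoke. The $u$-mismatch term is $\tfrac12\|u_\epsilon-u^\star\|_R^2=\tfrac12\nabla(J_\epsilon-J^\star_G)^\top G\,\nabla(J_\epsilon-J^\star_G)$, and it carries no factor $\epsilon$. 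Convergence $\nabla J_\epsilon\to\nabla J^\star_G$ only makes it $o(1)$, so after dividing by $\epsilon$ nothing vanishes. \Cref{cor:regret} does not rescue this: first-order insensitivity turns a feedback error of size $\eta$ into a cost of order $\eta^2$, so you would need $\eta=o(\sqrt\epsilon)$ in $L^2$ along the $\epsilon$-closed loop (for instance $O(\epsilon)$, from differentiability of $\nabla J_\epsilon$ in $\epsilon$). The ``main obstacle'' you name is therefore understated: $L^2$ convergence of the gradients is not enough for the argument as written.

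There is a repair that needs no rate. Write $A_\epsilon$ and $A^\star_\epsilon$ for $\E\int_t^T\nabla J^{\star\top}_GH\nabla J^\star_G\dd s$ under, respectively, the $\epsilon$-optimal closed loop and the closed loop with drift $f-(G+\epsilon H)\nabla J^\star_G$.
\begin{itemize}
\item Lower bound: your two mismatch terms are nonnegative, so the step-3 identity already gives $J_\epsilon-J^\star_G\ge-\tfrac\epsilon2A_\epsilon$.
\item Upper bound: apply the same identity, with the same candidate $J^\star_G$, to the feedback that $J^\star_G$ itself generates in the $\epsilon$-geometry, $(u^\star,-\sqrt\epsilon\,C^\top\nabla J^\star_G)$, provided it is admissible. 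Both mismatch terms vanish identically, and suboptimality gives $J_\epsilon\le J^\star_G-\tfrac\epsilon2A^\star_\epsilon$.
\end{itemize}
Divide by $\epsilon>0$ and use weak continuity of the two closed-loop laws; for the first law, $\nabla J_\epsilon\to\nabla J^\star_G$ without any rate suffices. This yields \eqref{eq:shadow} as a right derivative. Subtracting the two bounds also shows that your mismatch terms total at most $\tfrac\epsilon2(A_\epsilon-A^\star_\epsilon)=o(\epsilon)$, which is exactly what step 4 wanted.

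Compared with the paper's route, this sandwich neither differentiates the HJB equation nor assumes $\epsilon\mapsto J_\epsilon$ is differentiable, and it gives two-sided bounds at finite $\epsilon$. A one-sided derivative is also the natural object whenever $H$ reaches outside $\ran(G)$, for example a device at a bus with no existing authority, since then $G+\epsilon H$ is not positive semidefinite for $\epsilon<0$. The paper's route is shorter, but it rests entirely on the ``enough domination'' hypothesis to justify differentiating the PDE and applying Feynman--Kac to $K$.
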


\begin{proof}
Differentiating the HJB equation in $\epsilon$ gives a linear backward equation
for the derivative under the optimal closed-loop generator, with source
$-\tfrac12\nabla J^{\star\top}_GH\nabla J^\star_G$; Feynman--Kac gives
\eqref{eq:shadow}.
\end{proof}

For a single candidate device, $H=aa^\top$ and the integrand is
$\tfrac12(a^\top\nabla J^\star_G)^2$. At a completed design the value and its
gradient are available from the linear representation, so the completed
problem supplies shadow prices that rank sparse reinforcements without
requiring the full completion to be built. \Cref{sec:numerics} compares such a
ranking with ranking by local noise.

\subsection{Pricing existing authority}
\label{sec:allocation}

The single temperature $\lamstar$ is set by the worst-matched actuated
direction, so how existing authority is priced matters as much as how much of
it there is. Consider independent channels on a set $\mathcal C$ of controlled
coordinates, with $Q=\operatorname{diag}(q_j)$ over all stochastic coordinates
and $G=\operatorname{diag}(g_i)$ supported on $\mathcal C$, where $g_i=1/r_i$ and
$q_i>0$ for $i\in\mathcal C$. Then \eqref{eq:lambdastar} reduces to
\begin{equation}
 \lamstar=\min_{i\in\mathcal C}\frac{q_i}{g_i},
 \qquad
 \Ip=\frac{\sum_jq_j}{\lamstar}-\sum_{i\in\mathcal C}g_i .
 \label{eq:diag-lambda}
\end{equation}

\begin{proposition}[Noise-proportional pricing]
\label{prop:noise-proportional}
Among allocations with fixed total authority $\sum_{i\in\mathcal C}g_i=A$, both
$\lamstar$ is maximized and $\Ip$ is minimized by
\begin{equation}
 g_i^\star=A\,\frac{q_i}{\sum_{j\in\mathcal C}q_j},
 \qquad
 \lamstar=\frac{\sum_{j\in\mathcal C}q_j}{A},
 \qquad
 \Ip=A\,\frac{\sum_{j\notin\mathcal C}q_j}{\sum_{j\in\mathcal C}q_j}.
 \label{eq:noise-proportional}
\end{equation}
Equivalently, $r_iq_i$ is equalized across controlled channels.
\end{proposition}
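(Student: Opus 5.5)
The plan is to reduce everything to the closed forms in \eqref{eq:diag-lambda} and then solve a max--min allocation problem by an averaging (mediant) argument. For any feasible allocation $g$ with $\sum_{i\in\mathcal C}g_i=A$, write $\lamstar(g)=\min_{i\in\mathcal C}q_i/g_i$. Since $q_i\ge\lamstar(g)\,g_i$ for every $i\in\mathcal C$, I will sum over $\mathcal C$ to get $\sum_{j\in\mathcal C}q_j\ge\lamstar(g)\,A$, that is, $\lamstar(g)\le\sum_{j\in\mathcal C}q_j/A$. This is the key inequality: the minimum of the ratios never exceeds their mediant.

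Next I will check that the bound is attained by $g^\star$. Substituting $g_i^\star=Aq_i/\sum_{j\in\mathcal C}q_j$ makes every ratio $q_i/g_i^\star$ equal to $\sum_{j\in\mathcal C}q_j/A$. Hence $\lamstar(g^\star)$ equals the upper bound, and the second formula in \eqref{eq:noise-proportional} follows.

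For uniqueness, I will note that equality in the summed inequality forces $q_i=\lamstar g_i$ for all $i\in\mathcal C$. That is exactly the statement that $r_iq_i=q_i/g_i$ is constant across controlled channels, which gives the ``equivalently'' clause. Positivity of the $q_i$ on $\mathcal C$ keeps all the ratios well defined. If a zero $g_i$ is allowed, I will read its ratio as $+\infty$, which does not affect the argument.

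For $\Ip$, the second expression in \eqref{eq:diag-lambda} gives $\Ip=\sum_jq_j/\lamstar-A$ under the constraint $\sum_{i\in\mathcal C}g_i=A$. With $\sum_jq_j$ and $A$ fixed, $\Ip$ is therefore a strictly decreasing function of $\lamstar$ alone, so minimizing $\Ip$ is the same as maximizing $\lamstar$, and the same allocation $g^\star$ does both. Plugging in $\lamstar=\sum_{j\in\mathcal C}q_j/A$ gives $\Ip=A\sum_jq_j/\sum_{j\in\mathcal C}q_j-A$. Splitting $\sum_j q_j=\sum_{j\in\mathcal C}q_j+\sum_{j\notin\mathcal C}q_j$ then yields $A\sum_{j\notin\mathcal C}q_j/\sum_{j\in\mathcal C}q_j$.

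I do not expect a real obstacle here. The only delicate point is the reduction \eqref{eq:diag-lambda} itself, which I take as given from \Cref{lem:lambdastar} in the diagonal case, since $Q^{\dagger/2}GQ^{\dagger/2}=\operatorname{diag}(g_i/q_i)$ on $\mathcal C$.
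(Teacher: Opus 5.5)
Your proof is correct and matches the paper's argument: summing $g_i\le q_i/\lambda$ over $\mathcal C$ gives $\lamstar\le\sum_{j\in\mathcal C}q_j/A$, equality holds at $g^\star$, and $\Ip$ is decreasing in $\lamstar$ at fixed $A$, so the same allocation minimizes it. Your equality-case remark, that attaining the bound forces $q_i=\lamstar g_i$ and hence constant $r_iq_i$, spells out the uniqueness and the ``equivalently'' clause, which the paper leaves implicit.
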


\begin{proof}
$\lambda\le q_i/g_i$ for all $i\in\mathcal C$ is $g_i\le q_i/\lambda$; summing,
$\lambda\le\sum_{\mathcal C}q_j/A$, with equality at $g^\star$. Since $\Ip$ in
\eqref{eq:diag-lambda} is decreasing in $\lamstar$ at fixed $A$, it is minimized
at the same allocation, where it takes the stated value.
\end{proof}

The last formula is worth reading literally: after optimal pricing, the
completion burden is the total authority times the ratio of \emph{uncontrolled}
to \emph{controlled} noise variance. What remains is genuine underactuation;
everything removed by the reallocation was mispricing.

For several noise scenarios $q_{s,j}$ with $S_s:=\sum_jq_{s,j}$, set
$c_i:=\min_s q_{s,i}/S_s$ for $i\in\mathcal C$.

\begin{proposition}[Robust fixed pricing]
\label{prop:robust}
Among allocations with $\sum_{i\in\mathcal C}g_i=A$, the worst-case completion
burden $\max_s\Ip_s$ is minimized by $g_i^{\rm rob}=A\,c_i/\sum_{j\in\mathcal
C}c_j$, with minimax value $A/\sum_{j\in\mathcal C}c_j-A$.
\end{proposition}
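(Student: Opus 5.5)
We reduce the minimax problem to a max-min problem for the temperature, exactly as in \Cref{prop:noise-proportional}. For scenario $s$, \eqref{eq:diag-lambda} gives $\lamstar_s=\min_{i\in\mathcal C}q_{s,i}/g_i$ and $\Ip_s=S_s/\lamstar_s-A$. Write $\mu_s:=\lamstar_s/S_s=\min_{i\in\mathcal C}(q_{s,i}/S_s)/g_i$. Then $\Ip_s=1/\mu_s-A$, so at fixed $A$
\[
\max_s\Ip_s=\frac{1}{\min_s\mu_s}-A ,
\]
and minimizing the worst-case burden is equivalent to maximizing $\min_s\mu_s$.

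Next we exchange the two minima:
\[
\min_s\mu_s=\min_{i\in\mathcal C}\frac{\min_s q_{s,i}/S_s}{g_i}=\min_{i\in\mathcal C}\frac{c_i}{g_i}.
\]
This is the diagonal temperature formula applied to the effective variances $c_i$. The problem is therefore identical to \Cref{prop:noise-proportional} with $q_i$ replaced by $c_i$. The same summation argument applies: $\mu\le c_i/g_i$ for all $i$ means $g_i\le c_i/\mu$, and summing over $\mathcal C$ gives $\mu\le\sum_{\mathcal C}c_j/A$. Equality holds exactly at $g_i^{\rm rob}=A\,c_i/\sum_{\mathcal C}c_j$. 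Substituting the optimal value of $\mu$ into the burden yields the minimax value $A/\sum_{\mathcal C}c_j-A$.

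The main point to check is well-definedness. We need $c_i>0$ for every $i\in\mathcal C$, which holds if $q_{s,i}>0$ for every scenario, as assumed for controlled channels. We also need each $S_s$ finite and positive. If some $c_i=0$, the optimal robust temperature would be $0$ and the bound would be vacuous; we would state this as the degenerate case.

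We also verify that using $S_s$ in $\Ip_s$ is consistent with \eqref{eq:diag-lambda}. There $\Ip$ uses the total trace $\sum_j q_j$ over all stochastic coordinates, and $S_s$ is that same sum. Everything else is a routine interchange of $\min_s$ and $\min_i$.
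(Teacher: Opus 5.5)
Your proof is correct and is essentially the paper's argument. Your $\min_s\mu_s$ is the reciprocal of the paper's epigraph variable $t=\max_s S_s\max_{i\in\mathcal C}g_i/q_{s,i}$, and both routes reduce the worst case to the constraints $g_i\le c_i t$ (your $g_i\le c_i/\mu$), which are then summed over $\mathcal C$ to obtain the bound attained by $g^{\rm rob}$.
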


\begin{proof}
By \eqref{eq:diag-lambda}, $\Ip_s=S_s\max_{i\in\mathcal C}(g_i/q_{s,i})-A$. With an
epigraph variable $t$, the constraints $S_sg_i/q_{s,i}\le t$ for all $s,i$ read
$g_i\le tc_i$; the least feasible $t$ under $\sum g_i=A$ is $t=A/\sum c_j$,
attained by $g^{\rm rob}$, and at that allocation $\max_s\Ip_s=t-A$.
\end{proof}

\begin{proposition}[One architecture, several scenarios]
	\label{prop:multiscenario}
	If a fixed $G\neq0$ is exactly matched to two covariances on a common support,
	$Q_s=\lambda_sG$ for $s=1,2$, then $Q_2=(\lambda_2/\lambda_1)Q_1$.
\end{proposition}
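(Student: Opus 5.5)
The plan is to eliminate $G$ between the two matching identities. Exact matching in the sense of \Cref{sec:relaxations} means $Q_s=\lambda_sG$ with a scalar temperature $\lambda_s>0$, so neither $\lambda_s$ can vanish. Hence I can solve the first identity for $G$, obtaining $G=\lambda_1^{-1}Q_1$, and substitute this into the second. That gives $Q_2=\lambda_2G=(\lambda_2/\lambda_1)Q_1$, which is the claim.

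The only points to check are the standing hypotheses that make this division legitimate. First, $\lambda_s>0$ is built into the definition of exact linear solvability, so the division is allowed. Second, the hypothesis $G\neq0$ guarantees that the temperatures are well defined. If $Q_s=\lambda G=\lambda' G$ with $G\neq0$, then $(\lambda-\lambda')G=0$ forces $\lambda=\lambda'$. So each $\lambda_s$ is uniquely determined, for instance as $\lambda_s=\tr(Q_s)/\tr(G)$, using that $\tr G>0$ for a nonzero PSD matrix. This also shows the proportionality constant is intrinsic, namely $\tr Q_2/\tr Q_1$.

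The phrase ``on a common support'' needs only a remark. Both identities force $\ran(Q_1)=\ran(G)=\ran(Q_2)$, so the supports automatically coincide and the statement is consistent. If one instead reads matching only on a common subspace $\mathcal S$, for example after restricting to the stochastic coordinates as in \Cref{sec:allocation}, the same argument runs verbatim for the compressions $P_{\mathcal S}Q_sP_{\mathcal S}=\lambda_sP_{\mathcal S}GP_{\mathcal S}$. There I would assume $P_{\mathcal S}GP_{\mathcal S}\neq0$, which is the meaningful form of $G\neq0$.

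There is no real obstacle; the step most worth stating carefully is the nondegeneracy ($G\neq0$, $\lambda_s>0$) that makes the elimination legitimate. I would close by reading the result in design terms. A single architecture can be exactly linearly solvable for two scenarios only when their noise covariances differ by an overall scale. Any change in the \emph{shape} of the noise therefore forces completion or deflation in at least one scenario, which is why \Cref{prop:robust} is the relevant tool in that case.
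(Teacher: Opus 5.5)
Your proposal is correct and follows the paper's proof essentially verbatim: you use $\lambda_1>0$ to solve for $G=\lambda_1^{-1}Q_1$ and substitute into $Q_2=\lambda_2G$. Your trace normalization via $\tr G>0$ and your observation that the common-support hypothesis is automatic because $\ran Q_s=\ran G$ are the same supplementary remarks the paper makes.
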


\begin{proof}
	By definition a matching temperature is positive, so $\lambda_1>0$ and the first
	matching relation can be solved for the Gramian: $G=\lambda_1^{-1}Q_1$.
	Substituting this into the second relation, $Q_2=\lambda_2G$, gives
	$Q_2=(\lambda_2/\lambda_1)\,Q_1$.
	
	The same conclusion can be stated in terms of the \emph{shape} of the noise,
	that is, the covariance normalized by its trace. Because $G\succeq0$ and
	$G\neq0$, $\tr G>0$; taking traces in $Q_s=\lambda_sG$ gives
	$\tr Q_s=\lambda_s\tr G>0$, and hence
	\[
	\frac{Q_s}{\tr Q_s}=\frac{G}{\tr G},\qquad s=1,2 .
	\]
	Exact matching therefore fixes the noise shape to be the shape of $G$, leaving
	only the overall amplitude, carried by $\lambda_s$, free. Two scenarios can both
	be exactly matched to the same $G$ only if their shapes coincide, which is the
	statement of the proposition. (The common-support hypothesis is in fact
	automatic: $Q_s=\lambda_sG$ with $\lambda_s>0$ forces $\ran Q_s=\ran G$.)
\end{proof}

In the diagonal setting of \Cref{sec:allocation}, with $q_{s,i}=\lambda_sg_i$,
the proposition says that the ratio $q_{2,i}/q_{1,i}=\lambda_2/\lambda_1$ must be
the same at every controlled coordinate: scenarios whose noise redistributes
across buses, rather than merely rescaling, cannot share one exactly matched
pricing. At least one of them then requires completion, which is why
\Cref{prop:robust} optimizes a compromise instead.

\begin{remark}[Repricing changes the problem]
\label{rem:repricing}
\Cref{prop:noise-proportional,prop:robust} change $R$, and $R$ enters the
physical cost \eqref{eq:cost}. Comparing certificates before and after
repricing therefore compares two different control problems. What the
propositions license is a design statement: \emph{if} the operator is free to
set the prices of existing authority, setting them in proportion to the noise
yields a problem whose simple controllers can be certified tightly. They do
\emph{not} by themselves establish that a loose certificate under the original
prices was loose because of mispricing rather than because the original
controller is far from optimal; under the original prices $J^\star$ is
unknown. \Cref{sec:numerics} reports both readings where the data allow.
\end{remark}

\subsection{Implementation defects are priced in the same currency}
\label{sec:implementation}

\Cref{cor:regret} prices deviations of an implemented feedback from a reference
without solving a new control problem, and the same logic applies to the gap
\eqref{eq:gap-split} with $u^+_\lambda$ as reference. For example, for linear
dynamics with optimal linear feedback $u^\star=-K^\star(t)x$ implemented by
zero-order hold with interval $\Delta$, a first-order expansion of
\eqref{eq:regret} gives
\begin{equation}
 J^{\rm hold}-J^\star
 =\frac\Delta4\int_0^T\tr\bigl(K^{\star\top}RK^\star\sigma\sigma^\top\bigr)\dd t
 +O(\Delta^2),
 \label{eq:sample-hold}
\end{equation}
where $\tfrac14$ combines the mean-square Brownian increment over a hold
interval with the factor $\tfrac12$ of \eqref{eq:regret}. Lag, quantisation,
saturation, dropout and event-triggered sampling
\cite{tabuada2007event,heemels2012introduction} are priced analogously, and
\Cref{sec:numerics} measures them directly.

\begin{infobox}{Engineering summary}
\label{box:engineering}
The design content of \Cref{sec:relaxations,sec:design}, with the numerical
evidence of \Cref{sec:numerics}:\\
$\bullet$ E1: \emph{Certify before tuning.} The certificate gap bounds how much any
  controller, however designed, could improve on the deployed one. On the IEEE
  118-bus study it is $2.8\%$ for homogeneous forcing: further tuning cannot
  gain more (\Cref{sec:numerics}).\\
$\bullet$ E2: \emph{Price authority in proportion to local noise variance}
  (\Cref{prop:noise-proportional}). At fixed total generator authority this
  took the certified gap under heterogeneous forcing from $39\%$ to $1.2\%$;
  read it as a design choice, not a diagnosis (\Cref{rem:repricing}).\\
  $\bullet$  E3: \emph{One fixed pricing across scenarios} (\Cref{prop:robust}) gave
  $5.3\%$ under heterogeneous and $4.8\%$ under homogeneous forcing.\\
$\bullet$ E4: \emph{Check the pricing before buying hardware.} The single
  temperature is set by the worst-matched channel; adding actuators with weak
  local noise can make the scalar completion looser, not tighter.\\
$\bullet$ E5: \emph{Rank sparse devices by shadow value} (\Cref{prop:shadow})
  rather than by local noise when the forcing is heterogeneous. It is a
  heuristic, not a guarantee: in the 118-bus study the gain is statistically
  resolved but below $0.1\%$ of the cost, and under homogeneous forcing an
  arbitrary ordering did slightly better.\\
$\bullet$ E6: \emph{Implementation defects cost second order}
  (\Cref{sec:implementation}): a $0.2$\,s hold or lag cost about $2\%$ of the
  objective in the 39-bus study.\\
$\bullet$ E7:  \emph{Keep performance and safety separate.} The certificates are for
  a smooth performance objective; first-crossing probabilities are reported
  empirically and are not certified.
\end{infobox}

\section{Computing a certificate, and what is certified}
\label{sec:computing}

A certificate has three ingredients with different status. The \emph{upper}
value $J^u$ is an expectation under the deployed closed loop and is estimated by
plain Monte Carlo. The \emph{lower} value $\Jp$ is exact mathematically
(\Cref{thm:completion}) but is itself an expectation \eqref{eq:FK-completed},
estimated numerically. For the deflated value $\Jmin$ there is in addition a
\emph{residual verification} problem. We treat the three in turn.

\subsection{Estimating the completed value}

Passive sampling of \eqref{eq:FK-completed} suffers weight degeneracy when
$\lambda$ is small, and the natural remedy is importance sampling with a
controlled proposal, whose likelihood ratio is supplied by Girsanov's theorem
\cite{kappen2005linear,boue1998variational,dupuis2004importance}. For degenerate
diffusions the proposal drift must stay in $\ran(\sigma)$. In all cases the
estimator has the form
\begin{equation}
 \widehat{\Jp}_N=-\lambda\log\overline Y_N,\qquad
 \overline Y_N=\frac1N\sum_{k=1}^NY_k,\qquad \E\,Y_k=\psi_\lambda ,
 \label{eq:estimator}
\end{equation}
with $Y_k\ge0$ the weighted path functionals. Its bias has a sign.

\begin{lemma}[The estimator overstates the lower bound]
\label{lem:jensen}
If $Y_1,\dots,Y_N$ are independent copies of $Y\ge0$ with $\E Y=\psi_\lambda$ and
$\E|\log\overline Y_N|<\infty$, then $\E\,\widehat{\Jp}_N\ge\Jp_\lambda$. If
moreover $\E Y^3<\infty$, then as $N\to\infty$
\begin{equation}
 \E\,\widehat{\Jp}_N-\Jp_\lambda
 =\frac{\lambda}{2N}\,\frac{\operatorname{Var}Y}{\psi_\lambda^2}+o(N^{-1})
 \;\approx\;\frac\lambda2\Bigl(\frac1{\ESS}-\frac1N\Bigr),
 \qquad
 \ESS:=\frac{(\sum_kY_k)^2}{\sum_kY_k^2}.
 \label{eq:jensen-bias}
\end{equation}
\end{lemma}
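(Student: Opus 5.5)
The plan is to treat the sign statement and the asymptotic expansion separately, since they need different tools.

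\emph{Sign of the bias.} Note that $\E\,\overline Y_N=\psi_\lambda$. The map $y\mapsto-\lambda\log y$ is convex on $(0,\infty)$ and equals $+\infty$ at $0$. Jensen's inequality then gives
\[
\E\,\widehat{\Jp}_N=\E[-\lambda\log\overline Y_N]\ \ge\ -\lambda\log\E\,\overline Y_N=-\lambda\log\psi_\lambda=\Jp_\lambda .
\]
The hypothesis $\E|\log\overline Y_N|<\infty$ is exactly what makes the left-hand side well defined and finite; in particular it rules out $\overline Y_N=0$ with positive probability. Jensen applies to a convex function of an integrable random variable whose composed value is integrable, so this part is immediate.

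\emph{Second-order expansion.} Write $\overline Y_N=\psi_\lambda(1+\varepsilon_N)$ with $\varepsilon_N:=(\overline Y_N-\psi_\lambda)/\psi_\lambda$. Then $\E\varepsilon_N=0$ and $\E\varepsilon_N^2=\operatorname{Var}Y/(N\psi_\lambda^2)$, so
\[
\widehat{\Jp}_N-\Jp_\lambda=-\lambda\log(1+\varepsilon_N).
\]
Taylor's theorem gives $-\log(1+\varepsilon)=-\varepsilon+\tfrac12\varepsilon^2+r(\varepsilon)$ with $|r(\varepsilon)|\le C|\varepsilon|^3$ for, say, $|\varepsilon|\le\tfrac12$. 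Taking expectations, the linear term vanishes and the quadratic term gives the stated $\tfrac{\lambda}{2N}\operatorname{Var}Y/\psi_\lambda^2$. The third-moment assumption controls the cubic remainder on the good event $\{|\varepsilon_N|\le\tfrac12\}$: for i.i.d.\ centered sums with finite third moment, $\E|\varepsilon_N|^3=O(N^{-3/2})=o(N^{-1})$.

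\emph{Main obstacle: the bad event.} On $\{|\varepsilon_N|>\tfrac12\}$ the Taylor expansion is useless, because $\log\overline Y_N$ can blow up near $\overline Y_N=0$. I would split this event into two pieces.

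\begin{enumerate}
\item On the upper tail $\overline Y_N>\tfrac32\psi_\lambda$, the contribution $\lambda\log(1+\varepsilon_N)$ grows only logarithmically. Since $\log(1+\varepsilon)\le\varepsilon$, it is bounded by $\E[|\varepsilon_N|\,\mathbf 1\{|\varepsilon_N|>\tfrac12\}]\le 4\,\E|\varepsilon_N|^3=O(N^{-3/2})$, using Markov/Chebyshev-type estimates.
\item On the lower tail $\overline Y_N<\tfrac12\psi_\lambda$, I would use that $Y\ge0$ has a lower tail that is automatically light. A Chernoff bound for nonnegative summands, $\Prob(\overline Y_N<\tfrac12\psi_\lambda)\le e^{-cN}$ with $c>0$ depending only on the law of $Y$ via $\E Y^2<\infty$, makes this probability exponentially small. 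Combined with the integrability of $\log\overline Y_N$, and Hölder if needed, this makes $\E[|\log\overline Y_N|\,\mathbf 1\{\cdot\}]=o(N^{-1})$.
\end{enumerate}

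This is the one place where the argument is genuinely delicate. If uniform integrability of $\log\overline Y_N$ across $N$ is not available from the stated hypothesis, I would state it as the interpretation of ``$\E|\log\overline Y_N|<\infty$'', namely boundedness of the first such moment, and propagate it using the monotonicity of $\overline Y_N$'s lower tail in $N$.

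\emph{The ESS approximation.} This last step is heuristic, with no rigorous claim intended beyond consistency. Replace $\operatorname{Var}Y/\psi_\lambda^2$ by its plug-in estimate
\[
\frac{N\sum_k Y_k^2}{(\sum_k Y_k)^2}-1=\frac{N}{\ESS}-1 .
\]
Dividing by $N$ gives $\frac1{\ESS}-\frac1N$, which is the displayed approximation. The law of large numbers makes this a consistent substitution, since both numerator and denominator converge almost surely under $\E Y^2<\infty$.
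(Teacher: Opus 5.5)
Your route is the paper's: Jensen for the sign, a second-order delta method for $\log\overline Y_N$ about $\log\psi_\lambda$, and the plug-in $N/\ESS$ for $\E Y^2/\psi_\lambda^2$. The paper's proof stops at naming the delta method, so the remainder analysis is your addition. Most of it is sound:
\begin{itemize}
\item $\E|\varepsilon_N|^3=O(N^{-3/2})$ follows from Marcinkiewicz--Zygmund.
\item On $\{|\varepsilon_N|>\tfrac12\}$, both the upper tail and the polynomial terms $-\varepsilon_N+\tfrac12\varepsilon_N^2$ are $O(\E|\varepsilon_N|^3)$.
\item For $B_N:=\{\overline Y_N<\psi_\lambda/2\}$, the bound $\Prob(B_N)\le e^{-cN}$ is a correct one-sided bound for nonnegative summands.
\end{itemize}

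\textbf{The gap.} The step that does not close is the last one on $B_N$. Knowing that $\Prob(B_N)$ is exponentially small, and that $\E|\log\overline Y_N|$ is finite (or even bounded in $N$), gives no rate for $\E[|\log\overline Y_N|\mathbf 1_{B_N}]$. A first moment says nothing about how much of it sits on a small event. H\"older would need a moment of order $p>1$, which is not assumed. The monotonicity that is actually available is in convex order: $\overline Y_N$ is a reverse martingale, so $\E(-\log\overline Y_N)_+$ is nonincreasing in $N$. That yields boundedness, indeed uniform integrability, but still no rate. So your proposed reinterpretation of the hypothesis does not rescue the step either.

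\textbf{The fix.} The missing idea is that $\overline Y_N$ is near zero only if \emph{every} sample is small. Let $M_N:=\max_{k\le N}Y_k$ and $F_-(y):=\Prob(Y<y)$.
\begin{itemize}
\item Use $\overline Y_N\ge M_N/N$ and $\Prob(M_N<y)=F_-(y)^N$.
\item Since $\{M_{N_0}<y\}\subseteq\{\overline Y_{N_0}<y\}$, the hypothesis at a single index $N_0$ gives $\int_0^\infty F_-(e^{-s})^{N_0}\dd s\le\E(-\log\overline Y_{N_0})_+<\infty$.
\item Take $y_0=\min(1,\psi_\lambda/2)$ and $s_0=-\log y_0\ge0$. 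Then $q:=F_-(y_0)\le\Prob(Y\le\psi_\lambda/2)<1$, because $\E Y=\psi_\lambda>0$ forces $\Prob(Y>\psi_\lambda/2)>0$.
\end{itemize}
Write $(-\log\overline Y_N)_+\le\log N+(-\log M_N)_+$ and split $(-\log M_N)_+$ at $s_0$. For $s\ge s_0$ one has $F_-(e^{-s})^N\le q^{\,N-N_0}F_-(e^{-s})^{N_0}$, so for $N\ge N_0$
\[
\E\bigl[(-\log\overline Y_N)_+\mathbf 1_{B_N}\bigr]\le(\log N+s_0)\,\Prob(B_N)+q^{\,N-N_0}\,\E(-\log\overline Y_{N_0})_+ .
\]
The right-hand side is exponentially small. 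On $B_N$ the positive part $(\log\overline Y_N)_+$ is bounded by $(\log(\psi_\lambda/2))_+$. With this the $o(N^{-1})$ expansion holds under exactly the stated hypotheses.
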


\begin{proof}
The first claim is Jensen's inequality for the concave logarithm. The expansion
is the second-order delta method for $\log\overline Y_N$ about $\log\psi_\lambda$;
the last form replaces $\E Y^2/\psi_\lambda^2$ by its natural estimate
$N/\ESS$.
\end{proof}

The bias is in the unsafe direction for a lower bound, so it must be reported,
and \eqref{eq:jensen-bias} shows it is small whenever $\lambda/\ESS$ is small
compared with the certificate gap. The estimate is, however, blind to
probability mass carried by weights that were never sampled, which is the real
danger when $\ESS$ is small. A rigorous finite-sample statement requires control
of the weights.

\begin{proposition}[A finite-sample lower confidence bound]
\label{prop:lcb}
Suppose $0\le Y_k\le b$ almost surely. Then for every $\delta\in(0,1)$, with
probability at least $1-\delta$,
\begin{equation}
 J^\star\ \ge\ \Jp_\lambda\ \ge\
 \widehat{\Jp}_{\rm LCB}:=-\lambda\log\Bigl(\overline Y_N
 +b\sqrt{\tfrac{\log(1/\delta)}{2N}}\Bigr).
 \label{eq:lcb}
\end{equation}
Boundedness holds for passive sampling with $b=e^{-S_{\min}/\lambda}$, where
$S_{\min}=\inf\phi+T\inf V$, and for a defensive mixture proposal
$\alpha\Prob^0+(1-\alpha)\Prob^{\rm g}$, whose likelihood ratio is at most
$1/\alpha$ \cite{hesterberg1995weighted}, with $b=e^{-S_{\min}/\lambda}/\alpha$.
\end{proposition}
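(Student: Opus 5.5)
The first inequality $J^\star\ge\Jp_\lambda$ is already \Cref{thm:completion} (item~2), so the proof reduces to the second, purely probabilistic inequality. The plan is to apply Hoeffding's inequality to the bounded i.i.d.\ variables $Y_k\in[0,b]$ with mean $\psi_\lambda$. The one-sided bound gives
\[
\Prob\bigl(\psi_\lambda-\overline Y_N\ge s\bigr)\le \exp(-2Ns^2/b^2).
\]
Choosing $s=b\sqrt{\log(1/\delta)/(2N)}$ makes the right-hand side equal to $\delta$. Hence, with probability at least $1-\delta$,
\[
\psi_\lambda\le \overline Y_N+b\sqrt{\log(1/\delta)/(2N)}.
\]
The map $y\mapsto-\lambda\log y$ is decreasing on $(0,\infty)$, and $\psi_\lambda>0$ by the remark after \cref{ass:completed}. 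Applying this map to both sides therefore yields $\Jp_\lambda=-\lambda\log\psi_\lambda\ge\widehat{\Jp}_{\rm LCB}$ on the same event. The argument of the logarithm is strictly positive because $s>0$, so $\widehat{\Jp}_{\rm LCB}$ is well defined even when $\overline Y_N=0$.

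It remains to check the two boundedness claims. For passive sampling, $Y_k=\exp[-\lambda^{-1}(\phi(X_T)+\int_t^TV\dd s)]$ along uncontrolled paths. Since $\phi\ge\inf\phi$ and $V\ge\inf V$, the exponent is at least $S_{\min}/\lambda$ in the appropriate direction, so $Y_k\le e^{-S_{\min}/\lambda}$. Strictly, the horizon should be $T-t$ rather than $T$; I will note that $T\inf V$ is the intended bound when $t=0$, or when $\inf V\le 0$ is handled by $(T-t)\inf V$. I will state it with the convention of the paper.

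For the defensive mixture $\Prob^{\rm mix}=\alpha\Prob^0+(1-\alpha)\Prob^{\rm g}$, we have $\Prob^{\rm mix}\ge\alpha\Prob^0$ as measures. Hence $\dd\Prob^0/\dd\Prob^{\rm mix}\le1/\alpha$, and the weighted functional $Y=e^{-S/\lambda}\,\dd\Prob^0/\dd\Prob^{\rm mix}$ satisfies $Y\le e^{-S_{\min}/\lambda}/\alpha$ with $\E^{\rm mix}Y=\psi_\lambda$. Here I need $\Prob^0\ll\Prob^{\rm mix}$, which is automatic from the mixture. The guided component $\Prob^{\rm g}$ need not be absolutely continuous with respect to $\Prob^0$ for unbiasedness, since only $\dd\Prob^0/\dd\Prob^{\rm mix}$ enters. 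Still, drift in $\ran(\sigma)$ ensures the density is computable by Girsanov.

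The main obstacle is not analytic but bookkeeping. I must confirm independence of the $Y_k$, which requires fresh paths per sample and no adaptively tuned proposal reused on the same draws. I must also confirm that the event of probability $1-\delta$ is the only randomness, since $J^\star$ and $\Jp_\lambda$ are deterministic.
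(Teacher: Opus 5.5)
Your proof is correct and follows the paper's argument: one-sided Hoeffding for independent weights in $[0,b]$ with mean $\psi_\lambda$, then the decreasing map $y\mapsto-\lambda\log y$, with $J^\star\ge\Jp_\lambda$ taken from \Cref{thm:completion}. Your checks of the two boundedness claims are sound. Your remark that the horizon factor should be $(T-t)\inf V$ is a genuine correction: for $t>0$ and $\inf V>0$, the stated $b=e^{-S_{\min}/\lambda}$ is not an upper bound on $Y_k$. The correction is harmless in the application, where $S_{\min}=0$.
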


\begin{proof}
Hoeffding's inequality gives
$\Prob\{\psi_\lambda>\overline Y_N+b\sqrt{\log(1/\delta)/(2N)}\}\le\delta$; apply
the decreasing map $y\mapsto-\lambda\log y$ and \Cref{thm:completion}.
\end{proof}

\begin{remark}[The distribution-free bound is vacuous at low temperature]
\label{rem:lcb-vacuous}
The Hoeffding term in \eqref{eq:lcb} scales with the weight \emph{range} $b$,
while $\psi_\lambda\approx e^{-\Jp_\lambda/\lambda}$. Even in the most favourable
case $S_{\min}=0$ the ratio $b/\psi_\lambda\approx e^{\Jp_\lambda/\lambda}$ is
astronomically large at the matching temperatures met in the application. For
the hardest case of \Cref{sec:numerics} ($\lamstar\approx1.76\times10^{-4}$,
$\Jp\approx0.025$) it is about $e^{142}$, and \eqref{eq:lcb} reduces to
$\widehat{\Jp}_{\rm LCB}\approx\tfrac\lambda2\log\bigl(2N/\log(1/\delta)\bigr)$:
with $N=2048$ and $\delta=0.05$ this is about $6\times10^{-4}$, a valid but
useless bound. Sharper empirical-Bernstein variants do not change the picture,
because the difficulty is the range, not the variance. A useful rigorous bound
at low temperature needs a pathwise lower bound on the action that is close to
its typical value, which we do not have. We therefore report $\Jp$ as a Monte
Carlo estimate of an exact lower value, with replicate standard errors, the
effective sample size and the bias estimate \eqref{eq:jensen-bias}; this is its
status in \Cref{tab:status}.
\end{remark}

\subsection{Verifying the deflated value}

For $\Jmin$ the value-estimation error and the subsolution-verification error
are distinct: a confidence interval for $\widetilde\psi_\lambda(x_0,t_0)$ says
nothing about the sign of $\tr(\Kdef\nabla^2\Jmin)$ on a domain. Directional
Hessian traces can be estimated by common-random-number finite differences or
stochastic differentiation, but a certificate needs an analytic sign theorem
such as \Cref{prop:prekopa}, a validated enclosure on the whole domain, or a
uniform negative bound followed by \Cref{lem:shift}. For the nonlinear
application only partial versions of the second are available;
\Cref{sec:app-deflation} records what was obtained and why a domain-wide
enclosure is currently out of reach.

\subsection{Relation to finite-dimensional searches}

Before minimization over $u$, the subsolution inequalities are linear in $\Jc$
and dual to occupation-measure linear programs \cite{lasserre2008nonlinear};
after it, the reduced inequality is convex but nonlinear in $\Jc$ and can be
written pointwise as a Schur-complement constraint. Sum-of-squares relaxations
give a finite-dimensional route on compact semialgebraic domains
\cite{parrilo2000structured,lasserre2008nonlinear,jones2020polynomial}. The
constructions of \Cref{sec:relaxations} are complementary: they produce a single
candidate by solving a surrogate exactly, rather than searching a parametrized
family, and their cost is that of Monte Carlo integration rather than of a
semidefinite program in the state dimension.

\section{Stochastic swing networks}
\label{sec:swing}

The results of \Cref{sec:certificates,sec:relaxations,sec:computing,sec:design}
hold for any plant of the form \eqref{eq:sde}. We now apply them to the power
grids studied in \Cref{sec:numerics}: a lossless, fixed-voltage transmission
network after a large load loss, with random power fluctuations at every bus
and fast active-power control at the generator buses only. This is exactly the
underactuated situation for which completion was designed --- the noise acts in
directions that no actuator reaches --- and the network structure lets every
object of \Cref{sec:relaxations} be written down explicitly rather than computed
abstractly. The section does four things. First, it fixes the model and the
performance objective and checks the standing assumptions of
\Cref{thm:completion}. Second, it computes the matching geometry: inertia
cancels, the admissibility condition $\lambda G\preceq Q$ becomes a bus-by-bus
comparison of control authority with noise variance, and the maximal matching
temperature is set by the worst-matched generator, \eqref{eq:lambda-power}.
Third, it identifies what each relaxation changes: completion adds virtual power
authority, at the load buses when the noise is homogeneous
(\Cref{cor:homogeneous}), and deflation deletes noise in such a way that its
curvature condition involves only the frequency directions, \eqref{eq:rho}.
Fourth, it converts the virtual authority into an equivalent power rating, so
that the planning results of \Cref{sec:design} can be read as statements about
installable flexibility. These are the quantities computed in
\Cref{sec:numerics}.

\subsection{Model}

Let a connected transmission graph have $n$ buses, $L$ energized branches,
incidence matrix $\Inc\in\R^{n\times L}$ and line strengths
$K=\operatorname{diag}(K_e)>0$. After a load-loss event and a prescribed
post-event balancing allocation, the controlled stochastic swing model is
\begin{equation}
 \dd\theta_t=\omega_t\dd t,\qquad
 M\dd\omega_t=\bigl[p^+-C\omega_t-\Inc K\sin(\Inc^\top\theta_t)+\Sg u_t\bigr]\dd t
 +\Gamma\dd W_t ,
 \label{eq:swing}
\end{equation}
with diagonal inertia $M\succ0$ and damping $C\succeq0$, independent nodal
injection noise $\Gamma=\operatorname{diag}(\Gamma_i)$, and $\Sg$ selecting
generator buses. Let $\theta^\star$ solve $p^+=\Inc K\sin(\Inc^\top\theta^\star)$
and write $\delta=\Inc^\top\theta$. On the cohesive cell
$\Omega_{\rm coh}=\{\theta:|\delta_e|<\pi/2\ \forall e\}$ the post-event potential
$U(\theta)=-p^{+\top}\theta-\sum_eK_e\cos\delta_e$ is convex modulo the uniform
phase, with Hessian $L(\theta)=\Inc\operatorname{diag}(K_e\cos\delta_e)\Inc^\top$,
and the relative energy
$\mathcal E=\tfrac12\omega^\top M\omega+U(\theta)-U(\theta^\star)$ is dissipated by
the uncontrolled deterministic flow at rate $\omega^\top C\omega$
\cite{dorfler2013synchronization,dorfler2018electrical}. The equilibrium and its
cohesive geometry are taken from \cite{chertkov2026cyclespace}.

\subsection{Performance objective and separate safety}
\label{sec:power-objective}

Persistent noise prevents convergence to a point, so we use a finite-horizon
objective rewarding concentration near $(\theta^\star,0)$:
\begin{equation}
 V=c_E\,\frac{\mathcal E_+}{n}+c_B\,\overline{\mathcal B}(\delta),\qquad
 \phi=c_T\,\frac{\mathcal E_+}{n},\qquad \mathcal E_+:=\max(\mathcal E,0).
 \label{eq:power-cost}
\end{equation}
The barrier $\overline{\mathcal B}$ is the edge average of the Bregman divergence
of $h(\delta)=-\log\cos\delta$ about $\delta^\star_e=(\Inc^\top\theta^\star)_e$,
kept exactly up to a guard angle of $82^\circ$ and continued beyond it by the
fourth-order Taylor polynomial of $h$, whose coefficients are all positive. The
objective therefore stays finite for trajectories that cross $90^\circ$. The
positive part $\mathcal E_+$ matters only outside $\Omega_{\rm coh}$: on the
cohesive cell $U$ is convex with minimum at $\theta^\star$, so $\mathcal E\ge0$
there. Synchrony safety is reported
\emph{separately}, through the first crossing $\tau_{\rm coh}=\inf\{t:\max_e
|\delta_e(t)|\ge\pi/2\}$ and its probability $\Prob^u(\tau_{\rm coh}\le T)$. The
separation is deliberate. An absorbing failure cost turns the Feynman--Kac
expectation into an extreme rare-survival problem near the boundary, whereas the
smooth value stays statistically regular (Figure~\ref{fig:sm-radial} in \cref{app:figures});
a certificate for the crossing probability is a different object, sketched in
\Cref{sec:app-firstexit}.

\begin{remark}[Standing assumptions in the application]
\label{rem:power-assumptions}
The noise acts on every frequency coordinate ($\Gamma_i>0$ for all $i$ in both
noise models below), so the H\"ormander condition of \Cref{rem:hypoelliptic}
holds: brackets of $e_{\omega_i}$ with the drift generate $e_{\theta_i}$. Because
$V,\phi\ge0$, \Cref{ass:standing} holds with $S_{\min}=0$, the Feynman--Kac
weights lie in $(0,1]$ under passive sampling, and \Cref{prop:lcb} applies in
its most favourable form (and is nonetheless vacuous, \Cref{rem:lcb-vacuous}).
The positive part makes $V$ and $\phi$ only Lipschitz across the set
$\{\mathcal E=0\}$, which lies outside $\Omega_{\rm coh}$ (the potential $U$ is
not periodic, and phase slips lower it). The smoothness of $\Jp$ supplied by
\Cref{rem:hypoelliptic} therefore holds away from that set; classical regularity
on the whole state space is assumed rather than proved, and we record this in
\Cref{tab:status}. In the reported samples the 99th percentile of the maximal
line angle is about $100$--$103^\circ$, far from a full slip.
\end{remark}

\subsection{Geometry of generator-only actuation}

In state coordinates $x=(\theta,\omega)$ with nodal actuator matrix $A_g$
($A_g=\Sg$ for independent generator control, $A_g=\Sg A$ for an orthonormal
generator-mode matrix $A$) and diagonal metric $R$,
\begin{equation}
 B=\begin{bmatrix}0\\M^{-1}A_g\end{bmatrix},\qquad
 \sigma=\begin{bmatrix}0\\M^{-1}\Gamma\end{bmatrix},
 \qquad
 G=\begin{bmatrix}0&0\\0&M^{-1}HM^{-1}\end{bmatrix},\quad
 Q=\begin{bmatrix}0&0\\0&M^{-1}\Gamma^2M^{-1}\end{bmatrix},
 \label{eq:power-geometry}
\end{equation}
where $H:=A_gR^{-1}A_g^\top$ is the nodal injection-control Gramian. $\sigma$ is
singular, but $\ran(G)\subseteq\ran(Q)$ whenever every actuated bus is noisy,
and $\lambda G\preceq Q$ is equivalent to $\lambda H\preceq\Gamma^2$:
\emph{inertia cancels from the matching geometry}. For independent generator
control with $R=\operatorname{diag}(r_g)$, \eqref{eq:diag-lambda} gives
\begin{equation}
 \lamstar=\min_{g}\,r_g\Gamma_g^2 .
 \label{eq:lambda-power}
\end{equation}

\paragraph{Completion}
In nodal coordinates the completion adds virtual injection authority
$H_v:=\Gamma^2/\lamstar-H\succeq0$; with $L_vL_v^\top=H_v$,
\begin{equation}
 M\dd\omega_t=\bigl[p^+-C\omega_t-\Inc K\sin(\Inc^\top\theta_t)+A_gu_t+L_vv_t\bigr]
 \dd t+\Gamma\dd W_t ,
 \label{eq:swing-completed}
\end{equation}
priced by $\tfrac12\|v\|^2$. By \Cref{thm:completion} its value is computed by
Feynman--Kac under the actual all-bus forcing, and
$\Res[\Jp]=\tfrac12(\nabla_\omega\Jp)^\top M^{-1}H_vM^{-1}\nabla_\omega\Jp\ge0$.
The diagnostics \eqref{eq:diagnostics} become
$\Ip=\tr(\Gamma^2/\lamstar-H)$ and
$\Dp=\tr(I-\lamstar\Gamma^{-1}H\Gamma^{-1})$.

\begin{corollary}[Homogeneous noise]
\label{cor:homogeneous}
If $\Gamma=\sigma_0I$, $R=rI$ and generators are independently controlled,
then $\lamstar=r\sigma_0^2$ and $H_v=r^{-1}(I-\Sg\Sg^\top)$: the least completion
adds virtual power authority at \emph{load buses only}, and the generator
directions are already exactly matched. For orthonormal actuation of rank
$r_a$, $\Dp=n-r_a$.
\end{corollary}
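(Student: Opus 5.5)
The corollary is a direct specialisation of the nodal matching geometry just derived, so I will substitute $\Gamma=\sigma_0I$ and $R=rI$ into \eqref{eq:lambda-power} and into the definition $H_v=\Gamma^2/\lamstar-H$. The statement has three parts, handled in this order: the value of $\lamstar$, the form of the completion increment, and the count $\Dp=n-r_a$.

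First, independent generator control means $A_g=\Sg$. With $R=rI$ we get $H=\Sg R^{-1}\Sg^\top=r^{-1}\Sg\Sg^\top$, where $P_g:=\Sg\Sg^\top$ is the diagonal $0/1$ projector onto generator buses. By \eqref{eq:lambda-power}, $\lamstar=\min_g r\sigma_0^2=r\sigma_0^2$, the same value for every generator; equivalently, by \Cref{lem:lambdastar}, $\lambda_{\max}(\Gamma^{-1}H\Gamma^{-1})=1/(r\sigma_0^2)$. Second,
\[
H_v=\frac{\sigma_0^2 I}{r\sigma_0^2}-r^{-1}P_g=r^{-1}(I-P_g).
\]
This is supported exactly on load-bus coordinates, so the virtual authority sits at load buses only. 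On generator coordinates we have $H_v=0$, which means $\lamstar H=\Gamma^2$ there: these directions are exactly matched. Since inertia cancels, as established before \eqref{eq:lambda-power}, this nodal statement transfers to the state-space Gramians \eqref{eq:power-geometry}.

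Third, for orthonormal actuation $A_g=\Sg A$ with $A^\top A=I_{r_a}$, the matrix $A_g$ has orthonormal columns, because $\Sg$ is a column selector and so $\Sg^\top\Sg=I$. Hence $P:=A_gA_g^\top$ is an orthogonal projector of rank $r_a$. With $R=rI$ we have $H=r^{-1}P$, whose largest eigenvalue is $r^{-1}$, so again $\lamstar=r\sigma_0^2$. Then $\lamstar\Gamma^{-1}H\Gamma^{-1}=r\sigma_0^2\cdot\sigma_0^{-2}r^{-1}P=P$, and
\[
\Dp=\tr(I-P)=n-r_a.
\]

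The only step that needs care is this last one. One must confirm that $\Sg A$ has orthonormal columns, which requires the generator modes to be orthonormal within the generator coordinates. One must also confirm that $\lamstar$ is computed from \eqref{eq:lambdastar} directly, since \eqref{eq:lambda-power} is stated only for diagonal $H$. Everything else is substitution.
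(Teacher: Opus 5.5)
Your proposal is correct and takes the same route the paper intends. The corollary has no separate proof in the paper; it is read off by substituting $\Gamma=\sigma_0I$ and $R=rI$ into \eqref{eq:lambda-power}, into $H_v=\Gamma^2/\lamstar-H$, and into the nodal form $\Dp=\tr(I-\lamstar\Gamma^{-1}H\Gamma^{-1})$. Your check that $\Sg^\top\Sg=I$ makes $rH=\Sg AA^\top\Sg^\top$ an orthogonal projector of rank $r_a$, with $\lamstar$ then taken from \eqref{eq:lambdastar} rather than from the diagonal formula, is the one step the paper leaves tacit.
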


Under heterogeneous noise the picture changes. By \eqref{eq:lambda-power} one
weak-noise generator sets the temperature for all, so adding an actuator on a
quiet bus can \emph{lower} $\lamstar$ and \emph{raise} $\Ip$, even though it
improves the physical problem. We call this the \emph{weakest-channel effect};
it is the mechanism behind the numerical results of \Cref{sec:numerics}, and
\Cref{prop:noise-proportional} is its remedy.

\paragraph{Deflation}
At $\lamstar$ the deleted diffusion consists of the load-bus noise plus any
unmatched generator surplus, and \Cref{thm:deflation} reduces to
\begin{equation}
 \rho(x,t):=\Res[\Jmin]=\tr\bigl[\Kdef_{\omega\omega}\,\nabla^2_{\omega\omega}\Jmin(x,t)\bigr].
 \label{eq:rho}
\end{equation}
Curvature in the rotor angles is irrelevant; only curvature of the value in the
deleted \emph{frequency} directions matters. Verification of $\rho\ge0$ is
discussed in \Cref{sec:app-deflation}.

\paragraph{From virtual to physical authority}
For planning, diagonal or low-rank pieces of $H_v$ can represent installable
signed active-power flexibility. A device whose dimensionless command $a_i$ maps
to power $P_ia_i$ at cost $\tfrac12r_ia_i^2$ contributes $h_i=P_i^2/r_i$ to the
nodal Gramian. We use $P_i^{\rm eff}=S_{\rm base}\sqrt{r_{\rm ref}h_i}$ as an
\emph{authority proxy} in megawatts. It is a Gramian-to-power conversion, not a
device rating: energy capacity, state of charge, ramp limits and degradation are
separate constraints that break exact matching and are not modelled here.

\section{Numerical study}
\label{sec:numerics}

We certify simple generator controllers on the IEEE 39-bus (New England) and
118-bus systems \cite{zimmerman2011matpower} after a severe load-loss event. The
39-bus study exposes the mechanism; the 118-bus study is the scaling test and
supplies the headline numbers. \Cref{tab:headline} answers the question in the
title for both.

\subsection{Modelling scope and protocol}
\label{sec:scope}

The following choices hold throughout and are stated once. The network is the
lossless, fixed-voltage reduction of \Cref{sec:swing}; load buses carry a small
synthetic inertia ($M_i=0.02$, against $0.30$ at generators), which makes the
model stiff. Injections are scaled by $\alpha=4$, a deliberately stressed
laboratory rather than a calibrated operating point. The event is inherited from
\cite{chertkov2026cyclespace} and the post-event balancing is equal across
generators. Noise is white and independent across buses, in two spatial
profiles of equal root-mean-square amplitude $\sigma_{\rm rms}$:
\emph{homogeneous}, $\Gamma_i=\sigma_{\rm rms}$; and \emph{activity-scaled},
$\widehat\Gamma_i=0.25+\min\bigl(a_i/q_{0.75}(a),4\bigr)$ with $a_i=|p_i|$ and
$q_{0.75}$ the upper quartile, renormalized to the same RMS. The second is a
stylized proxy for fluctuations that scale with nodal injection; it is not
fitted to data. The deployed controllers are deliberately simple---static
balancing, generator droop, and a scaled linear-quadratic (LQR) feedback about
the post-event equilibrium, with one scalar gain chosen on training paths
disjoint from the evaluation paths---because the point is to certify a
controller that was not designed from the certificate. The horizon is $T=3$\,s,
$R=0.2I$ in orthonormal actuator coordinates, and $(c_E,c_B,c_T)=(0.08,0.02,0.20)$
in \eqref{eq:power-cost}. The smooth objective replaces an absorbing failure
cost because, near the cohesion boundary, the Feynman--Kac estimator of the
stopped formulation is dominated by rare surviving paths while the smooth one
stays well conditioned (Figure~\ref{fig:sm-radial} in \cref{app:figures}). Paired comparisons use common random numbers; time
steps are $\Delta t\le5\times10^{-3}$\,s (case39) and $2.5\times10^{-3}$\,s
(case118), below the threshold at which the stiff model produces spurious angle
excursions (Figure~\ref{fig:sm-convergence} in \cref{app:figures}). Governor and voltage
dynamics, losses, protection and actuator energy limits are absent.

Four actuator classes are compared in both systems. H0 is a single physical
slack generator; H1 is one distributed-slack generator mode; H2 is a rank-three
generator subspace spanned by the uniform mode and the two softest nonconstant
modes of the Kron-reduced post-event generator stiffness; H3 is independent
control of every generator. H4 is not physical: it is the completion of H3,
used as a common oracle.

\begin{table}[t]
\caption{How suboptimal is the deployed controller? Certified gap
$(J-\Jp)/J$ of the H3 scaled-LQR controller, where $J$ is the deployed cost and
$\Jp$ the completed lower value (\Cref{thm:completion}). For case118,
$\pm$ is one standard error of the deployed-cost estimate; the completed values
have replicate standard errors below $2\times10^{-4}$ in absolute terms.
Repricing changes $R$, hence the control problem (\Cref{rem:repricing}).}
\label{tab:headline}
\centering\footnotesize\setlength{\tabcolsep}{5pt}
\begin{tabular}{llllr}
\toprule
system & noise & pricing & $\sigma_{\rm rms}$ & certified gap \\
\midrule
case39  & homogeneous    & uniform             & 0.10 & $7.5\%$ \\
case39  & activity-scaled & uniform            & 0.10 & $32.5\%$ \\
case39  & activity-scaled & noise-proportional & 0.10 & $3.5\%$ \\
\midrule
case118 & homogeneous    & uniform             & 0.15 & $2.8\pm0.5\%$ \\
case118 & homogeneous    & robust fixed        & 0.15 & $4.8\pm0.5\%$ \\
case118 & activity-scaled & uniform            & 0.15 & $39.1\pm0.8\%$ \\
case118 & activity-scaled & robust fixed       & 0.15 & $5.3\pm0.8\%$ \\
case118 & activity-scaled & noise-proportional & 0.15 & $1.2\pm0.2\%$ \\
\bottomrule
\end{tabular}
\end{table}

\subsection{The 39-bus system: mechanism}
\label{sec:case39}

\paragraph{Event}
The lossless reduction has 39 buses, 46 energized branches and 10 generators.
At $\alpha=4$ a $2724.5$\,MW load block at bus 20 disappears; the pre-event
maximum line angle is $46.17^\circ$, and with equal final balancing the
deterministic transient peaks at $81.11^\circ$ and stays cohesive
(\Cref{fig:case39-event}, left). Concentrating the final balancing on fewer
generators makes the event unsafe: at nominal noise the empirical crossing
probability is $0.59$ at concentration $0.25$ and essentially one for
concentrations $\ge0.5$ (\Cref{fig:case39-event}, right). Equal balancing is
therefore used as the common target for all controllers.

\begin{figure}[t]
  \centering
  \includegraphics[width=0.49\textwidth]{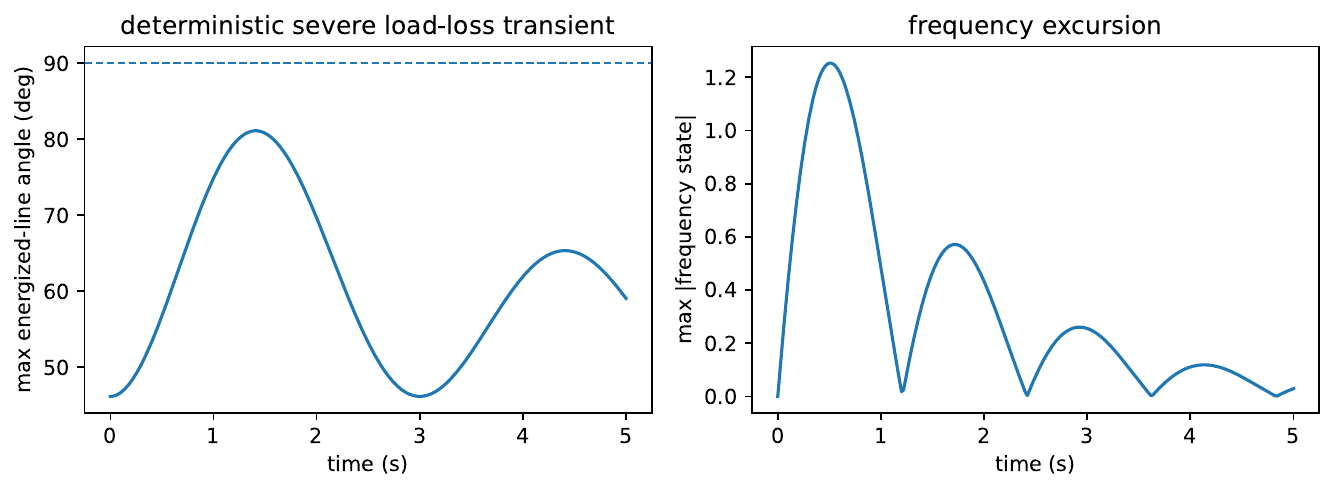}\hfill
  \includegraphics[width=0.49\textwidth]{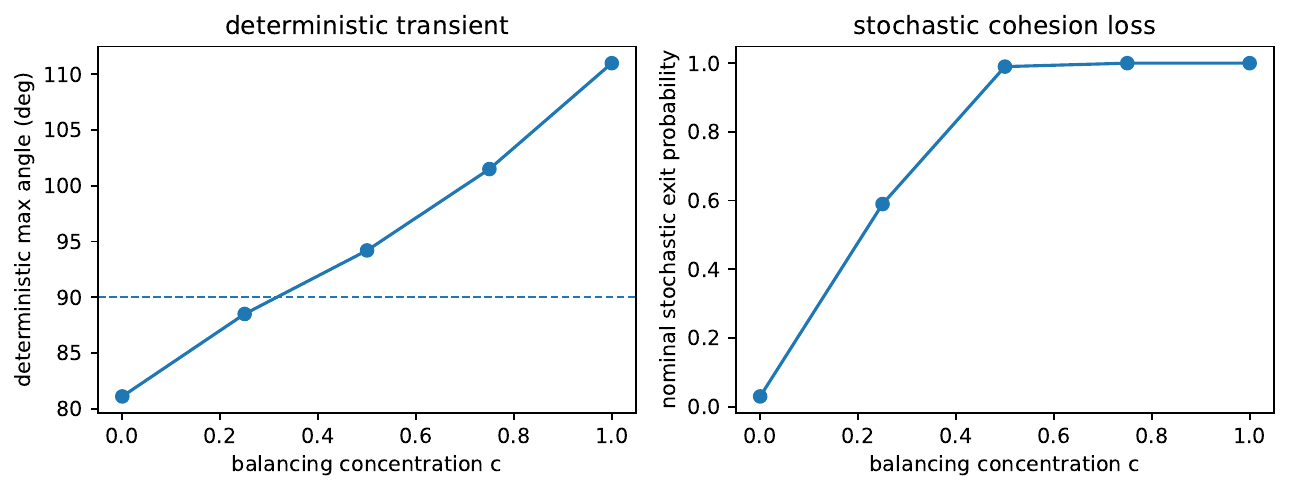}
  \caption{The 39-bus load-loss event. Left: deterministic transient with equal
  final balancing; the maximum line angle peaks near $81^\circ$ and the
  trajectory stays cohesive. Right: concentrating the post-event balancing
  raises the deterministic peak angle and the empirical probability of crossing
  $90^\circ$ under nominal noise.}
  \label{fig:case39-event}
\end{figure}

\paragraph{Deployed controllers}
With droop gain $0.025$ and H3 LQR scale $0.01$, the empirical probabilities of
crossing $90^\circ$ at homogeneous $\sigma_{\rm rms}=0.2$ are about $0.129$,
$0.098$ and $0.094$ for static balancing, droop and H3 LQR, rising to $0.375$,
$0.301$ and $0.285$ at $\sigma_{\rm rms}=0.3$. Noise placement matters in
addition to amplitude (Figure~\ref{fig:sm-noise} in \cref{app:figures}). As an independent
check, the PIC feedback estimator projected onto the full H3 LQR command gives a
coefficient close to $0.0100$ (homogeneous) and $0.0097$ (activity-scaled) away
from the ill-conditioned $t=0$ projection, consistent with the gain selected by
the upper-cost scan (Figure~\ref{fig:sm-projection} in \cref{app:figures}).

\paragraph{Completion across the actuator hierarchy}
\Cref{tab:case39} and \Cref{fig:case39-hierarchy} compare the four
architectures at nominal noise. For homogeneous noise, $\lamstar=r\sigma^2$ is
the same for all four (\Cref{cor:homogeneous}), so the completed process is
identical and $\Jp$ does not change with the architecture: more physical rank
simply replaces fictitious directions by real ones. The deflated values of
\Cref{tab:case39}, with the rank of the deleted stochastic subspace, are
plotted in Figure~\ref{fig:sm-h0h3} in \cref{app:figures}. The completion burden falls
from $190$ (H0, H1) to $180$ (H2) and $145$ (H3), the whitened deficit is
exactly $38,38,36,29$, and the gap $J-\Jp$ falls from $1.1\times10^{-2}$ to
$6.9\times10^{-3}$. Here the geometry behaves as intuition about
underactuation suggests.

\begin{table}[t]
\caption{Case39 actuator hierarchy at $\sigma_{\rm rms}=0.1$. $J$: deployed
cost; $\Jmin$: deflated value (a lower bound only if the curvature condition of
\Cref{thm:deflation} holds); $\Jp$: completed value (an automatic lower bound);
$\Ip$: completion burden. All entries are Monte Carlo estimates.}
\label{tab:case39}
\centering\small
\begin{tabular}{llrrrrr}
\toprule
noise & class & $J$ & $\Jmin$ & $\Jp$ & $\Ip$ & $J-\Jp$ \\
\midrule
homogeneous & H0 & .09573 & .05770 & .08465 & 190.0 & .01108 \\
            & H1 & .09584 & .05778 & .08465 & 190.0 & .01120 \\
            & H2 & .09270 & .05511 & .08465 & 180.0 & .00805 \\
            & H3 & .09156 & .05619 & .08465 & 145.0 & .00691 \\
\midrule
activity-scaled & H0 & .08301 & .05816 & .07846 & 62.4  & .00455 \\
                & H1 & .08314 & .05779 & .07207 & 166.8 & .01107 \\
                & H2 & .08004 & .05501 & .06882 & 241.3 & .01122 \\
                & H3 & .07878 & .05414 & .05318 & 744.5 & .02560 \\
\bottomrule
\end{tabular}
\end{table}

For activity-scaled noise the picture reverses. The matching temperatures for
H0--H3 are
\[
 \lamstar\approx(5.79,\ 2.27,\ 1.52,\ 0.491)\times10^{-3}:
\]
each added direction exposes a quieter generator channel, and the single
temperature must match all of them. The completion burden therefore \emph{grows},
$\Ip\approx(62.4,166.8,241.3,744.5)$, although the whitened deficit falls only
mildly, from $38$ to $36.84$. This is the weakest-channel effect of
\Cref{sec:swing}: more physical rank improves the real problem but makes the
scalar completion more optimistic. Across the eight points of
\Cref{tab:case39}, $\Ip$ tracks the gap $J-\Jp$ and the whitened deficit does
not; we read this as evidence for $\Ip$ as a diagnostic, not as a theorem. Over
$\sigma_{\rm rms}=0.05,0.1,0.2,0.3$ the H3 completed gap is
$0.0040,0.0069,0.0178,0.0313$ (homogeneous) and $0.0205,0.0256,0.0428,0.0733$
(activity-scaled), while the geometry itself is invariant under
$\Gamma\mapsto s\Gamma$ (Figure~\ref{fig:sm-inflation-noise} in \cref{app:figures}). At the
weakest heterogeneous noise the two relaxations change order, as
\Cref{rem:not-ordered} allows.

\begin{figure}[t]
  \centering
  \includegraphics[width=0.92\textwidth]{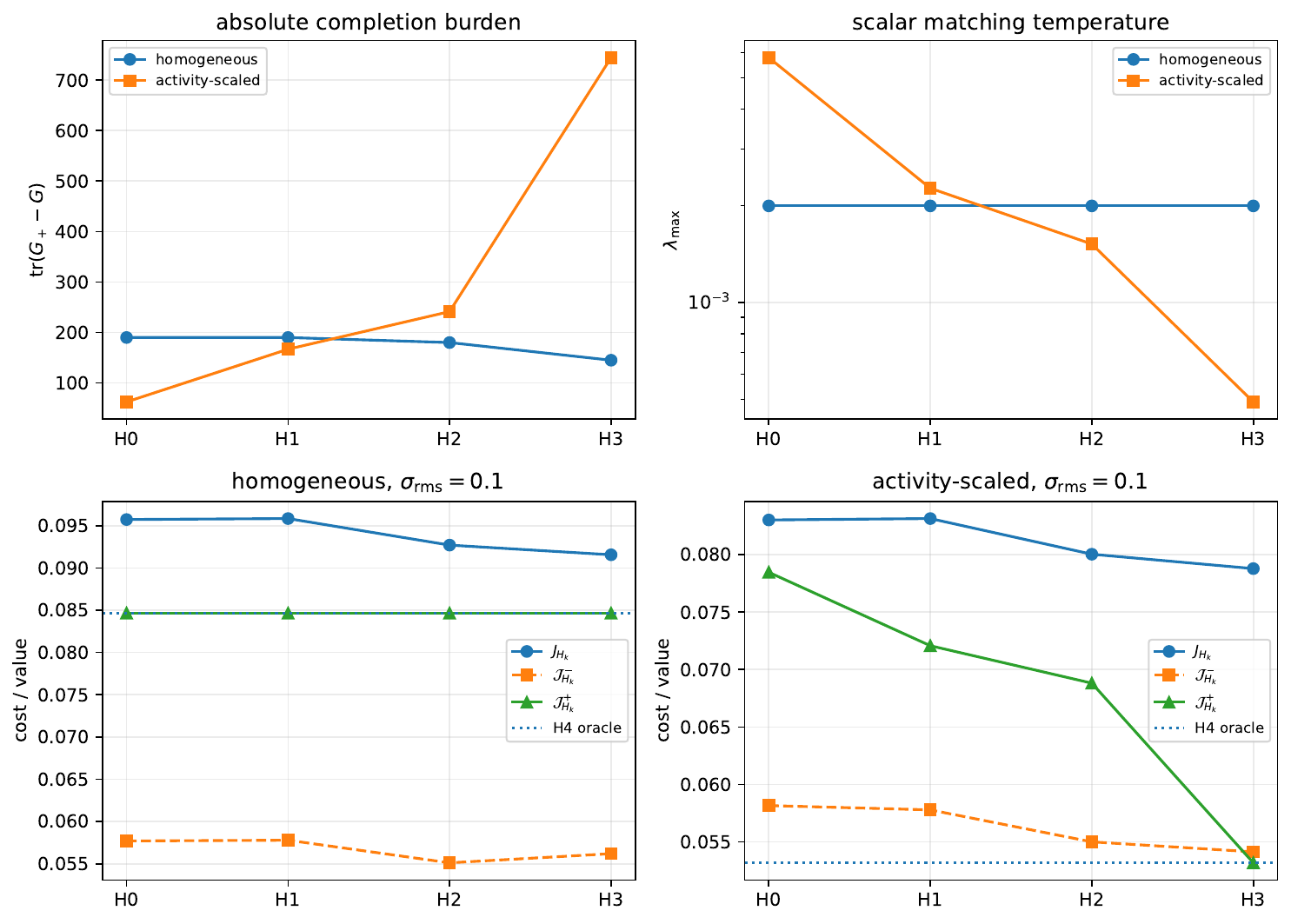}
  \caption{Case39 completion across the actuator hierarchy at
  $\sigma_{\rm rms}=0.1$. Top: completion burden $\tr(G_+-G)$ and matching
  temperature $\lamstar$. Bottom: deployed cost $J$, deflated value $\Jmin$,
  completed value $\Jp$, and the H4 oracle. Under homogeneous noise every
  architecture has the same completion; under activity-scaled noise the
  weakest-channel effect separates them.}
  \label{fig:case39-hierarchy}
\end{figure}

\paragraph{Pricing before hardware}
At fixed total generator authority $\sum_gg_g=50$, homogeneous noise is already
balanced, so uniform and noise-proportional pricing coincide. For
activity-scaled noise, \Cref{prop:noise-proportional} changes $\lamstar$ from
$4.91\times10^{-4}$ to $4.43\times10^{-3}$ and $\Ip$ from $744.5$ to $38.0$,
without adding authority; $\Jp$ rises from $0.0530$ to $0.0773$ against a
deployed cost of about $0.0801$ (Figure~\ref{fig:sm-metric} in \cref{app:figures}). The three
allocations compared are shown in Figure~\ref{fig:sm-allocation} in \cref{app:figures}. The robust allocation of \Cref{prop:robust} has
completion trace $169.3$ in both scenarios, with $\Jp=0.0834$ (homogeneous) and $0.0691$ (activity-scaled)
(Figure~\ref{fig:sm-scenario} in \cref{app:figures}). Moving down the linearly solvable
frontier \eqref{eq:frontier} from H3 at homogeneous noise, $\lambda/\lamstar=0.8,
0.6,0.4,0.25$ raises the trace from $145$ to about $194,275,438,730$ and lowers
$\Jp$ from $0.0846$ to $0.0823,0.0789,0.0738,0.0693$, with the authority proxy
rising from about $2.9$ to $7.5$\,GW (Figure~\ref{fig:sm-frontier39} in \cref{app:figures}).
Adding only $k=5$ load-bus devices ranked by an H4 shadow proxy lowers the
homogeneous deployed cost to $0.08887\pm0.00047$, against $0.09025\pm0.00048$
for the same number ranked by local noise and about $0.091$ for generators only
(Figure~\ref{fig:sm-sparse39} in \cref{app:figures}).

\paragraph{Implementation defects}
Imposed on the H3 controller without redesign, a $0.2$\,s sample-and-hold
interval raises the cost by about $2.1\times10^{-3}$ and a $0.2$\,s first-order
lag by about $2.2\times10^{-3}$, roughly $2\%$ of the objective; saturation is
negligible until the cap falls well below the 99th-percentile command, and
generator dropout raises the mean cost monotonically (\Cref{fig:case39-impl}).
These are the defects of \Cref{sec:implementation}, priced against the same
objective without a new HJB solve.

\begin{figure}[t]
  \centering
  \includegraphics[width=0.78\textwidth]{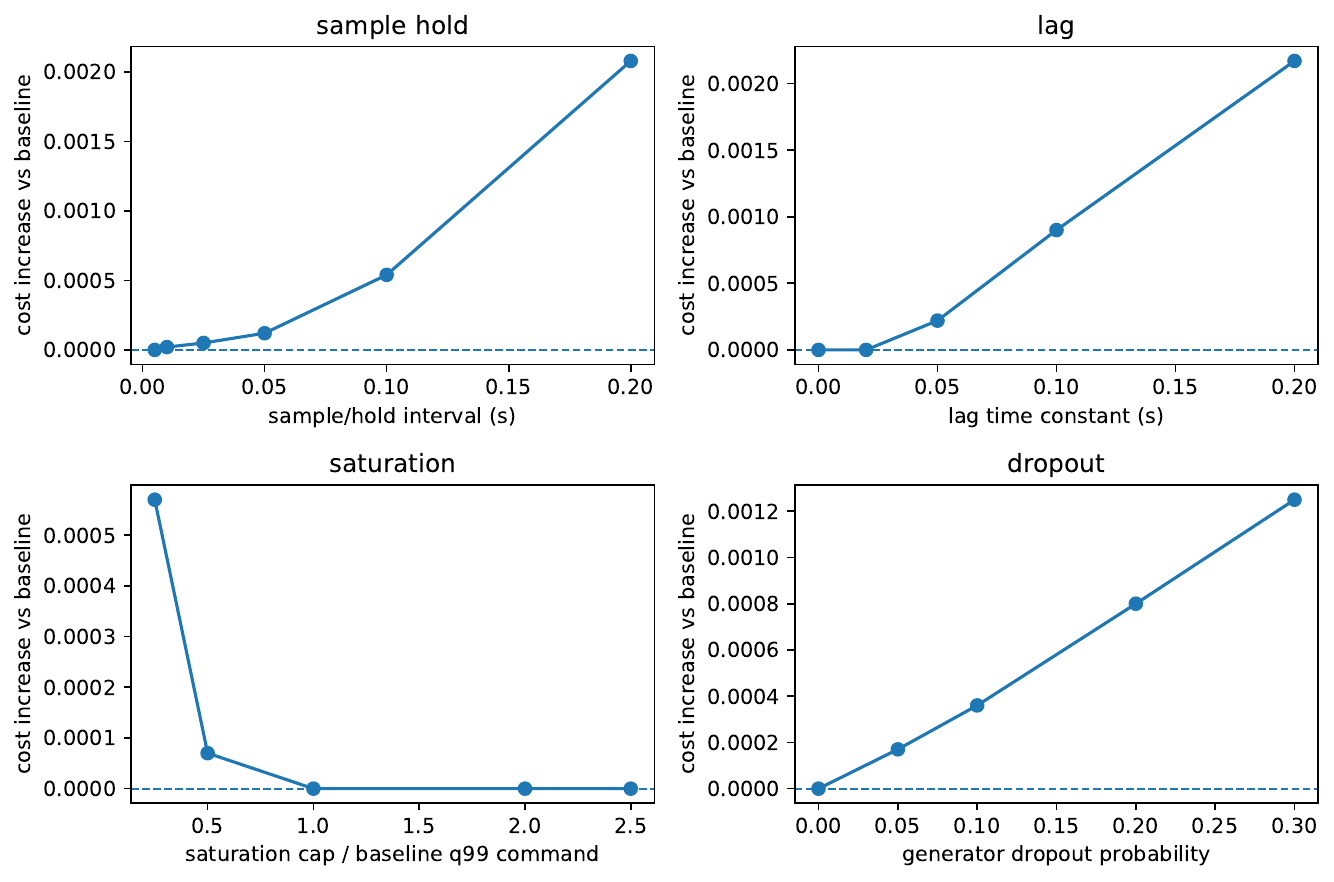}
  \caption{Case39 implementation defects imposed on the nominal H3 controller:
  sample-and-hold interval, first-order actuator lag, saturation cap, and
  event-level generator dropout. Each panel shows the cost increase relative to
  the ideal implementation.}
  \label{fig:case39-impl}
\end{figure}

\subsection{The 118-bus system: scaling}
\label{sec:case118}

\paragraph{Event}
The reduction has 118 buses, 54 generator buses and 186 branches, with bus 1
fixing the gauge; its deterministic thresholds reproduce those of
\cite{chertkov2026cyclespace}. At $\alpha=4$ the complete net load at bus 90,
$656.6$\,MW, is lost. The maximum line angle is $59.9^\circ$ before the event and
$63.3^\circ$ at the post-event equilibrium, and the uncontrolled deterministic
transient peaks at $72.8^\circ$ over five seconds.

\paragraph{Deployed controllers}
H0 is the slack generator at bus 69. Ranks $3$, $5$ and $10$ were compared for
H2 and were indistinguishable at the training resolution, so rank three was
kept. The selected LQR scales are $0.00125,0.00125,0.0025,0.005$ for H0--H3.
Across $\sigma_{\rm rms}=0.10,0.15,0.20$, H3 LQR lowers the smooth cost relative
to static balancing by $0.5$--$0.8\%$ (homogeneous) and $1.3$--$1.6\%$
(activity-scaled); H0 and H1 are indistinguishable from static balancing, so
actuator rank is the dominant distinction in this weak-feedback regime
(\Cref{tab:case118-deployed}, Figure~\ref{fig:sm-deployed118} in \cref{app:figures}). The
empirical crossing probabilities in \Cref{tab:case118-deployed} are safety
diagnostics outside the smooth objective.

\begin{table}[t]
\caption{Case118 deployed policies at $\sigma_{\rm rms}=0.15$ (192 evaluation
paths; standard error of $J$ about $3.5\times10^{-4}$ homogeneous and
$3.2\times10^{-4}$ activity-scaled).}
\label{tab:case118-deployed}
\centering\small
\begin{tabular}{llrrr}
\toprule
noise & policy & $J$ & $\Prob(\tau_{\rm coh}\le T)$ & $q_{0.99}$ max angle \\
\midrule
homogeneous     & static   & 0.077150 & 0.328 & $103.3^\circ$ \\
homogeneous     & H2 droop & 0.077062 & 0.297 & $102.0^\circ$ \\
homogeneous     & H3 LQR   & 0.076692 & 0.292 & $102.0^\circ$ \\
activity-scaled & static   & 0.041297 & 0.208 & $101.5^\circ$ \\
activity-scaled & H2 droop & 0.041222 & 0.182 & $100.5^\circ$ \\
activity-scaled & H3 LQR   & 0.040731 & 0.167 & $100.1^\circ$ \\
\bottomrule
\end{tabular}
\end{table}

\paragraph{Completion across the hierarchy}
For homogeneous noise $\lamstar=0.2\sigma_{\rm rms}^2$ for all four
architectures, so the completed process is architecture-independent
(\Cref{tab:case118}). For activity-scaled noise the matching temperatures are
$(3.65\times10^{-2},1.31\times10^{-3},1.09\times10^{-3},1.76\times10^{-4})$ for
H0--H3, and the H3 completion trace reaches $1.48\times10^4$: the
weakest-channel effect again, now much stronger. The hard activity-scaled H3
completion was audited separately, because its first importance sampler had an
effective sample fraction of about $0.011$; a six-bin time-dependent guide
raised it to $0.029$ and gave $\Jp\approx0.0250$ with replicate standard error
$1.7\times10^{-5}$ over $2048$ paths. By \Cref{lem:jensen} its estimated bias is
about $6\times10^{-6}$, below the standard error and three orders of magnitude
below the gap; the unquantified risk is weight mass that was never sampled
(\Cref{rem:lcb-vacuous}).

\begin{table}[t]
\caption{Case118 hierarchy at $\sigma_{\rm rms}=0.15$. $\Jmin$ is a lower value
only under the curvature condition of \Cref{thm:deflation}, which is not
verified here; $\Jp$ is an exact lower value estimated by Monte Carlo.
The small negative entry is within half a standard error of the deployed-cost
estimate, whose standard error is $3.2\times10^{-4}$; the completed estimate at
that point has effective sample fraction $0.999$, so it is not a bias effect.}
\label{tab:case118}
\centering\small
\begin{tabular}{llrrrr}
\toprule
noise & class & $J$ & $\Jmin$ & $\Jp$ & $J-\Jp$ \\
\midrule
homogeneous & H0 & 0.077146 & 0.007642 & 0.075024 & $0.002122$ \\
            & H1 & 0.077146 & 0.007575 & 0.075024 & $0.002122$ \\
            & H2 & 0.077080 & 0.007925 & 0.075024 & $0.002056$ \\
            & H3 & 0.076692 & 0.018356 & 0.075024 & $0.001668$ \\
\midrule
activity-scaled & H0 & 0.041286 & 0.009109 & 0.041440 & $-0.000153$ \\
                & H1 & 0.041293 & 0.007476 & 0.036321 & $0.004972$ \\
                & H2 & 0.041234 & 0.007511 & 0.035570 & $0.005664$ \\
                & H3 & 0.040731 & 0.007650 & 0.024957 & $0.015774$ \\
\bottomrule
\end{tabular}
\end{table}

In this system the completed value is the tighter lower bound as well as the
rigorous one: $\Jmin$ is below $\Jp$ by a factor between three and ten in every
row. We therefore base all case118 certificates on $\Jp$.

\paragraph{Pricing existing authority}
\Cref{fig:case118-certificate} and \Cref{tab:headline} show the central result.
All pricings keep the total generator authority fixed at $\sum_gg_g=270$, and the
deployed LQR is re-solved for each pricing with one retuned scalar gain.
Activity-specific pricing (\Cref{prop:noise-proportional}) raises the
activity-scaled matching temperature by a factor of about $50$, from
$1.76\times10^{-4}$ to $8.72\times10^{-3}$, and lowers the completion burden from
$14832$ to $34.5$. The certified gap falls from $39.1\%$ to $1.2\%$; the deployed
cost barely moves ($0.04095$ to $0.04105$, the latter from an independent
$3072$-path audit), so the change is in the lower bound. The robust pricing of
\Cref{prop:robust} gives $4.8\%$ and $5.3\%$ in the two scenarios, while
activity-specific pricing is deliberately mismatched to homogeneous noise
($11.6\%$). The gap tracks the completion burden across all six cases. Following
\Cref{rem:repricing}, the correct reading is the design statement: with prices
proportional to the noise, the simple controller is certified within about
$1\%$ of optimal. Whether the $39\%$ under uniform prices reflects a loose bound
or a poor controller cannot be decided from these data, because the optimum of
the uniform-price problem is unknown.

\begin{figure}[t]
  \centering
  \includegraphics[width=0.95\textwidth]{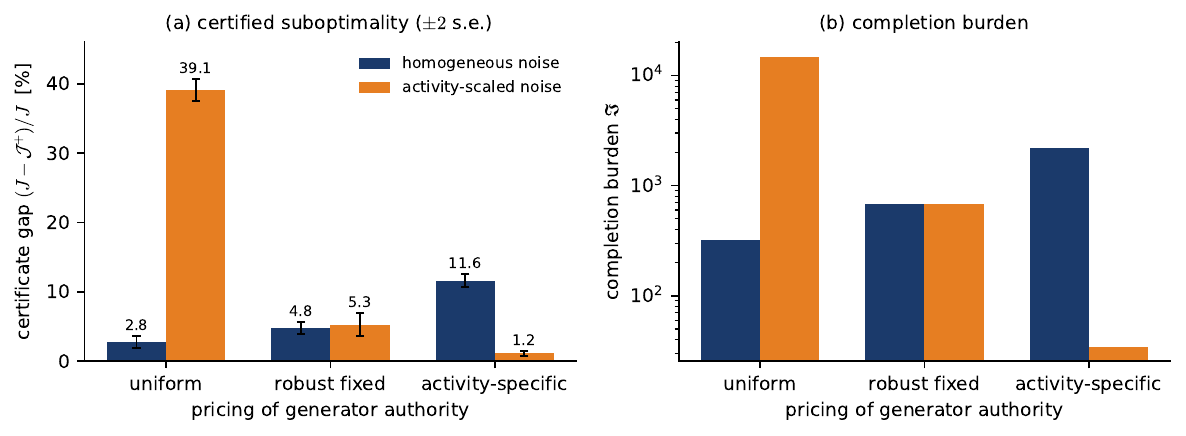}
  \caption{Case118 certificates under three pricings of the same total generator
  authority, $\sigma_{\rm rms}=0.15$. (a) Certified gap $(J-\Jp)/J$ of the H3
  scaled-LQR controller, with $\pm2$ standard errors of the deployed-cost
  estimate. (b) Completion burden $\Ip$ (log scale). The gap tracks the burden;
  noise-proportional pricing removes almost all of it for the noise shape it is
  matched to.}
  \label{fig:case118-certificate}
\end{figure}

\paragraph{The linearly solvable frontier}
\Cref{fig:case118-frontier} follows \eqref{eq:frontier} below $\lamstar$. For
homogeneous noise with uniform pricing, reducing $\lambda/\lamstar$ from $1$ to
$0.25$ lowers $\Jp$ from $0.07502$ to $0.06884$ ($8.3\%$) while the authority
proxy grows from $6.4$ to $22.2$\,GW. For activity-scaled noise with
activity-specific pricing, $\Jp$ falls from $0.04057$ to $0.03821$ ($5.8\%$) while
the proxy grows from $1.9$ to $11.3$\,GW; near the least completion the frontier
is nearly flat, the first $20\%$ temperature reduction adding $2.4$\,GW for a
change in $\Jp$ of $3\times10^{-5}$, comparable to the sampling resolution. At
the matched least completion the generator completion is zero and only the $64$
non-generator buses are virtual; below it, authority is needed in all $118$
directions. Exact linear solvability is therefore expensive as a construction
target, which motivates sparse reinforcement.

\begin{figure}[t]
  \centering
  \includegraphics[width=0.62\textwidth]{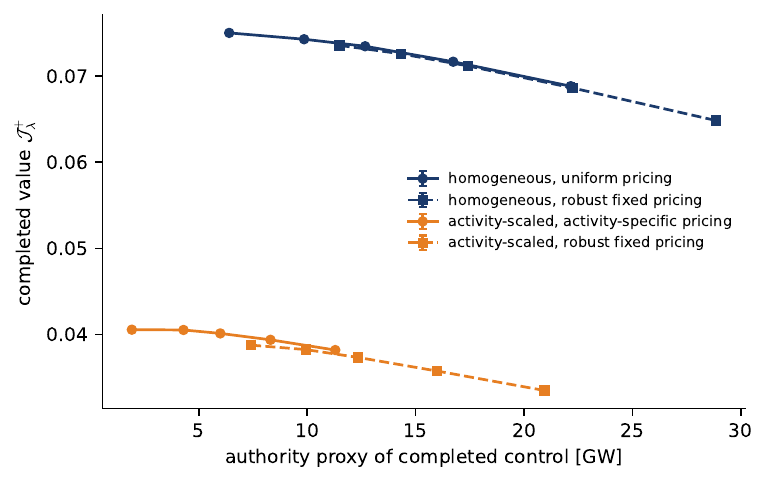}
  \caption{Case118 linearly solvable reinforcement frontiers: completed value
  $\Jp_\lambda$ against the authority proxy of the completed control as
  $\lambda/\lamstar$ decreases from $1$ to $0.25$ (error bars: $\pm2$ replicate
  standard errors). The proxy is a Gramian-to-power conversion, not a device
  rating.}
  \label{fig:case118-frontier}
\end{figure}

\paragraph{Sparse reinforcement}
Keeping the least completion as an oracle, we add real fast-flexibility devices
at $k=5,10,20,40,64$ load buses, with $k=64$ recovering the full least completion,
and compare two placement rules at exactly matched added authority: local noise
variance, and the shadow value of \Cref{prop:shadow} computed from the H4 LQR
closed loop. Under homogeneous noise the noise ranking is degenerate---every bus has the
same variance, so it is an arbitrary ordering---and the shadow rule is not
better: at $k=20$ and $40$ the arbitrary ordering gives slightly larger
reductions ($0.59\%$ against $0.54\%$, and $1.20\%$ against $1.15\%$), a difference
well resolved by the paired standard errors. Full $k=64$ completion lowers the
deployed cost by $1.97\%$ and the empirical crossing frequency from $0.302$ to
$0.250$. Under activity-scaled noise the two rules select disjoint top-five
buses, and the shadow rule has lower cost at $k=5,10,20,40$ with every paired
common-random-number $95\%$ interval excluding zero; it recovers about $8$, $35$,
$70$ and $79\%$ of the $k=64$ improvement, against a noise rule that is slightly
worse than no devices at $k=5,10$ and recovers $47\%$ at $k=40$
(\Cref{fig:case118-sparse}). The absolute effect is small: the entire $k=64$
improvement is $0.09\%$ of the deployed cost, and the crossing frequency is
$0.167$ for every sparse policy. Relative to the completed optimum $0.04057$, the
simple LQR closes only a small part of the remaining gap even with all $64$ load
directions physically present; the oracle reveals value this controller does not
extract.

\begin{figure}[t]
  \centering
  \includegraphics[width=0.95\textwidth]{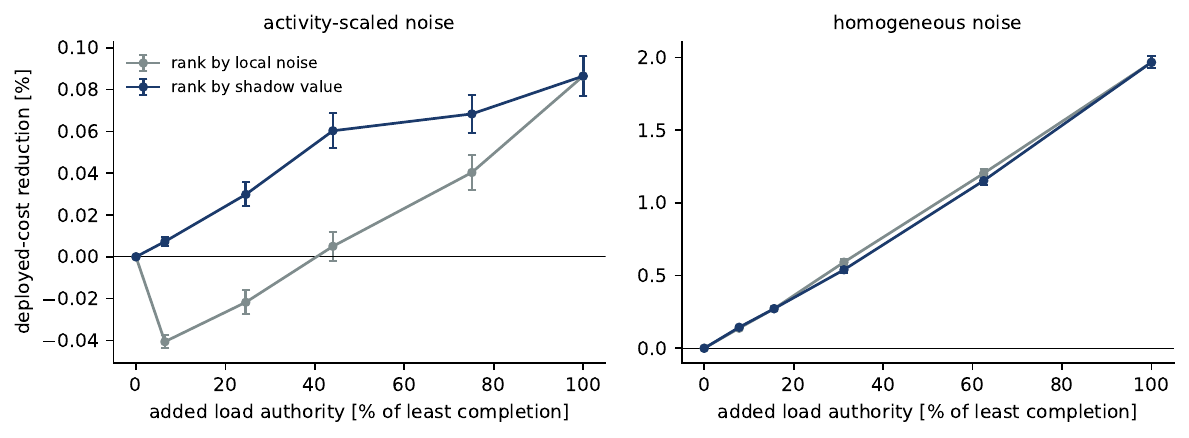}
  \caption{Case118 sparse load-side reinforcement at matched added authority:
  reduction of the deployed H3 cost relative to generator-only control, for
  placement by local noise variance and by shadow value
  (\Cref{prop:shadow}), with $\pm2$ paired standard errors. Note the different
  vertical scales: under activity-scaled noise the ranking is statistically
  resolved but the absolute effect is below $0.1\%$.}
  \label{fig:case118-sparse}
\end{figure}

\section{Discussion}
\label{sec:discussion}

\subsection{What is certified, and how}

\Cref{tab:status} collects the status of every claim. The distinction that
matters most is between the completed value $\Jp$, which is an exact lower
bound whose \emph{numerical value} is a Monte Carlo estimate, and the deflated
value $\Jmin$, which is a lower bound only if a curvature condition holds that
we have not verified on the whole domain.

\begin{table}[t]
\caption{Status of the claims of the paper.}
\label{tab:status}
\centering\small\setlength{\tabcolsep}{4pt}
\begin{tabular}{>{\raggedright\arraybackslash}p{0.43\textwidth}
                >{\raggedright\arraybackslash}p{0.23\textwidth}
                >{\raggedright\arraybackslash}p{0.25\textwidth}}
\toprule
claim & status & where \\
\midrule
performance identity; two-sided certificate; regret identity
  & theorem & \Cref{prop:identity,thm:certificate,cor:regret} \\
completed value is a subsolution, $\Jp\le J^\star$; exact gap split
  & theorem, under \cref{ass:completed} & \Cref{thm:completion,cor:gap-split} \\
least inflation is tightest in the scalar family
  & theorem & \Cref{prop:least-inflation} \\
deflation residual is directional curvature; affine--convex case
  & theorem & \Cref{thm:deflation,prop:prekopa} \\
noise-proportional and robust pricing; multi-scenario obstruction
  & theorem & \Cref{prop:noise-proportional,prop:robust,prop:multiscenario} \\
shadow price of authority
  & theorem, under regularity & \Cref{prop:shadow} \\
estimator bias sign and size; finite-sample lower bound
  & theorem; bound vacuous at low $\lambda$ & \Cref{lem:jensen,prop:lcb,rem:lcb-vacuous} \\
\addlinespace
regularity of $\Jp$ in the swing application
  & smooth away from $\{\mathcal E=0\}$; global regularity assumed
  & \Cref{rem:hypoelliptic,rem:power-assumptions} \\
numerical values of $\Jp$ and of the certificate gaps
  & Monte Carlo estimates of exact quantities, with standard errors
  & \Cref{tab:headline,tab:case39,tab:case118} \\
$\Jmin$ as a lower bound in the swing application
  & conditional: curvature verified only on a terminal layer and a finite set of states
  & \Cref{sec:app-deflation} \\
$\Ip$ as a predictor of the gap
  & empirical diagnostic & \Cref{sec:case39,sec:case118} \\
crossing probabilities $\Prob(\tau_{\rm coh}\le T)$
  & empirical; not certified & \Cref{tab:case118-deployed}, \Cref{sec:app-firstexit} \\
price of linearisability $\Pi$
  & defined, not computed & \Cref{sec:design} \\
\bottomrule
\end{tabular}
\end{table}

\subsection{Relation to Bound-Optimized Task Choice}
\label{sec:botc}

Anderson, Das and Tanaka \cite{anderson2026botc} recently proposed
Bound-Optimized Task Choice (BOTC) for problems that fail the path-integral
matching condition. A BOTC \emph{task} is obtained by PSD increases of the noise
covariance and/or of the control price until the modified problem is matched.
Both moves make the task harder, every valid task yields an \emph{upper} bound
on the physical cost-to-go, and BOTC optimizes over the task family to tighten
it. The approach is complementary to ours: \Cref{fig:geometry} shows the four
elementary moves relative to the physical pair $(Q,G)$. BOTC's two moves reach
the steepest ray $q=\lambda^+g$; completion and deflation reach the lowest ray
$q=\lamstar g$.

In the common-support, nonsingular case this has a spectral form. With\\
$A:=G^{-1/2}QG^{-1/2}$, the least deflation and the least completion both use
$\lamstar=\lambda_{\min}(A)$, since $\lamstar G\preceq Q$; the covariance-only
BOTC extreme must satisfy $Q\preceq\lambda^+G$, whose least value is
$\lambda^+=\lambda_{\max}(A)$. The two families touch the physical problem from
opposite spectral edges and, when both apply, give the envelope
\begin{equation}
 \Jp\ \le\ J^\star\ \le\ J_{\rm BOTC}.
 \label{eq:envelope}
\end{equation}
The support question is not a technicality. The BOTC formulation assumes that
the noise lies in the directly actuated subspace and that the control matrix
spans it. Our application violates this deliberately: load-bus noise has no
corresponding actuator, so $\lambda^+=\infty$ (the open markers in
\Cref{fig:geometry}). Completion resolves the mismatch by enlarging the actuator
support, deflation by restricting the stochastic support; a literal application
of BOTC to the all-bus-noise, generator-only-control problem would require an
extension of its task class. Once the support issue is resolved, the natural
programme is to optimize hard and easy linearly solvable surrogates jointly and
report the bracket \eqref{eq:envelope}, possibly reusing trajectories across task
choices by change of measure as BOTC does.

\subsection{Relation to HJB bounds and sum-of-squares}

Using sub- and supersolutions to bound value functions is broader than
path-integral control, and viscosity and stochastic Perron constructions give a
general foundation \cite{bayraktar2013perron}. Closer to the present setting,
Horowitz and Burdick combined the linear structure of linearly solvable control
with sum-of-squares relaxations to obtain guaranteed polynomial over- and
under-approximations of the value function \cite{horowitz2014semidefinite}, and
Leong, Horowitz and Burdick developed linearly solvable stochastic control
Lyapunov functions whose sum-of-squares programs yield viscosity sub- and
supersolutions, pointwise value bounds and explicit suboptimality guarantees
\cite{leong2016linearly}. Occupation-measure hierarchies give lower relaxations
for nonlinear optimal control \cite{lasserre2008nonlinear}, and polynomial HJB
subsolutions have been used to derive controller-performance bounds
\cite{jones2020polynomial}. The present contribution is the source of the
bounding function: instead of approximating the physical HJB equation, we change
the control--noise geometry until it becomes exactly linearly solvable and
evaluate the surrogate by Feynman--Kac. The residual in the physical equation is
then explicit, and in the network application it names actuator directions,
prices and candidate locations.

\subsection{What the application adds}

The swing network makes the abstract distinctions unusually transparent:
additive nodal noise and direct active-power control live in the same
coordinates, inertia cancels from the matching geometry, and the deflation
defect sits entirely in frequency curvature. Under homogeneous forcing the
completion behaves as intuition about underactuation suggests: generator
channels are already matched and the least completion fills the load buses.
Under heterogeneous forcing a single quiet channel sets the temperature, and a
system with ample total authority looks severely underactuated. Pricing the same
authority in proportion to the noise removes almost all of this. ``How much
control is available?'' is therefore an incomplete question without ``in which
metric is it priced, and is that metric aligned with the noise?''

\subsection{Limitations}

Three limitations are deliberate. First, the deflated value has a rigorous
positive terminal layer and positive curvature screens, but a domain-wide
enclosure is out of reach with generic derivative bounds, which are
conservative by many orders of magnitude (\Cref{sec:app-deflation}); certifying
$\Jmin$ will need a sharper structural argument. Second, the reported crossing
probabilities are empirical; a supermartingale certificate
(\Cref{sec:app-firstexit}) has not been constructed for this event. Third, the
reinforcement experiments do not compute the price of linearisability against
an unrestricted optimal investment: the exact-PIC curves are solvable reference
values and the sparse curves are costs of a fixed controller family. None of
these affects the completed lower bound or the deployed costs. The modelling
limitations of \Cref{sec:scope} apply throughout.

\section{Conclusion and Path Forward}
\label{sec:conclusion}

The question ``how suboptimal is my controller allowed to be?'' can be answered
without computing the nonlinear stochastic optimum. An HJB subsolution bounds the
optimal cost from below, simulation of the deployed policy bounds it from above,
and their difference certifies the permissible suboptimality. Completing the
control geometry until it matches the physical noise produces such a
subsolution automatically: the completed problem is linearly solvable, its value
is a Feynman--Kac free energy, and its residual in the physical HJB equation is
exactly the energy of the fictitious control. Noise deflation is the dual
construction; it can be tighter but is conditional on a curvature sign.

The same geometry yields design rules. Because a single matching temperature is
set by the worst-matched channel, the certificate---and, we would argue, the
engineering question---is governed less by how much control authority exists
than by whether it is priced consistently with the noise. On the IEEE 118-bus
system after a severe load loss, a simple generator controller is certified
within $2.8\%$ of optimal under homogeneous forcing; under heterogeneous forcing
the gap is $39\%$ with uniform prices and $1.2\%$ after the same total authority
is priced in proportion to the noise. Exact linear solvability is an expensive
construction target but a useful oracle: it supplies shadow prices that rank
sparse reinforcement.

Several directions follow.

\paragraph{Technical improvements} A two-sided linearly solvable envelope combining completion with BOTC, extended to support mismatch, would turn
\eqref{eq:envelope} into a practical bracketing method. Matrix- or direction-dependent matching in place of one scalar temperature could remove the weakest-channel effect altogether. A structural proof of the deflation curvature condition, and a joint performance and first-exit certificate, would close the remaining gaps in \Cref{tab:status}. On the application side, measured and correlated fluctuations, multiple contingencies and explicit actuator energy, ramp and saturation limits are the next step; large computational loads, whose abrupt reductions are now an operational concern \cite{nerc2026sra}, are a natural setting in which to use measured load covariance to certify existing emergency control, reprice fast authority where possible, and only then decide where new flexibility has the highest marginal value.

\paragraph{Economic interpretation}
The prices of \Cref{sec:design} are not yet dollars. The capital side of the planning problem \eqref{eq:planning} is monetary by construction, but the value side---the cost functional, the certificate gap and $\Pi$---is measured in the weighted units of \eqref{eq:cost}. Calibrating those weights to economic costs, as production-cost and reliability studies do \cite{kirschen2018fundamentals}, would turn the shadow price of \Cref{prop:shadow} into a marginal value of fast flexibility per megawatt, comparable with capacity and performance payments in regulation markets \cite{ferc2011order755}, and $\Pi$ into the monetary cost of insisting on linearly solvable designs. It would also connect the certificates to the literature on pricing uncertainty in power systems, from chance-constrained dispatch \cite{bienstock2014chance,dvorkin2020chance} to uncertainty-aware and variance-aware prices \cite{ye2017uncertainty,mieth2020risk}. A second connection is to proportional design. Droop proportional to machine rating is standard practice \cite{kundur1994power}, and proportionality of machine parameters is what makes global analyses of synchronization performance tractable \cite{paganini2020global}. \Cref{prop:noise-proportional} asks instead for proportionality to noise variance. The two coincide when fluctuations scale with rating and differ when they do not, as for variable renewable generation and large computational loads; there the certificate suggests revisiting practice. Because repricing changes certifiability more than performance (\Cref{rem:repricing}), this is a statement about which designs can be certified tightly, not about which perform best. Placement of fast actuation in networks has also been optimized with controllability and $\mathcal H_2$ metrics \cite{pasqualetti2014controllability,summers2016submodularity,poolla2017optimal,pagnier2019optimal,tyloo2019key}; comparing those placements with the shadow-value rankings of the completed problem is a natural next test.

\paragraph{Reducing fluctuations at the source}
Installing authority is not the only lever. Buffering a variable source---storage co-located with a wind farm, or ride-through and on-site storage inside a data centre---lowers a noise variance $q_i$ instead of raising an authority $g_i$: in \Cref{fig:geometry} it is a vertical rather than a horizontal move. It is the physical counterpart of noise deflation, in the same sense that reinforcement is the physical counterpart of completion, with one advantage: once a direction's noise is physically reduced to the matched level, the deleted diffusion $\Kdef$ loses that direction, and with it that part of the curvature condition of \Cref{thm:deflation}. The scalar geometry also gives a warning. Reducing the noise at an \emph{unactuated} source lowers the completion burden and the virtual authority needed there; reducing it at the \emph{worst-matched actuated} channel lowers $\lamstar$ and can make the scalar certificate looser even as the physical system improves, so noise reduction and repricing should be planned together. The sparse experiments already test a naive version of the idea, since ranking devices by local noise variance places them at the noisiest sources. Under heterogeneous forcing in the 118-bus system the shadow ranking does better (\Cref{sec:case118}), because what matters is not the local variance of a source but its projection on the value gradient, the integrand $(a^\top\nabla J^\star)^2$ of \eqref{eq:shadow}. Where noise is largest is not necessarily where it costs the most. Variance-aware market mechanisms \cite{mieth2020risk} attach costs to sources of variability; the certificate geometry offers a dynamic, network-aware basis for doing so.

\appendix
\section{Verifying the curvature condition for the deflated value}
\label{sec:app-deflation}

By \Cref{thm:deflation}, the deflated value is a lower bound exactly when
$\rho=\tr[\Kdef_{\omega\omega}\nabla^2_{\omega\omega}\Jmin]\ge0$ holds on the whole
domain \eqref{eq:rho}. This appendix records where that condition has been
verified in the swing application, and why generic derivative bounds cannot
extend the verification to the whole domain. It is included because the partial results are
informative and because the obstruction is instructive; none of it is used for
the certificates of \Cref{sec:numerics}.

\subsection{Terminal layer}

Because $\phi=c_T\mathcal E_+/n$ and $\mathcal E\ge0$ on the cohesive cell,
$\nabla^2_{\omega\omega}\phi=(c_T/n)M$ there, and
$\rho(\cdot,T)=(c_T/n)\tr[\Kdef_{\omega\omega}M]>0$. The first two time
derivatives of $\rho$ at $T$ follow by differentiating the HJB equation, so a
second-order terminal Taylor candidate can be checked analytically on a short
terminal layer. The graph structure sharpens the compact-domain bounds this
requires. With $q=\theta-\theta^\star$ and $L_{\min}=\Inc\operatorname{diag}(K_ec_e)\Inc^\top$,
$c_e$ the minimum cosine on the admissible edge interval, the potential-energy
budget $\bar V$ gives the cycle-consistent cut-space bound
\begin{equation}
 |b_e^\top q|\le\sqrt{2\bar V\,b_e^\top L_{\min}^\dagger b_e},
 \label{eq:cut-bound}
\end{equation}
which can be intersected with edgewise Bregman bounds, and the cohesive
potential obeys the mass-scaled force estimate
\[
 \|M^{-1/2}\nabla U_{\rm rel}(\theta)\|\le\omega_0\sqrt{2V_{\rm pot}(\theta)},
 \qquad \omega_0^2=\lambda_{\max}\bigl(M^{-1/2}L_0M^{-1/2}\bigr).
\]
Using the
coupled budget $V_{\rm pot}+K_{\rm kin}\le\bar E$ rather than independent worst
cases yields a verified positive terminal layer of about $1.43$\,ms
(homogeneous) and $1.13$\,ms (activity-scaled) for case39, and about
$2.9$\,ms and $1.6$\,ms for case118 (\Cref{fig:app-terminal}). Against a
$3$\,s horizon this is a proof of concept, not a certificate.

\subsection{Curvature screens}

On a gauge-fixed compact transient set
$\{\max_e|\delta_e|\le84^\circ,\ \mathcal E/n\le0.34,\ \|\omega\|_\infty\le6.5\}$,
$34$ representative case39 states---severe uncontrolled states, mildly
controlled H3 states, spectral perturbations, several times in the first
$2.5$\,s---give positive estimates of $\rho$ in every tested Hutchinson
direction, with minimum two-standard-error lower values about
$6.45\times10^{-3}$ (homogeneous) and $4.50\times10^{-3}$ (activity-scaled)
(\Cref{fig:app-screen}). For case118, $24$ directional-curvature traces over
H0--H3 and along the H2/H3 transient are all positive, the smallest about
$9.95\times10^{-3}$. This is numerical evidence on a finite set, not an
enclosure of the continuum.

\subsection{Flow variations in a mechanical metric}

Additive noise disappears from derivatives with respect to the initial state.
Along each realization the first, second and third variations $J,H,K^{(3)}$ of
the flow of $F(\theta,\omega)$ satisfy
\begin{align}
 \dot J&=DF\,J,\qquad
 \dot H[a,b]=DF\,H[a,b]+D^2F[Ja,Jb],\label{eq:var12}\\ \nonumber 
 \dot K^{(3)}[a,b,c]&=DF\,K^{(3)}[a,b,c]+D^2F[H[a,b],Jc]+D^2F[H[a,c],Jb]\\ &
 +D^2F[H[b,c],Ja]+D^3F[Ja,Jb,Jc],\nonumber
\end{align}
with, for $z_j=(\eta_j,v_j)$ and $\delta=\Inc^\top\theta$,
\begin{equation}
 DF=\begin{bmatrix}0&I\\-M^{-1}L(\theta)&-M^{-1}C\end{bmatrix},\quad
 D^2F[z_1,z_2]=\begin{bmatrix}0\\M^{-1}\Inc\,
 \bigl[K_e\sin\delta_e(\Inc^\top\eta_1)_e(\Inc^\top\eta_2)_e\bigr]_e\end{bmatrix},
 \label{eq:derivs}
\end{equation}
and $D^3F[z_1,z_2,z_3]$ equal to the same expression with $\sin\delta_e$ replaced
by $\cos\delta_e$ and the product taken over three factors. Only angle
components enter $D^2F$ and $D^3F$. On a cohesive cell centred at $\theta_c$, the
frozen linearization is dissipative in the mechanical metric
$\|(\eta,v)\|_c^2=\eta^\top L(\theta_c)\eta+v^\top Mv$ (after fixing the uniform
phase), so positive growth is driven only by the variation of $L(\theta)$ within
the cell. If $\mu_c,C_{2,c},C_{3,c}$ bound the first three local derivative
operators, the scalar majorants $\dot j=\mu_cj$, $\dot h=\mu_ch+C_{2,c}j^2$,
$\dot k=\mu_ck+3C_{2,c}hj+C_{3,c}j^3$ propagate the sensitivities while the
trajectory stays in the cell, with metric-equivalence constants at cell
transitions. For a representative case39 H3 transient over $2.5$\,s, a
$0.10^\circ$ cell gives a first-variation majorant near $20.5$, against about
$4.35\times10^{74}$ for a single $84^\circ$ sector bound (\Cref{fig:app-cellwise}).

\subsection{Why a domain-wide enclosure is out of reach}

A covering argument needs Lipschitz control of $\nabla^2\Jmin$, hence of the
third variation, on every cell of a finite cover of the $78$-dimensional (case39)
state--time set. An adaptive-cell prototype for case118 shows that the generic
third-derivative closure is unusable: even under optimistic local constants the
certified cell radii collapse to about $9\times10^{-12}$ degrees for homogeneous
H3 and $2\times10^{-14}$ degrees for activity-scaled H3. Finer cells will not fix
this. A certificate for $\Jmin$ will need a qualitatively sharper argument---a
semigroup or comparison principle for $\rho$, a monotonicity structure, or a
covariance-sensitive bound using the Gibbs representation of \Cref{box:physics}.

\begin{figure}[t]
  \centering
  \includegraphics[width=0.46\textwidth]{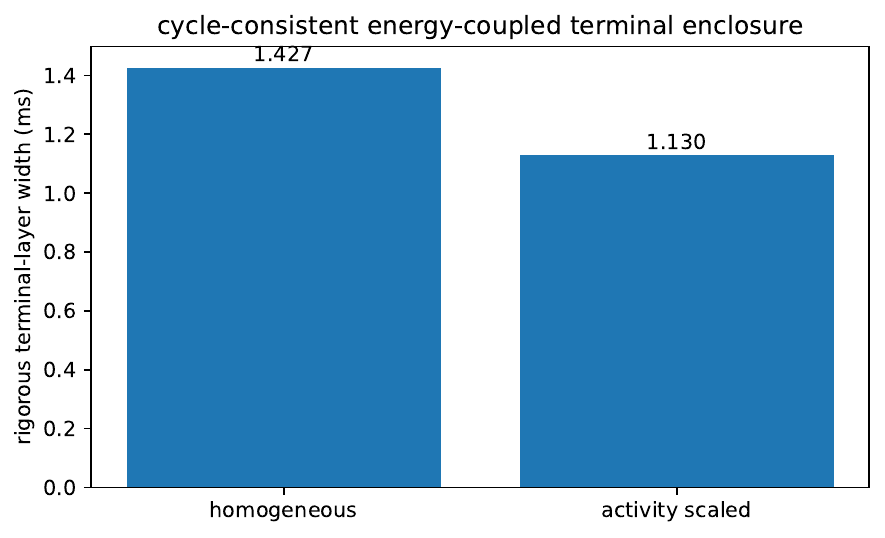}\hfill
  \includegraphics[width=0.50\textwidth]{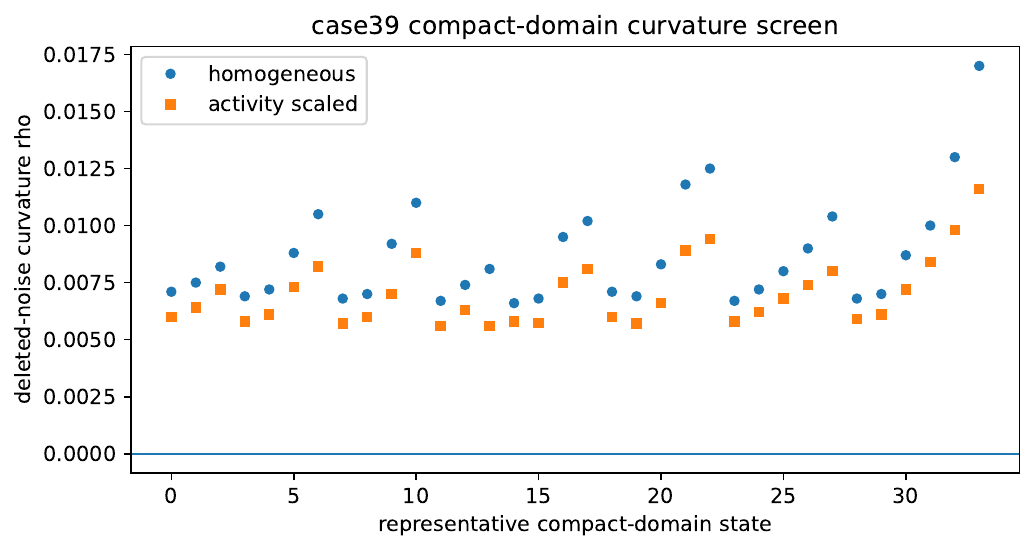}
  \caption{Left: width of the rigorous positive terminal curvature layer for the
  two case39 noise models, from the cycle-consistent energy-coupled enclosure.
  Right: estimated deleted-noise curvature $\rho$ on the representative case39
  compact-domain states; positive throughout, but a finite-set screen, not an
  enclosure.}
  \label{fig:app-terminal}
  \label{fig:app-screen}
\end{figure}

\begin{figure}[t]
  \centering
  \includegraphics[width=0.55\textwidth]{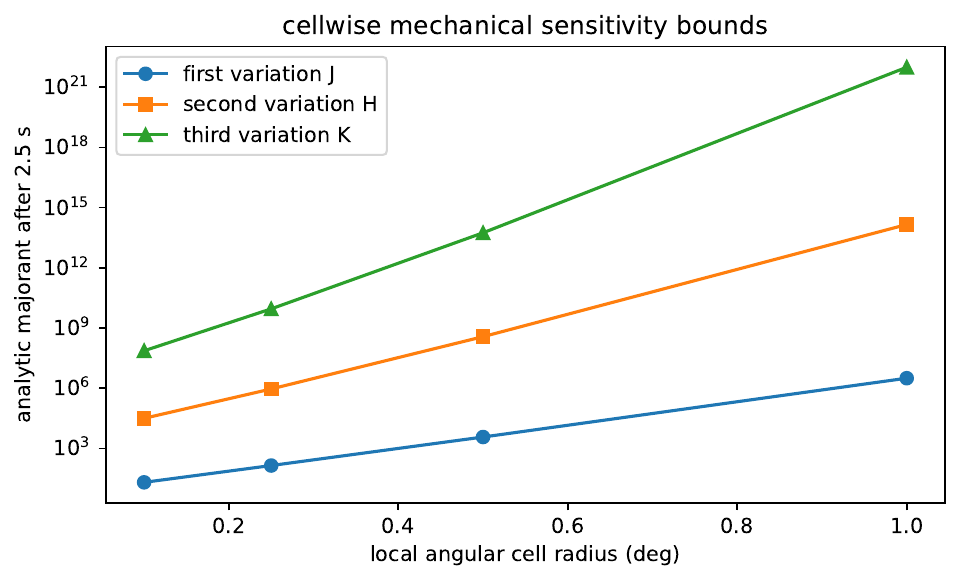}
  \caption{First-, second- and third-variation majorants after $2.5$\,s against
  the local angular cell radius, for a representative case39 H3 transient.
  Preserving the local mechanical metric is essential; a single global sector
  estimate destroys the dissipative structure.}
  \label{fig:app-cellwise}
\end{figure}

\section{First-exit probability as a separate certificate}
\label{sec:app-firstexit}

Let $\Omega_{\rm coh}$ be the cohesive domain and $\tau$ its first-exit time.
For a deployed control $u$, a nonnegative $Z\in C^{2,1}$ with
\begin{equation}
 \partial_tZ+(f+Bu)\cdot\nabla Z+\tr(D\nabla^2Z)\le0\ \text{ in }\Omega_{\rm coh},
 \qquad Z\ge1\ \text{ on }\partial\Omega_{\rm coh},
 \label{eq:supermartingale}
\end{equation}
makes $Z(X_{s\wedge\tau},s\wedge\tau)$ a supermartingale, and optional stopping
gives $\Prob^u_{x,t}\{\tau\le T\}\le Z(x,t)$. This is the stopped-domain
counterpart of \Cref{sec:certificates} with the inequality reversed. It is the
appropriate framework for a certified synchrony-loss probability and should not
be confused with the smooth performance value used in the main text; we have
not constructed such a $Z$ for the event studied here, and the crossing
probabilities of \Cref{sec:numerics} are empirical.

\section{Numerical protocol and reproducibility}
\label{sec:app-repro}

\paragraph{Code and data}
All code, frozen inputs, cached result tables and notebooks are available in a
public repository \url{https://github.com/mchertkov/SuboptimalController}.
The case39 study and the case118 study are separate packages. Each regenerates
every figure and table of the paper from cached results in seconds, and
documents the heavy recomputation (Monte Carlo campaigns of several hours) in
execution order. A third notebook checks the theorems of
\Cref{sec:relaxations,sec:computing} on a linear--quadratic example with known
closed-form value.

\paragraph{Seeds and sampling}
Tuning and evaluation seeds are frozen and disjoint; paired policy comparisons
use common random numbers. Case118 deployed costs use $192$ evaluation paths per
policy and noise level, except where an independent audit is stated
($3072$ paths). Completed values are importance-sampled with guides based on the
deployed or completed closed loop, and we report replicate standard errors and
effective sample sizes.

\paragraph{Time step}
Load buses carry small synthetic inertia, which makes the reduced model stiff.
In case39, $\Delta t\ge10^{-2}$\,s produces spurious angle excursions and a
biased cost (Figure~\ref{fig:sm-convergence} in \cref{app:figures}); production runs use
$\Delta t\le5\times10^{-3}$\,s, with finer steps for the most sensitive checks,
and case118 uses $\Delta t=2.5\times10^{-3}$\,s.

\paragraph{Test systems}
Network data are the standard \textsc{matpower} case files
\cite{zimmerman2011matpower}, reduced to the lossless fixed-voltage swing model
of \Cref{sec:swing}; the deterministic post-event equilibria reproduce those of
\cite{chertkov2026cyclespace}.

\section{Additional figures}
\label{app:figures}
These twelve figures support Section~\ref{sec:numerics}, where
each is cited. Each is followed by two short paragraphs, labelled by the figure number, stating
what it shows and the conclusion drawn from it. Notation follows the main text: $J$ is the cost of a
deployed controller, $\Jp$ the completed value (an exact lower bound,
Theorem~\ref{thm:completion}), $\Jmin$ the deflated value (a lower bound only
under the curvature condition of Theorem~\ref{thm:deflation}), $\lamstar$ the
matching temperature and $\Ip$ the completion burden. All figures are generated from cached results by the scripts in the accompanying repository (\cref{sec:app-repro}).
Figures~\ref{fig:sm-noise}--\ref{fig:sm-sparse39} concern the 39-bus system,
Figures~\ref{fig:sm-radial} and~\ref{fig:sm-convergence} the formulation and the
numerics, and Figure~\ref{fig:sm-deployed118} the 118-bus system.

\subsection{The 39-bus system: deployed controllers}
\label{sec:sm-39-controllers}

\begin{figure}[htbp]
  \centering
  \includegraphics[width=0.85\textwidth]{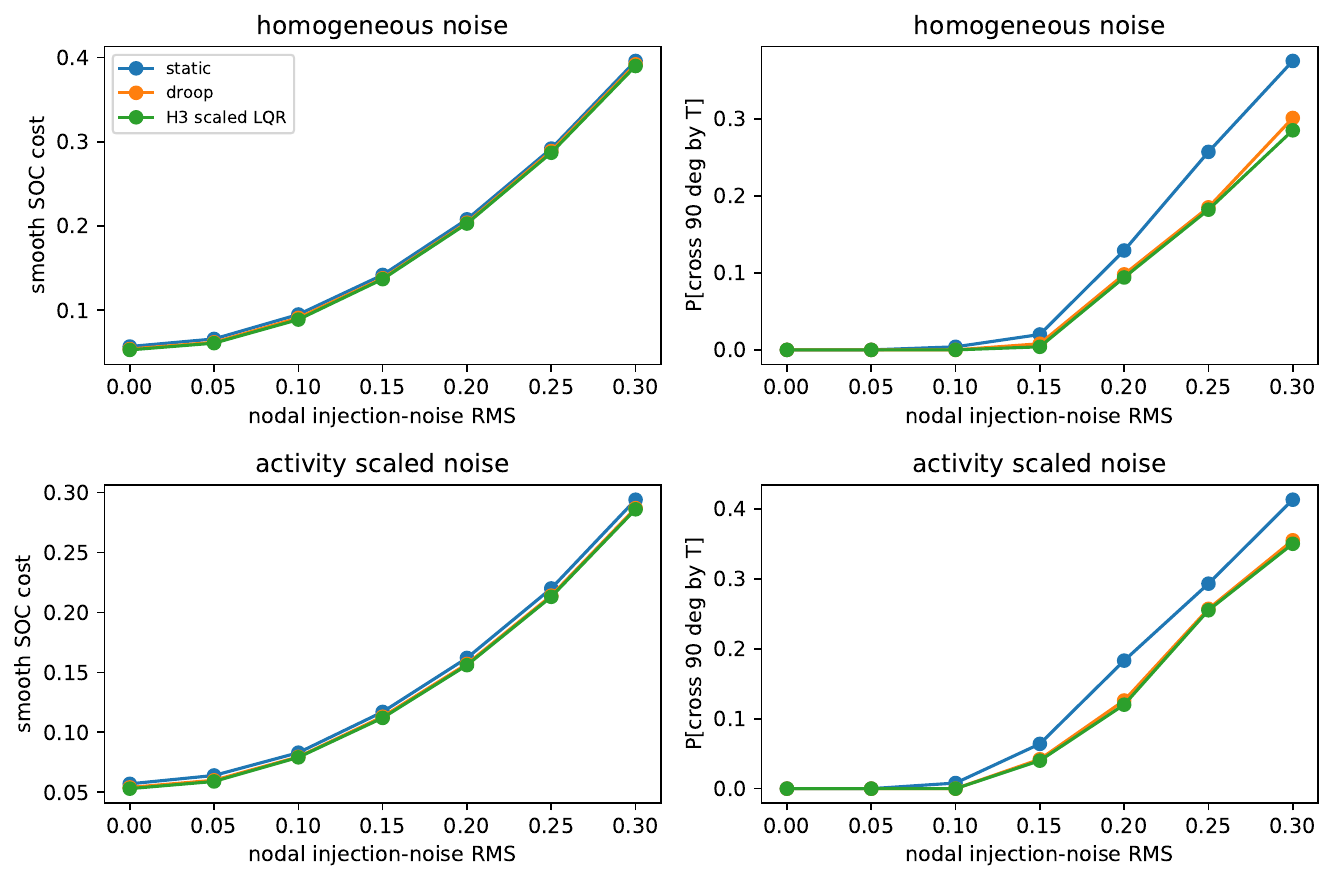}
  \caption{Case39 noise-amplitude sweeps for homogeneous (top) and activity-scaled (bottom) nodal noise. Left: smooth cost $J$. Right: empirical probability of crossing $90^\circ$ within the horizon $T=3$\,s.}
  \label{fig:sm-noise}
\end{figure}

\paragraph{Figure~\ref{fig:sm-noise}: what is shown}
The three deployed policies---static balancing, generator droop and H3 scaled
LQR---are evaluated on common random paths as the noise amplitude
$\sigma_{\rm rms}$ grows from $0$ to $0.3$. The smooth cost rises roughly
sevenfold for homogeneous noise and fivefold for activity-scaled noise, and the
three policies are almost indistinguishable on this scale. The crossing
probability is negligible up to $\sigma_{\rm rms}\approx0.10$ and then rises
steeply: under homogeneous noise it reaches $0.129$, $0.098$ and $0.094$ at
$\sigma_{\rm rms}=0.2$ and $0.375$, $0.301$ and $0.285$ at $0.3$; under
activity-scaled noise the rise starts earlier.

\paragraph{Figure~\ref{fig:sm-noise}: what we conclude}
The weak controllers barely change the smooth objective but remove about a
quarter of the boundary crossings at high noise, so performance and safety must
be reported separately. At equal root-mean-square amplitude the two spatial
profiles give different safety statistics: where the noise sits matters, not
only how large it is. This is why the main text keeps both noise models
throughout.

\begin{figure}[htbp]
  \centering
  \includegraphics[width=0.6\textwidth]{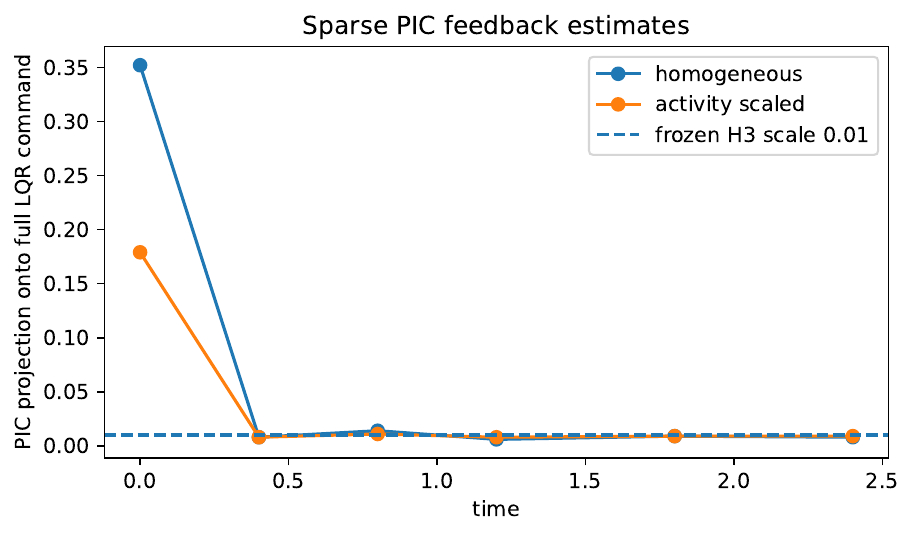}
  \caption{Case39 projection of sparse path-integral feedback estimates onto the full H3 LQR command, as a function of time. The dashed line is the H3 LQR scale $0.01$ selected by the upper-cost scan.}
  \label{fig:sm-projection}
\end{figure}

\paragraph{Figure~\ref{fig:sm-projection}: what is shown}
The path-integral estimate of the optimal feedback is projected onto the
direction of the H3 LQR command at several times along the horizon. At $t=0$ the
projection is large ($0.35$ homogeneous, $0.18$ activity-scaled); from
$t\approx0.4$\,s on it settles close to $0.01$ for both noise models.

\paragraph{Figure~\ref{fig:sm-projection}: what we conclude}
Away from the estimator singularity at $t=0$, where the feedback is estimated
from paths that have not yet separated, an independent path-integral
calculation supports the same feedback scale as the direct cost scan. The
deployed H3 controller is therefore not an arbitrary choice: it sits where the
path-integral feedback points.

\subsection{The 39-bus system: the two relaxations across the hierarchy}
\label{sec:sm-39-hierarchy}

\begin{figure}[htbp]
  \centering
  \includegraphics[width=0.85\textwidth]{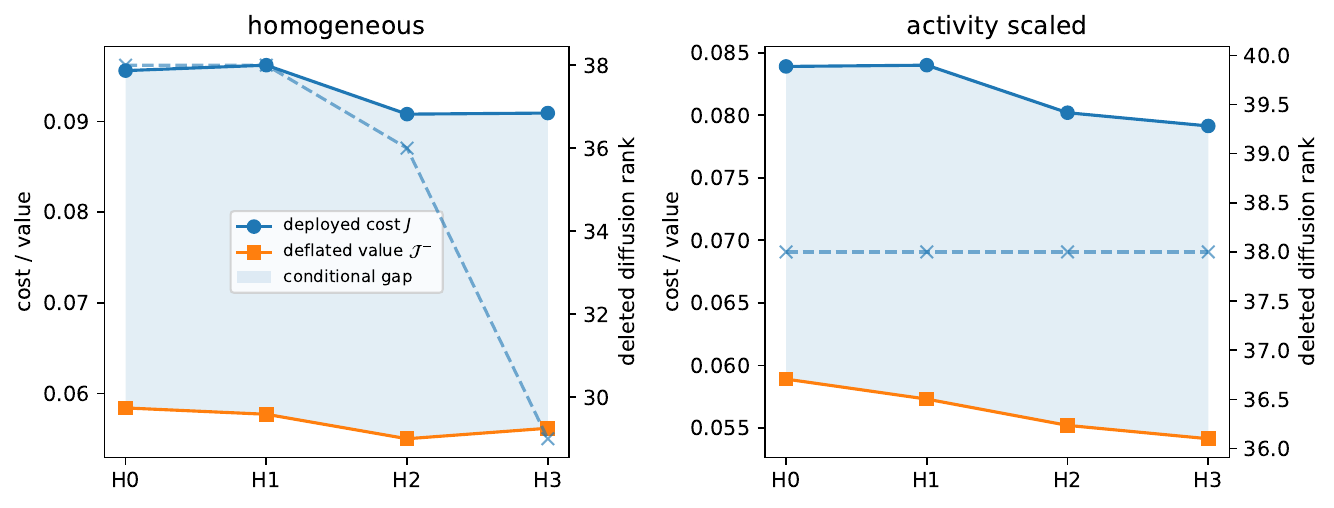}
  \caption{Case39 H0--H3 comparison at $\sigma_{\rm rms}=0.1$: deployed cost $J$ (upper curve) and deflated value $\Jmin$ (lower curve), with the rank of the deleted stochastic subspace on the secondary axis (dashed). The shading is the gap $J-\Jmin$, which is a certified gap only under the curvature condition of Theorem~\ref{thm:deflation}.}
  \label{fig:sm-h0h3}
\end{figure}

\paragraph{Figure~\ref{fig:sm-h0h3}: what is shown}
For homogeneous noise the deployed cost falls from about $0.096$ (H0, H1) to
$0.091$ (H2, H3) while $\Jmin$ stays near $0.055$--$0.058$, and the rank of the
deleted subspace falls from $38$ to $36$ and $29$ as actuator directions are
added. For activity-scaled noise the deployed cost falls from $0.084$ to $0.079$,
$\Jmin$ from $0.059$ to $0.054$, and the deleted rank stays at $38$ for every
architecture.

\paragraph{Figure~\ref{fig:sm-h0h3}: what we conclude}
Under homogeneous noise, adding actuators removes stochastic directions from the
deleted subspace one for one, as intuition about underactuation suggests. Under
heterogeneous noise deflation deletes---at least partly---$38$ of the $39$
stochastic directions whatever the architecture: one channel sets $\lamstar$,
and every other channel keeps an unmatched surplus that the deflation must
discard. Together with the fact that its lower-bound property is conditional,
this is why the main text bases its certificates on completion rather than
deflation.

\begin{figure}[htbp]
  \centering
  \includegraphics[width=0.85\textwidth]{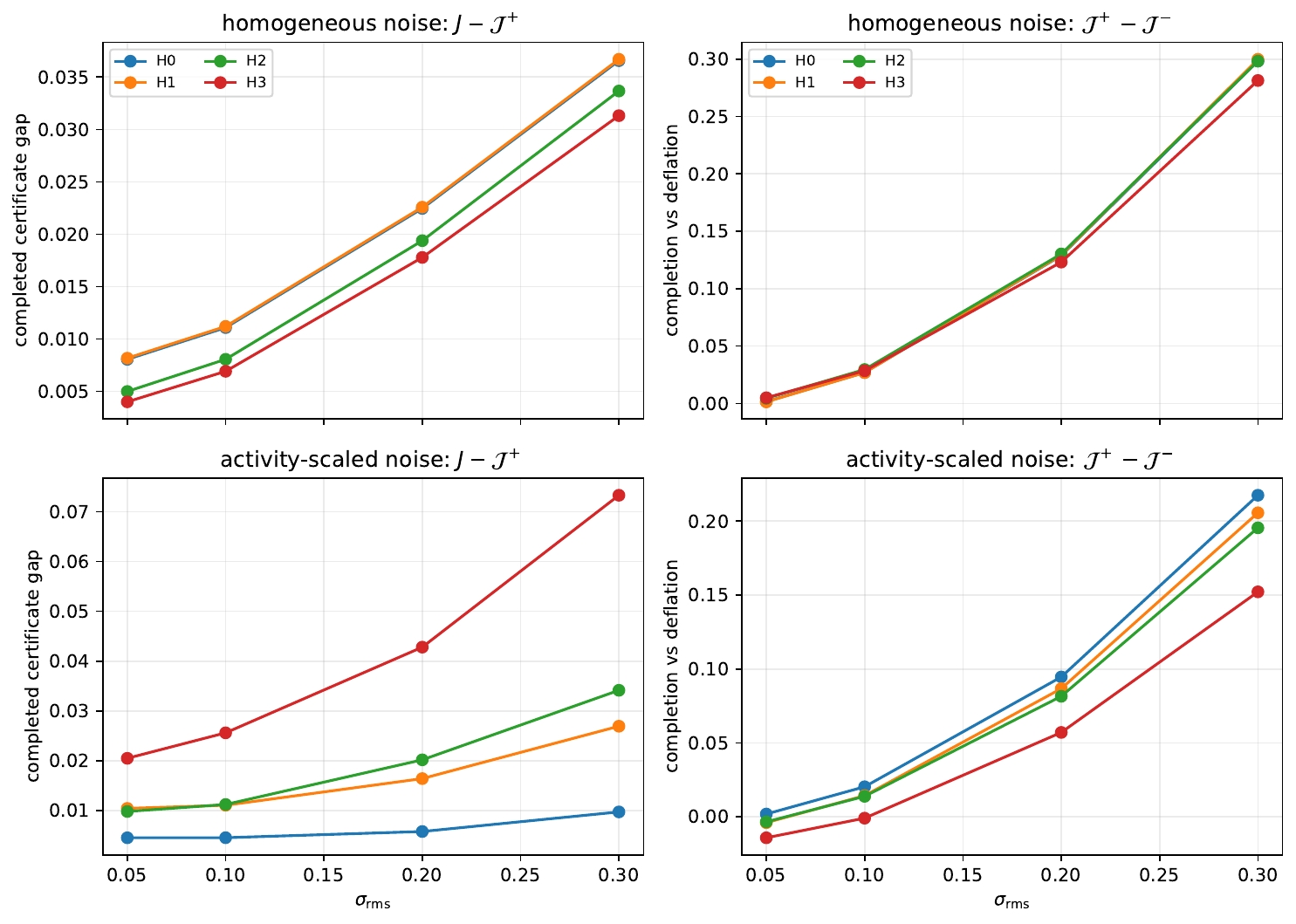}
  \caption{Case39 completion across the noise amplitude, for H0--H3. Left: completed certificate gap $J-\Jp$. Right: difference $\Jp-\Jmin$ between the two relaxations. Top: homogeneous noise; bottom: activity-scaled noise.}
  \label{fig:sm-inflation-noise}
\end{figure}

\paragraph{Figure~\ref{fig:sm-inflation-noise}: what is shown}
The completion geometry---$G_+=Q/\lamstar$ and the diagnostics $\Ip,\Dp$---is
invariant under a common rescaling $\Gamma\mapsto s\Gamma$, but the values are
not. Under homogeneous noise the certified gap of H3 grows from $0.0040$ to
$0.0069$, $0.0178$ and $0.0313$ at $\sigma_{\rm rms}=0.05,0.1,0.2,0.3$, with the
architecture ordering preserved (H0 and H1 loosest, H3 tightest). Under
activity-scaled noise H3 becomes the \emph{loosest} architecture, with gaps
$0.0205$, $0.0256$, $0.0428$ and $0.0733$, while H0 is the tightest. The
difference $\Jp-\Jmin$ grows with the noise in both cases; at the weakest
heterogeneous noise it is negative for H1--H3, most strongly for H3.

\paragraph{Figure~\ref{fig:sm-inflation-noise}: what we conclude}
Stronger forcing makes every certificate looser, but the ranking of
architectures is decided by the geometry, not by the amplitude. Under
heterogeneous noise the weakest-channel effect makes the architecture with the
most physical control the one with the most optimistic scalar completion. The
sign change of $\Jp-\Jmin$ is a concrete instance of Remark~\ref{rem:not-ordered}: neither relaxation dominates the other in general.

\subsection{The 39-bus system: planning}
\label{sec:sm-39-planning}

\begin{figure}[htbp]
  \centering
  \includegraphics[width=0.85\textwidth]{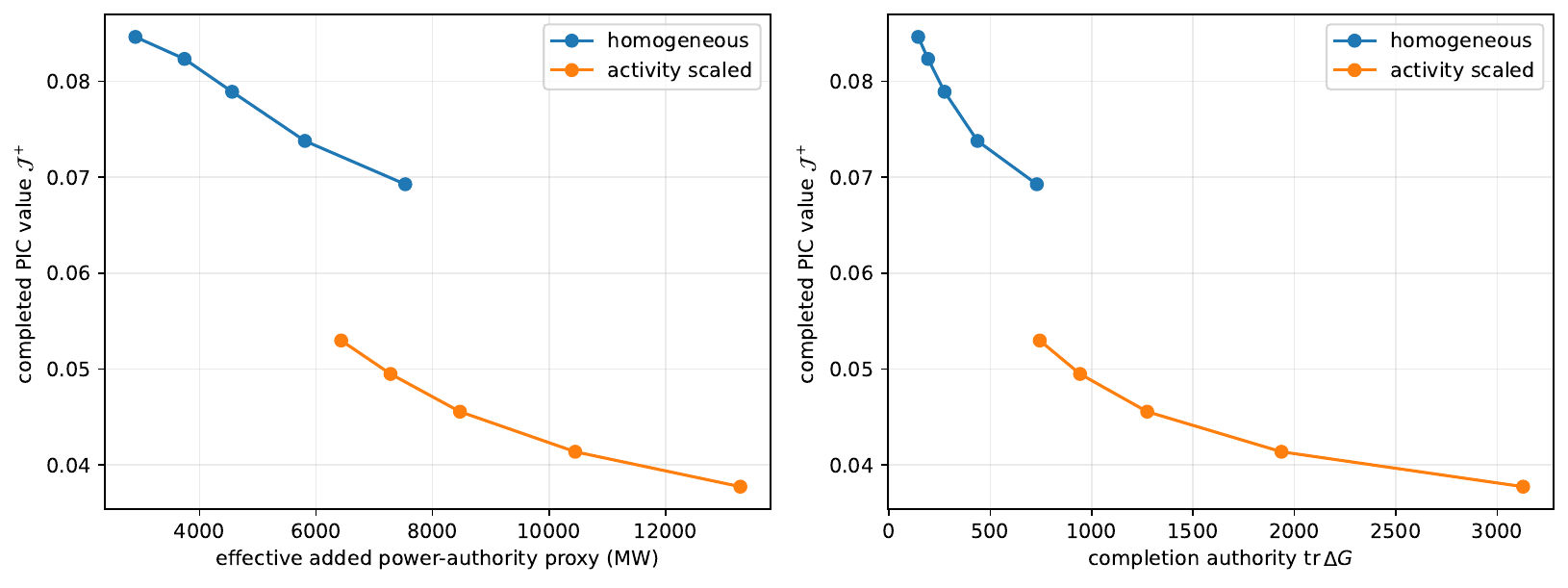}
  \caption{Case39 linearly solvable reinforcement frontier from H3: completed value $\Jp_\lambda$ against the effective power-authority proxy (left) and against the completion trace (right), as $\lambda$ decreases below $\lamstar$.}
  \label{fig:sm-frontier39}
\end{figure}

\paragraph{Figure~\ref{fig:sm-frontier39}: what is shown}
Each curve starts at the least completion $\lambda=\lamstar$ and moves to
$\lambda/\lamstar=0.8,0.6,0.4,0.25$. Under homogeneous noise $\Jp$ falls from
$0.0846$ to $0.0693$ while the authority proxy grows from about $2.9$ to
$7.5$\,GW and the trace from $145$ to about $730$. The activity-scaled curve
starts lower, near $0.053$, needs roughly twice the proxy authority at every
point, and falls to about $0.038$.

\paragraph{Figure~\ref{fig:sm-frontier39}: what we conclude}
The investment--performance frontier of equation~\eqref{eq:frontier} is monotone, as Proposition~\ref{prop:least-inflation} guarantees, with
diminishing returns. Under heterogeneous noise the same operational gain costs
much more authority, because the small matching temperature inflates every
direction. Exact completion is useful as a reference frontier; as a
construction target it is expensive.

\begin{figure}[htbp]
  \centering
  \includegraphics[width=0.8\textwidth]{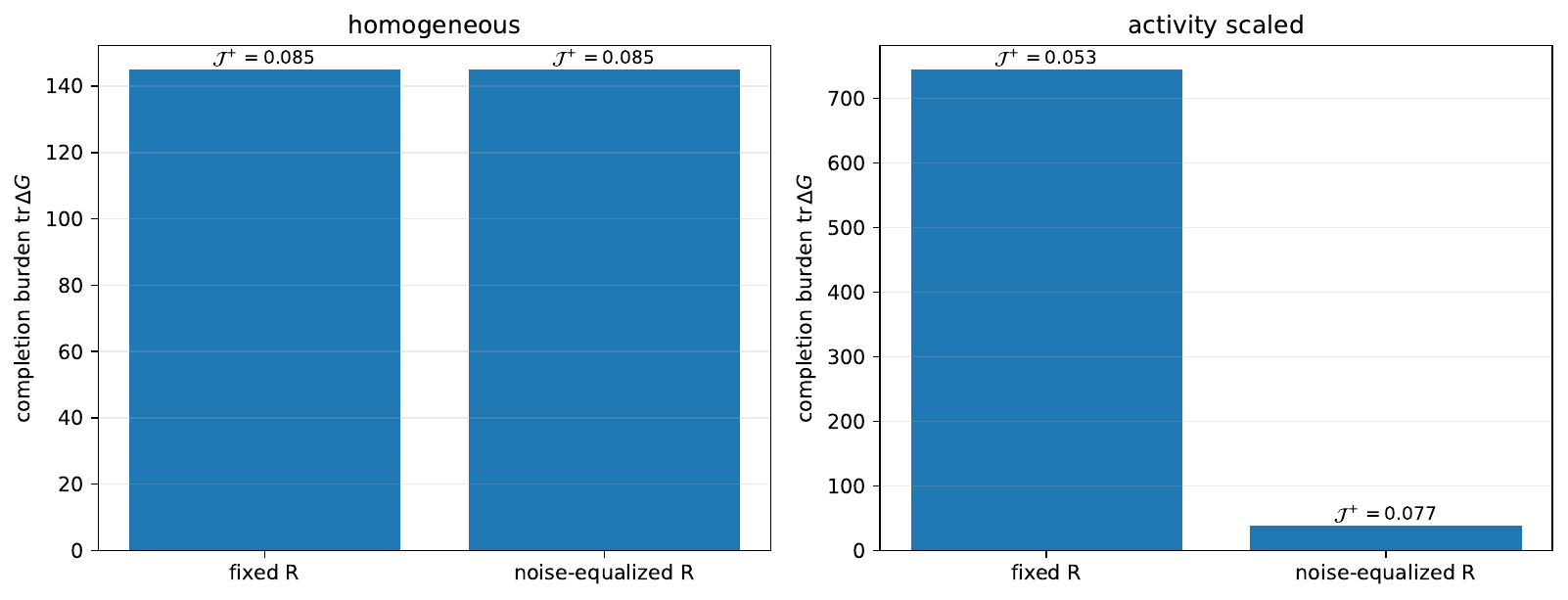}
  \caption{Case39 pricing of the existing generator authority at fixed total authority $\sum_gg_g=50$: completion burden $\tr\Delta G$ (bars) and completed value $\Jp$ (annotations), with the original uniform metric and with the noise-equalized metric of Proposition~\ref{prop:noise-proportional}.}
  \label{fig:sm-metric}
\end{figure}

\paragraph{Figure~\ref{fig:sm-metric}: what is shown}
Under homogeneous noise the two metrics coincide: the burden is $145$ and
$\Jp=0.085$ in both. Under activity-scaled noise the uniform metric has burden
about $744$ and $\Jp=0.053$, and the noise-equalized metric has burden $38$ and
$\Jp=0.077$, without any added authority.

\paragraph{Figure~\ref{fig:sm-metric}: what we conclude}
With the deployed cost near $0.080$ in both cases, repricing moves the certified
gap from about a third of the cost to a few per cent. Under homogeneous noise
there is nothing to gain, because uniform prices are already noise-proportional.
As Remark~\ref{rem:repricing} explains, repricing changes the
control problem, so this is a design statement: with noise-proportional prices,
the simple controller is certified tightly.

\begin{figure}[htbp]
  \centering
  \includegraphics[width=0.8\textwidth]{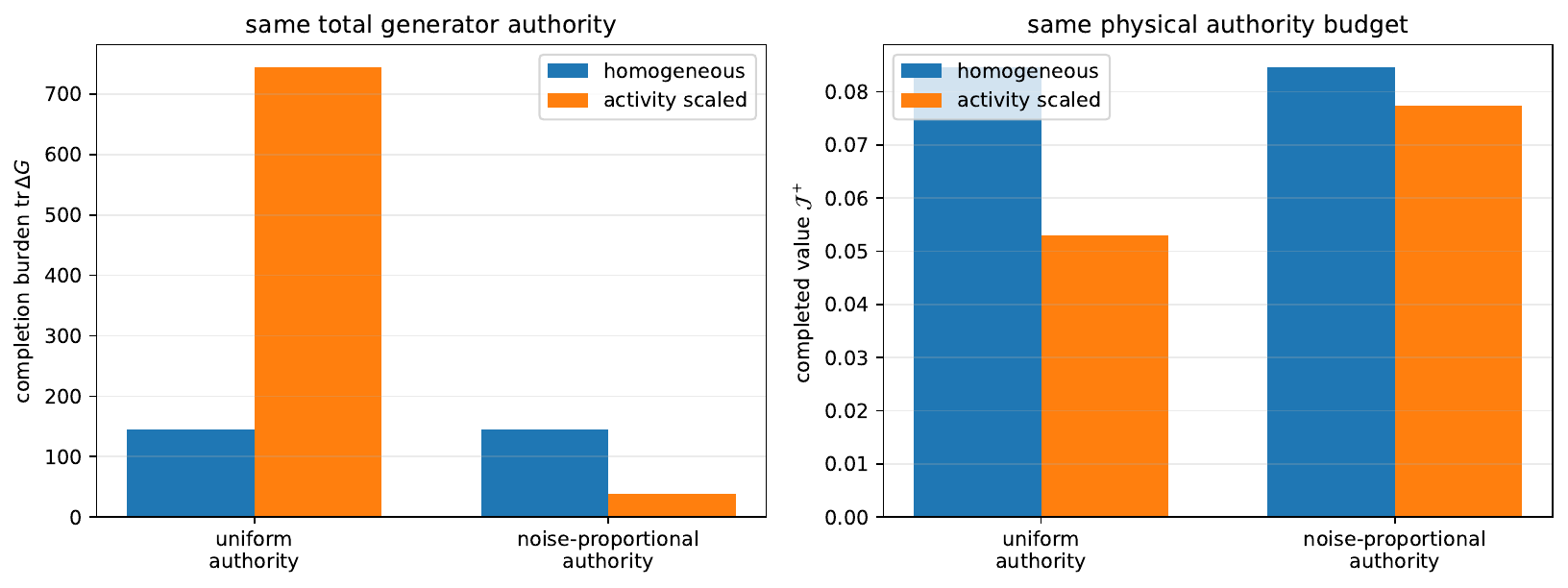}
  \caption{Case39 generator authority allocations at the same total authority, uniform and noise-proportional, for both noise shapes. Left: completion burden. Right: completed value $\Jp$.}
  \label{fig:sm-allocation}
\end{figure}

\paragraph{Figure~\ref{fig:sm-allocation}: what is shown}
The same two allocations as in Figure~\ref{fig:sm-metric}, arranged by
allocation rather than by noise shape so that the two scenarios can be compared
directly. The noise-proportional allocation leaves the homogeneous case
unchanged (burden $145$, $\Jp\approx0.085$) and changes the activity-scaled case
from burden about $745$ and $\Jp\approx0.053$ to burden $38$ and
$\Jp\approx0.077$.

\paragraph{Figure~\ref{fig:sm-allocation}: what we conclude}
The allocation of Proposition~\ref{prop:noise-proportional} is never worse than
uniform and is decisive when the noise is heterogeneous. Because it is tuned to
one noise shape, it raises the question answered by the next figure: what to do
when the noise shape is not known in advance.

\begin{figure}[htbp]
  \centering
  \includegraphics[width=0.85\textwidth]{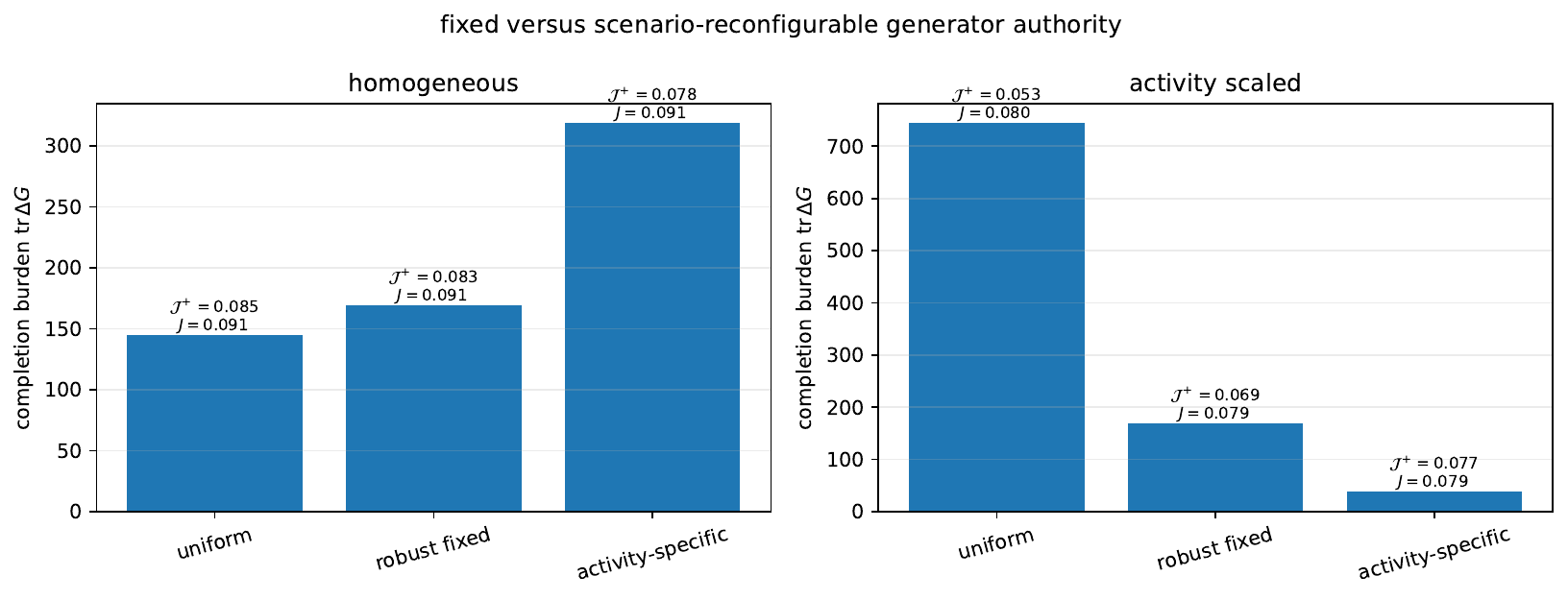}
  \caption{Case39 scenario flexibility: completion burden (bars) and the values $\Jp$ and $J$ (annotations) for uniform, robust fixed and activity-specific allocations of the same total authority, under homogeneous (left) and activity-scaled (right) noise.}
  \label{fig:sm-scenario}
\end{figure}

\paragraph{Figure~\ref{fig:sm-scenario}: what is shown}
Under homogeneous noise the burden is $145$ (uniform), $169$ (robust) and about
$320$ (activity-specific), with $\Jp=0.085$, $0.083$ and $0.078$. Under
activity-scaled noise it is about $745$, $169$ and $38$, with $\Jp=0.053$, $0.069$
and $0.077$. Within each scenario the deployed cost is nearly the same across
allocations ($0.091$, and $0.079$--$0.080$).

\paragraph{Figure~\ref{fig:sm-scenario}: what we conclude}
Each scenario-specific allocation is best for its own noise shape and noticeably
worse for the other, as Proposition~\ref{prop:multiscenario}
predicts: one fixed architecture cannot be exactly matched to two covariance
shapes. The robust allocation of Proposition~\ref{prop:robust} equalizes the
burden across the two scenarios at $169$ and is never badly mismatched. If prices
can be reset between forecast scenarios, the scenario-specific allocation gives
the tightest certificate; if they cannot, the robust one is the natural
compromise.

\begin{figure}[htbp]
  \centering
  \includegraphics[width=0.85\textwidth]{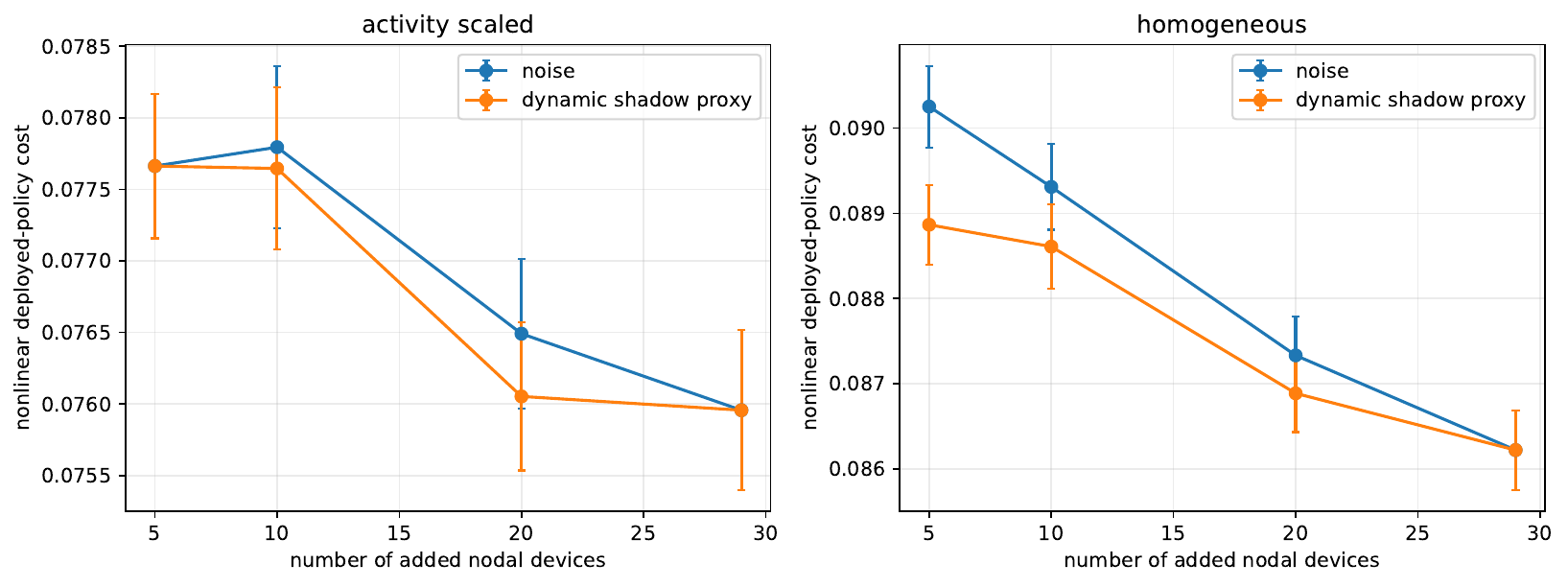}
  \caption{Case39 sparse load-side reinforcement: deployed nonlinear cost against the number of added fast devices at load buses, placed by local noise variance and by a dynamics-aware shadow proxy, for activity-scaled (left) and homogeneous (right) noise. Error bars are Monte Carlo standard errors.}
  \label{fig:sm-sparse39}
\end{figure}

\paragraph{Figure~\ref{fig:sm-sparse39}: what is shown}
Under homogeneous noise, five devices placed by the shadow proxy lower the
deployed cost to $0.08887\pm0.00047$, against $0.09025\pm0.00048$ for five placed
by noise and about $0.091$ with generators only; the advantage shrinks as devices
are added and vanishes at full completion ($29$ devices, $0.0862$). Under
activity-scaled noise both rankings lower the cost from about $0.0777$ to
$0.0760$ and stay within about one standard error of each other at every
budget.

\paragraph{Figure~\ref{fig:sm-sparse39}: what we conclude}
Shadow prices from the completed problem (Proposition~\ref{prop:shadow}) can
direct a small device budget to where it matters, and they do so here under
homogeneous noise. The pattern is system dependent: in the 118-bus system
(Section~\ref{sec:case118}) the shadow ranking is clearly better under
heterogeneous noise, while under homogeneous noise, where the noise ranking
reduces to an arbitrary ordering, that arbitrary ordering does slightly better
at intermediate budgets. Dynamics-aware placement is therefore a useful
heuristic informed by the certificate geometry, not a guarantee.

\subsection{Formulation and numerics}
\label{sec:sm-numerics}

\begin{figure}[htbp]
  \centering
  \includegraphics[width=0.85\textwidth]{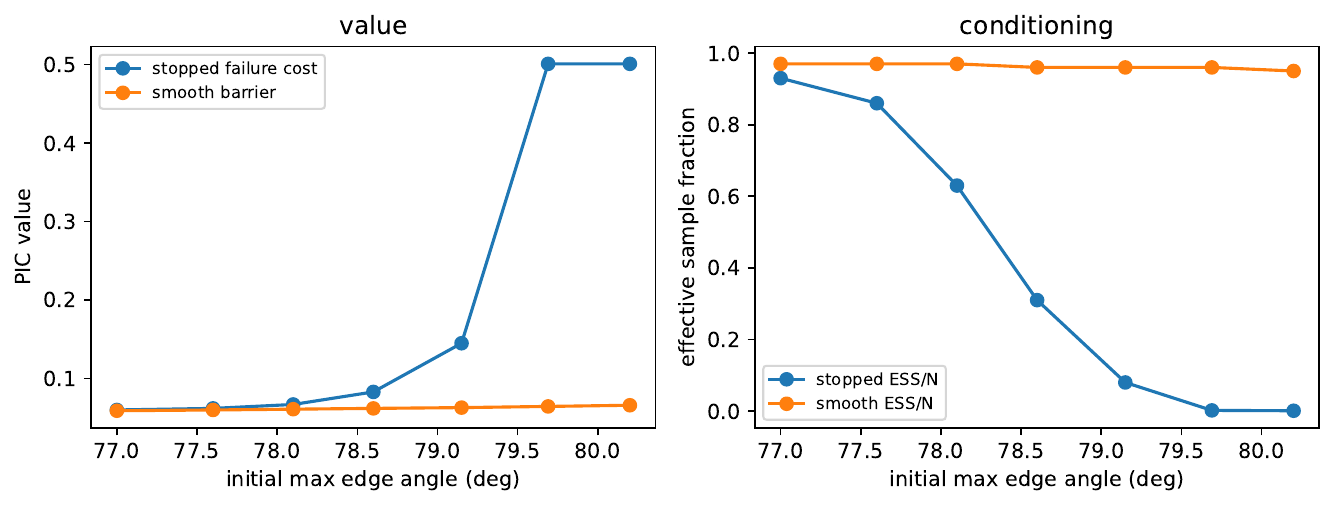}
  \caption{Stopped failure-cost and smooth-barrier Feynman--Kac formulations on a radial family of case39 initial states approaching the cohesion boundary. Left: path-integral value. Right: effective sample fraction $\ESS/N$.}
  \label{fig:sm-radial}
\end{figure}

\paragraph{Figure~\ref{fig:sm-radial}: what is shown}
Initial states are placed along a ray whose maximum line angle rises from
$77^\circ$ to $80^\circ$. With an absorbing failure cost the path-integral value
rises steeply and saturates beyond about $79.5^\circ$, and the effective sample
fraction collapses from $0.93$ to essentially zero. With the smooth barrier of
equation~\eqref{eq:power-cost} the value changes only slightly
and the effective sample fraction stays near $0.96$ throughout.

\paragraph{Figure~\ref{fig:sm-radial}: what we conclude}
An absorbing failure cost turns the Feynman--Kac estimate into a rare-survival
problem within a few degrees of the boundary, where the value is dominated by
the few paths that happen to survive. The smooth objective avoids this and keeps
the estimator usable in exactly the stressed regime of interest. This is the
reason for separating smooth performance, which is certified, from synchrony
safety, which would need the separate construction of
Appendix~\ref{sec:app-firstexit}.

\begin{figure}[htbp]
  \centering
  \includegraphics[width=0.8\textwidth]{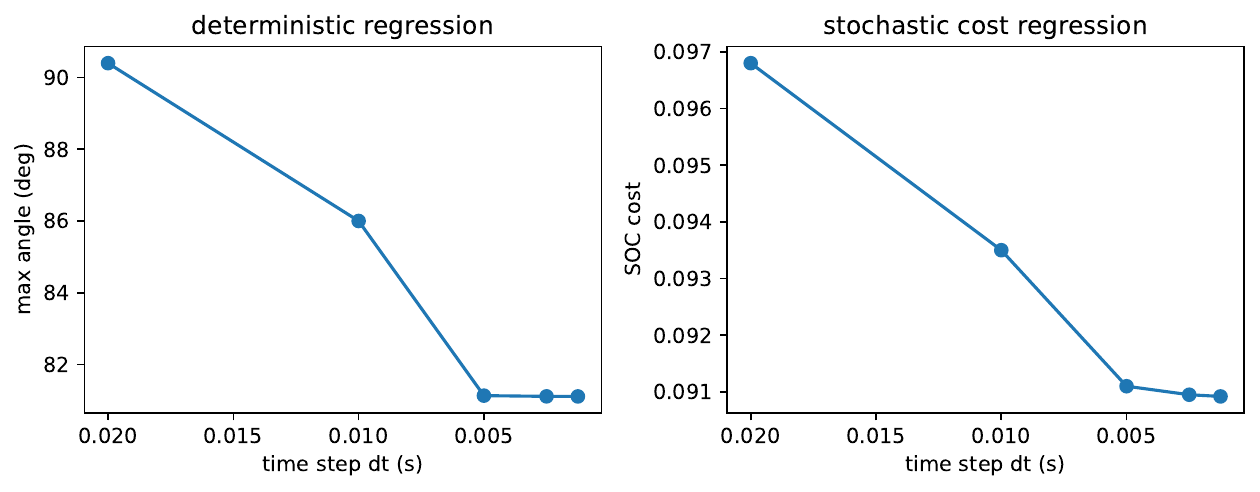}
  \caption{Case39 time-step regression. Left: deterministic maximum line angle. Right: nominal stochastic cost. The time step decreases to the right.}
  \label{fig:sm-convergence}
\end{figure}

\paragraph{Figure~\ref{fig:sm-convergence}: what is shown}
At $\Delta t=0.02$\,s the deterministic transient reaches about $90.4^\circ$ and
the stochastic cost is about $0.0968$; at $0.01$\,s these are $86.0^\circ$ and
$0.0935$; for $\Delta t\le5\times10^{-3}$\,s they settle at $81.1^\circ$ and about
$0.0909$--$0.0911$.

\paragraph{Figure~\ref{fig:sm-convergence}: what we conclude}
The small synthetic inertia of the load buses makes the reduced model stiff.
Coarse steps are not merely inaccurate but qualitatively wrong: at
$\Delta t=0.02$\,s they report a crossing of the cohesion boundary that does not
occur, and they overstate the cost by about $6\%$. The production step
$\Delta t\le5\times10^{-3}$\,s is converged to about $0.2\%$ in the cost.

\subsection{The 118-bus system: deployed controllers}
\label{sec:sm-118}

\begin{figure}[htbp]
  \centering
  \includegraphics[width=0.85\textwidth]{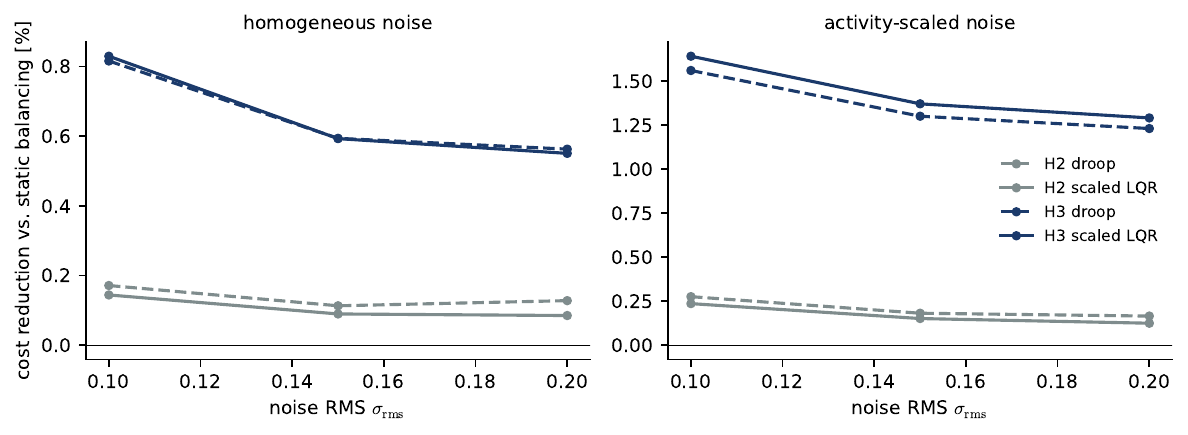}
  \caption{Case118 smooth-cost reduction of droop and scaled-LQR controllers relative to static balancing, for the H2 and H3 architectures, under homogeneous (left) and activity-scaled (right) noise, on common random paths.}
  \label{fig:sm-deployed118}
\end{figure}

\paragraph{Figure~\ref{fig:sm-deployed118}: what is shown}
Under homogeneous noise H3 lowers the cost relative to static balancing by
$0.83\%$, $0.59\%$ and $0.55\%$ at $\sigma_{\rm rms}=0.10,0.15,0.20$ with scaled
LQR, and almost identically with droop; H2 achieves only $0.09$--$0.17\%$. Under
activity-scaled noise H3 achieves $1.64\%$, $1.37\%$ and $1.29\%$ with scaled LQR
and slightly less with droop, and H2 again only $0.12$--$0.28\%$.

\paragraph{Figure~\ref{fig:sm-deployed118}: what we conclude}
In this deliberately weak-feedback regime, actuator rank matters far more than
the control law: the difference between H2 and H3 dwarfs the difference between
droop and scaled LQR. The improvements are also small in absolute terms, which
is precisely why a certificate is needed. It is the certified gap of
Table~\ref{tab:headline}, not the improvement over static
balancing, that tells an operator whether a better controller is worth
designing.

\section*{Acknowledgments}
The author thanks Bert Kappen for discussions in August 2026 that stimulated the
development and testing of the theory.

\paragraph{Disclaimer on the use of language models.}
A large language model was used to proof-read the manuscript and to help design,
implement and check the numerical experiments. All results were reviewed by the
author, who is solely responsible for the content, the claims and any errors.

\bibliographystyle{plainnat}
\bibliography{refs}
\end{document}